\documentclass[xcolor=dvipsnames,12pt,reqno,a4paper]{amsart}
\usepackage{amsmath,amsfonts,amsthm,amssymb}
\newcommand\version{December 9, 2014}
\usepackage{amsxtra,mathrsfs,color}

\setlength{\voffset}{-.7truein}
\setlength{\textheight}{8.8truein}
\setlength{\textwidth}{6.05truein}
\setlength{\hoffset}{-.7truein}




\newtheorem{theorem}{Theorem}[section]
\newtheorem{proposition}[theorem]{Proposition}
\newtheorem{lemma}[theorem]{Lemma}

\theoremstyle{definition}

\theoremstyle{remark}


\numberwithin{equation}{section}


\newcommand{\C}{\mathbb{C}}

\renewcommand{\epsilon}{\varepsilon}

\newcommand{\F}{\mathcal{F}}

\newcommand{\N}{\mathbb{N}}

\renewcommand{\phi}{\varphi}
\newcommand{\R}{\mathbb{R}}

\newcommand{\Z}{\mathbb{Z}}

\DeclareMathOperator{\dist}{dist}

\DeclareMathOperator{\Tr}{Tr}
\DeclareMathOperator{\tr}{Tr}

\newcommand{\spinv}{{\vec S}}
\newcommand{\spin}{S}
\newcommand{\HH}{\mathscr{H}}
\newcommand{\OO}{\mathcal{O}}


\begin{document}

\title[Heisenberg Ferromagnet --- \version]
{Validity of the spin-wave approximation for the free energy of the Heisenberg ferromagnet}

\author{Michele Correggi} 
\address{Dipartimento di Matematica, ``Sapienza'' Universit\`{a} di Roma,	P.le Aldo Moro 5, 00185, Rome, Italy}
\email{michele.correggi@gmail.com}

\author{Alessandro Giuliani}
\address{Dipartimento di Matematica, Universit\`{a} degli Studi Roma Tre,	L.go S. Leonardo Murialdo 1, 00146, Rome, Italy}
\email{giuliani@mat.uniroma3.it}

\author{Robert Seiringer}
\address{Institute of Science and Technology Austria, Am Campus 1, 3400 Klosterneuburg, Austria}
\email{robert.seiringer@ist.ac.at}

\thanks{\copyright\, 2014 by the authors. This paper may be  
reproduced, in
its entirety, for non-commercial purposes.}

\begin{abstract} 
We consider the quantum ferromagnetic Heisenberg model in three dimensions, for all spins $S\ge 1/2$. We rigorously prove the validity of the spin-wave approximation for the excitation spectrum, at the level of the first non-trivial contribution to the free energy at low temperatures. Our proof comes with explicit, constructive upper and lower bounds on the error term. It uses in an essential way the bosonic formulation of the model in terms of the Holstein--Primakoff  representation. In this language, the model describes interacting bosons with a hard-core on-site repulsion and a nearest-neighbor attraction. This attractive interaction makes the lower bound on the free energy particularly tricky: the key idea there is to prove a differential inequality for the two-particle density, which is thereby shown to be smaller than the probability density of a suitably weighted two-particle random process on the lattice.
\end{abstract}

\date{\version}

\maketitle


\section{Introduction}

The spontaneous breaking of a continuous symmetry in statistical mechanics and field theory, even if well understood from a physical point of view, 
is still elusive in many respects as far as a rigorous mathematical treatment is concerned. The case of an abelian continuous symmetry is the easiest to handle, 
and for that a number of rigorous results are available, based on reflection positivity \cite{FSS, DLS}, possibly combined with 
a spin-wave expansion \cite{BFLLS}, or cluster expansion combined with a vortex loop representation \cite{FS, KK}. The non-abelian case is more subtle,
and the few results available are mostly based on reflection positivity\footnote{An exception is the work by Balaban on spontaneous symmetry breaking in classical $N$-vector models \cite{Ba}, which is based on rigorous renormalization group methods.}: see \cite{FSS} for the classical Heisenberg 
and \cite{DLS} for the quantum Heisenberg anti-ferromagnet. 

The ``standard" quantum model for the phenomenon of interest is the three-dim\-en\-sional quantum Heisenberg ferromagnet (QHFM), which is not reflection positive and eluded any rigorous treatment so far. At a heuristic level, its low-temperature thermodynamics, including 
a (formal) low temperature expansion for the free energy and the spontaneous magnetization, can be deduced from spin-wave theory \cite{B1,B2,D,D2,HK,HP},
but to date any attempt to put it on solid grounds failed. The only partial results available on the subject are, to the best of our knowledge: 
the upper bounds on the free energy of the $S=1/2$ QHFM by Conlon--Solovej \cite{CS2}  and by Toth \cite{T}, which are of the correct order at low temperatures, 
but off by a constant; the asymptotically correct upper and lower bounds on the free energy for large $S$ by two of us \cite{CG} (see also \cite{CS1,CS3,CS4} for earlier work). At large $S$, the effective attractive interaction in the bosonic picture (reviewed in Section~\ref{sec:bosons} below) is weak, of order $1/S$, simplifying the problem. The problem for finite $S$ is significantly harder; quite surprisingly, not even a sharp upper bound at low temperature was known so far. 

In this paper we give the first proof of asymptotic correctness of spin-wave theory for the QHFM for any fixed $S\geq 1/2$ in three dimension at zero external field, in the sense 
that we prove upper and lower bounds on the free energy that are asymptotically matching as $\beta\to\infty$, with explicit estimates on the 
error  {(see \cite{CGS} for a sketch of the proof in the case $S=1/2$)}. The method of proof uses an exact mapping of the model into a system of interacting bosons, via the well known Holstein-Primakoff representation \cite{HP}. Under 
this mapping, the Heisenberg model takes the form of an interacting system of bosons, the interaction including a hard-core term, which prevents more than $2S$
bosons to occupy a single site, as well as an attractive nearest neighbor contribution. 
Low temperatures correspond to low density in the boson language; therefore, 
the attractive interaction, even if not small, is expected to give a subleading contribution to the free energy at low temperatures, as compared to the kinetic energy 
term. A subtlety to keep in mind, which plays a role in the following proof, is that the bosonic representation apparently breaks the rotational invariance of the 
model. More precisely, the degenerate states in the quantum spin language are not obviously so in the bosonic one (rotational invariance is a hidden, rather than 
apparent, global symmetry of the model in the bosonic language). 

Our problem is reminiscent of the asymptotic computation of the ground state energy \cite{LY,LSY} and free energy \cite{SBose,JYin} of the low density Bose gas, but  new ideas are needed in order to deal with the attractive nature of the interaction, as well as with the non-abelian continuous symmetry of the problem. 

The rest of the paper is organized as follows: we first define the model and state the main results more precisely (Section~\ref{sec:model}). The representation of the Heisenberg model in terms of interacting bosons will be reviewed in Section~\ref{sec:bosons}, where we also present a key result concerning the two-point function of low-energy eigenfunctions of the Heisenberg Hamiltonian in Theorem~\ref{thm:rho}. The proofs of the upper bound (Section~\ref{sec:up}) and the lower bound (Section~\ref{sec:low}) to the free energy are given subsequently. Finally, Section~\ref{sec:rho} contains the proof of Theorem~\ref{thm:rho}. The proofs of auxiliary lemmas needed there are collected in an appendix.

Throughout the proofs, $C$ stands for unspecified universal constants. Constants with specific values will be denotes by $C_0$, $C_1$, \dots\ instead.


\section{Model and Main Result}\label{sec:model}

We consider the ferromagnetic Heisenberg model with nearest neighbor interactions on the cubic lattice $\Z^3$. It is defined in terms of the Hamiltonian
\begin{equation}\label{heisenberg ham 1}
H_\Lambda :=  \sum_{\langle x,y \rangle \subset \Lambda}(S^2- \spinv_{x} \cdot \spinv_y)\,,
\end{equation}
where $\Lambda$ is a finite subset of $\Z^3$, the sum is over all (unordered) nearest neighbor pairs $\langle x,y\rangle$ in $\Lambda$, and $ \spinv=(S^1,S^2,S^3)$ denote the three components of the spin operators corresponding to spin $S$, i.e., they are the generators of the rotations in a $2S+1$ dimensional representation of $SU(2)$. The Hamiltonian $H_\Lambda$ acts on the Hilbert space $\HH_\Lambda = \bigotimes_{x\in\Lambda} \C^{2S+1}$. We added a constant $S^2$ for every site in order to normalize the ground state energy of $H_\Lambda$ to zero.

Our main object of interest is the free energy per site
\begin{equation}\label{free energy}
f(S,\beta,\Lambda) : = - \frac{1}{\beta|\Lambda|} \ln  \tr_{\HH_{\Lambda}} \,  \exp \left( - \beta H_\Lambda \right)\, ,
\end{equation}
where $\beta$ denotes the inverse temperature, 
and its value in the thermodynamic limit 
\begin{equation}
f(S,\beta) : = \lim_{\Lambda \to \Z^3} f(S,\beta,\Lambda)\,.
\end{equation}
The limit has to be understood via a suitable sequence of increasing domains, e.g., cubes of side length $L$ with $L\to\infty$. We are interested in the behavior of  $f(S,\beta)$ in the low temperature limit $ \beta \to \infty $ for fixed $S$. A related question was addressed in \cite{CG}, where the large spin regime $ S \to \infty $ with $ \beta \propto S^{-1}$ was investigated.

We shall show that the free energy at low temperature can be well approximated by non-interacting spin-waves or magnons, i.e., free bosons. Our main result is as follows.

\begin{theorem} \label{thm:f}
For any $S\geq 1/2$, 
\begin{equation}\label{def:c0}
\lim_{\beta \to \infty} f(S,\beta) \beta^{5/2} S^{3/2} =  C_0 := \frac 1{(2\pi)^{3}} \int_{\R^3} \ln \left( 1- e^{- p^2}\right) dp  = - \frac{\zeta(5/2)}{8 \pi^{3/2}}\,,
\end{equation}
where $\zeta$ denotes the Riemann zeta function.
\end{theorem}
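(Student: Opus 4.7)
The plan is to establish matching upper and lower bounds on $f(S,\beta)$ of the form $C_0\beta^{-5/2}S^{-3/2}(1+o(1))$ as $\beta\to\infty$, both working through the Holstein--Primakoff bosonic representation developed in Section~\ref{sec:bosons}. Under HP, $H_\Lambda$ becomes a bosonic Hamiltonian on the Fock space restricted to states with at most $2S$ bosons per site; the leading quadratic piece is the magnon kinetic energy $T_\Lambda$ with dispersion $\epsilon(p)=2S\sum_{i=1}^3(1-\cos p_i)$, while the corrections contain an attractive nearest-neighbor term together with higher-order contributions in $1/S$. A simple rescaling $q=\sqrt{\beta S}\,p$ shows that the free-boson free energy per site for this dispersion equals $C_0\beta^{-5/2}S^{-3/2}(1+o(1))$, which is the target asymptotics.

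For the upper bound I would apply the Gibbs variational principle with a trial density matrix given by the (projected) quasi-free bosonic state having Bose--Einstein occupation numbers $n(p)=(e^{\beta\epsilon(p)}-1)^{-1}$. Its entropy plus expected kinetic energy reproduce the free-boson free energy up to Riemann-sum and boundary errors, while projection onto occupations $\le 2S$ per site costs only an exponentially small correction since, at mean density $\bar{n}\sim(\beta S)^{-3/2}$, the probability of even a double occupancy is $O(\bar{n}^2)$. The attractive term and the $1/S$ corrections to the quadratic Hamiltonian are bounded in absolute value by quantities of order $\bar{n}^2\sim(\beta S)^{-3}$ per site and hence contribute only at the subleading order $o(\beta^{-5/2})$.

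The lower bound is the main technical obstacle, because the attraction forbids a naive operator inequality $H_\Lambda\ge T_\Lambda$. My strategy would be to decompose the trace in \eqref{free energy} over the eigenbasis of $H_\Lambda$ and analyze low-energy eigenstates individually. For each such eigenstate, Theorem~\ref{thm:rho} provides pointwise control of the two-particle bosonic density $\rho^{(2)}(x,y)$ via a differential inequality comparing it to the probability density of a weighted two-particle random walk on $\Z^3$. Because that random walk spends very little time on nearest-neighbor pairs at low effective density, the expected attractive interaction on the eigenstate is much smaller than its kinetic energy. This translates into an effective inequality of the form $H_\Lambda\ge(1-\delta_\beta)T_\Lambda-\varepsilon_\beta|\Lambda|$ on the low-energy sector, with explicit $\delta_\beta,\varepsilon_\beta\to 0$; combined with a standard free-boson lower bound for $T_\Lambda$ on the constrained Fock space (again handled by the low-density estimate) this closes the argument.

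The hardest step is clearly the lower-bound control of the attractive interaction through Theorem~\ref{thm:rho}; standard convexity or Peierls--Bogoliubov arguments fail here because they would require nonnegativity of the interaction, so the two-particle density comparison really is indispensable. Two further subtleties I would need to address are that the hidden $SU(2)$ symmetry is not manifest in the HP language, so one must verify self-consistently that the bosonic density remains small at low temperature, and that the hard-core constraint produces only subleading corrections in the low-density regime at both ends of the argument.
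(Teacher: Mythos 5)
Your upper-bound sketch is close in spirit to the paper's (Gibbs variational principle with a projected quasi-free state localized in boxes), though the paper projects onto single occupancy $n_x\le 1$ rather than $n_x\le 2S$ — this is what makes the square-root factors in the Holstein--Primakoff hopping term trivialize and yields the clean operator bound of Lemma~\ref{lem:thp} — and the projection error is polynomial, $\OO((\ell/(\beta S))^3)$, not exponentially small. More importantly, your lower-bound plan contains a genuine gap: you dismiss the Peierls--Bogoliubov inequality as "failing because it requires nonnegativity of the interaction," but this is exactly backwards — the paper's lower bound \emph{is} a Peierls--Bogoliubov argument, which needs no sign condition on $K$ and is precisely the tool that converts a bound on the \emph{thermal average} of $K$ into an upper bound on $\tr Q_{E_0}e^{-\beta H}$. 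Your proposed substitute, an operator inequality $H\ge(1-\delta_\beta)T-\varepsilon_\beta|\Lambda|$ on the low-energy sector, does not follow from Theorem~\ref{thm:rho}: that theorem bounds $\langle\Psi|K|\Psi\rangle$ only for $\Psi$ an \emph{eigenstate} of $H$, which controls the diagonal of $P_{E_0}KP_{E_0}$ in the eigenbasis of $H$ but not its quadratic form on arbitrary low-energy superpositions (and $T$ does not commute with $H$). The Peierls--Bogoliubov route needs only the $H$-thermal average of $K$ over the low-energy sector, which is a convex combination of the eigenstate expectations that Theorem~\ref{thm:rho} actually controls — this is why the paper uses it.

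Two further ingredients essential to the paper's lower bound are missing from your outline. First, you need the preliminary rough lower bound (Proposition~\ref{ham lb: pro} plus the entropy count of Lemma~\ref{lem:prel}) to define the cutoff $E_0$ and show that $\tr(1-P_{E_0})e^{-\beta H}$ is negligible; without it there is no justification for restricting attention to the low-energy sector. Second, and more subtly, you must invoke the full $SU(2)$ invariance to restrict each degenerate $S^T$-multiplet to its $\sum_x S^3_x=-S^T$ component (carrying a factor $2S^T+1$): this is what guarantees the boson number is small — of order $S\ell^3-S^T\lesssim \ell^2 E_0/S$ — in the states one actually works with, and without it the low-density estimates (including the particle-number bound on $\|\rho\|_1$ and the free-boson comparison) do not apply. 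You flag the "hidden symmetry" issue and the need for a self-consistent low-density check, but the concrete mechanism by which the paper enforces low density is this sector restriction.
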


The convergence in (\ref{def:c0}) {is uniform as $\beta S\to \infty$, provided $ \beta S \ge S^\alpha$, for some $\alpha>0$}. 
The proof of Theorem~\ref{thm:f} will be given in Sections~\ref{sec:up} and~\ref{sec:low}. It comes with explicit upper and lower bounds on $f(S,\beta)$ which agree to leading order as $\beta S\to \infty$. The proof can be easily generalized to  lattice dimensions larger than $3$, but we restrict our attention to the three-dimensional case for simplicity.

We note that the low-temperature asymptotics of the free energy of the Heisenberg ferromagnet for $S=1/2$ has been studied previously by  Conlon and Solovej in \cite[Theorem 1.1]{CS2}, where an upper bound on $ f(1/2,\beta) $  of the form $ (\frac 12)^{-3/2} C_1 \beta^{-5/2} (1+o(1))$ was derived by means of a random walk representation of the Heisenberg model. However their coefficient $ C_1 $ in front of $ \beta^{-5/2} $ was not the optimal one,  
\begin{equation}
	 C_1 = - \frac 12 \frac 1{(2\pi)^3} \int_{\R^3} e^{-p^2} dp = -\frac{1}{16 \pi^{3/2}}\,.
\end{equation}
Later this result was improved by Toth in \cite[Theorem 1]{T} where it was shown that $C_1$ can be replaced $ C_2 = C_0 \ln 2$ in the upper bound. Here we not only improve these results by showing the optimal constant in the upper bound is $C_0$ for general $S$, we also provide a corresponding lower bound. 

An interesting consequence of our bounds is an instance of quasi long-range order, in the sense that, if $\langle\cdot\rangle_\beta$ is a translation invariant infinite volume Gibbs state
for the system at inverse temperature $\beta$,
\begin{equation}\label{2.or}
\langle S^2-\vec S_x\cdot \vec S_y\rangle_{\beta}\le \tfrac{9}{8}\|x-y\|^2_1 e(S,\beta)\,,
\end{equation}
where $e(S,\beta)=\partial_\beta(\beta f(S,\beta))$ is the energy per site. By concavity of the free energy, our upper and lower bounds on $f(S,\beta)$ imply similar bounds on $e(S,\beta)$,
via 
\begin{equation} \lambda e(S,\beta)\le f(S,\beta)-(1-\lambda) f(S,(1-\lambda)\beta)\;,\qquad \lambda\in(-\infty,1)\;.
\end{equation}
If we use  \eqref{fe ub asympt} and \eqref{fe lb asympt} below, and optimize over $\lambda$, we get
\begin{equation} e(S,\beta)= -\frac32 C_0S^{-3/2}\beta^{-5/2}(1+\mathcal O((S\beta)^{-\kappa}))\;,\qquad \kappa<\tfrac1{80}\;.
\end{equation}
Therefore, \eqref{2.or} implies that spin order persists up to length scales of the order $\beta^{5/4}$, in the sense that $\langle \vec S_x\cdot\vec S_y\rangle_\beta$
is bounded away from zero as long as $|x-y|\le ({\rm const.})\beta^{5/4}$. Spin wave theory predicts equality in \eqref{2.or}, without the factor $\tfrac{9}8$ and with the 
$\ell_1$ distance replaced by the euclidean one, asymptotically  for $|x-y|\ll \sqrt\beta$. 
Of course, one expects infinite range order at low temperatures, but in absence of a proof Eq.~\eqref{2.or} is the best result to
date. We shall prove \eqref{2.or} in Appendix \ref{appB}.

We conclude this section with a brief outline of the proof of Theorem~\ref{thm:f}.
To obtain an upper bound, we utilize the Gibbs variational principle. The natural trial state to use is the one of non-inter\-acting bosons, projected to the subspace 
where each site has occupation number at most $1$; for convenience the trial state is localized into boxes of suitable (temperature-dependent) size. 
A localization procedure is also used in the lower bound, whose proof is more sophisticated and roughly proceeds as follows: we first derive a ``rough" lower bound, off by logarithmic factors  
from the correct one, by localizing into boxes of side length $\ell \ll \beta^{1/2}$ and by using a basic lower bound on the excitation spectrum, scaling like 
$\ell^{-2}(S_{\rm max}-S^T)$, where $S^T$ is the total-spin quantum number, and $S_{\rm max}$ its maximal allowed value. This lower bound on the excitation spectrum has some interest in itself, and complements the sharp formula for the gap proved in \cite{CLR} in the spin $1/2$ case. Its method of proof is the key ingredient to 
get \eqref{2.or}.
Next, we move to a larger scale ($\ell\sim \beta^{1/2+\epsilon}$ for some small $\epsilon >0$): The preliminary rough bound allows us to discard states with large energy; by using rotational invariance, we can also restrict ourselves to computing the trace of interest in the subspace of lowest $3$-component of the total spin. On the corresponding subspace we then utilize the representation in terms of interacting bosons, and we use the Gibbs-Peierls-Bogoliubov inequality to estimate $-\ln \tr e^{-\beta H}$ from below by the non-interacting expression, minus the average of the interaction term. A bound on the latter will be presented in Theorem~\ref{thm:rho} in the next section, whose proof requires two key ideas:
(1) we use the eigenvalue equation to derive a suitable differential inequality for the two-particle density $\rho_2$, of the form $-\Delta \rho_2\le ({\rm const.})E\rho_2$, with $E$ the energy, which is a small number, and $\Delta$ a (modified) Laplacian on $\Lambda$; in this way we reduce the many-body problem to a two-body one; (2)
we iterate the inequality, thus obtaining an upper bound on $\|\rho_2\|_\infty$ in terms of the long-time probability density of a 
modified random walk on $\mathbb Z^6$.

\section{Boson Representation}\label{sec:bosons}

It is well known that the Heisenberg Hamiltonian can  be rewritten in terms of bosonic creation and annihilation operators \cite{HP}. For any $ x \in \Lambda $ we set
\begin{equation}\label{ax upx}
	\spin_{x}^+ =  \sqrt{2S}\, a^\dagger_x \left[ 1 - \frac{a^\dagger_x a_x}{2S} \right]_+^{1/2} \ ,		\quad	\spin_{x}^{-} = : \sqrt{2S}\left[ 1 - \frac{a^\dagger_x a_x}{2S} \right]_+^{1/2} a_x\ ,	\quad	\spin_{x}^3 = : a^\dagger_x a_x - S\,,
\end{equation}
where $a^{\dagger}_{x}, a_x$ are bosonic creation and annihilation operators,  $S^\pm = S^1 \pm i S^2$, and $[\, \cdot\,]_+ = \max\{0, \, \cdot\, \}$ denotes the positive part.  The operators $a^\dagger$ and $a$ act on the space $\ell^2(\N_0)$ via $(a\, f)(n) = \sqrt{n+1} f(n+1)$ and $(a^\dagger f)(n) = \sqrt{n} f(n-1)$, and satisfy the canonical commutation relations $[a,a^\dagger] = 1$. One readily checks that (\ref{ax upx}) defines a representation of $SU(2)$ of spin $S$, and the operators $\spinv_x$ leave the space $\bigotimes_{x\in\Lambda} \ell^2 ( [0,2S]) \cong \HH_\Lambda = \bigotimes_{x\in\Lambda} \C^{2S+1}$, which can be naturally identified with a subspace of the Fock space $\F:=\bigotimes_{x\in\Lambda} \ell^2(\N_0)$, invariant. 

The Hamiltonian $H_\Lambda$ in (\ref{heisenberg ham 1})  can be expressed in terms of the bosonic  creation and annihilation operators as
\begin{align} \nonumber
H_\Lambda  =   S  \sum_{\langle x,y\rangle\subset \Lambda} \biggl(  & - a^\dagger_x \sqrt{ 1 - \frac {n_x}{2S} }  \sqrt{ 1-\frac{n_y}{2S}}a_y  - a^\dagger_y \sqrt{ 1-\frac{n_y}{2S}}   \sqrt{ 1 - \frac {n_x}{2S} }  a_x \\ & + n_x + n_y - \frac 1{S} n_x n_y  \biggl) \,, \label{hamb}
\end{align}
where we denote the number of particles at site $x$ by $n_x= a^\dagger_x a_x$. It describes a system of bosons hopping on the lattice $\Lambda$, with nearest neighbor {\em attractive} interactions and a hard-core condition preventing more than $2S$ particles to occupy the same site. Also the hopping amplitude depends on the number of particles on neighboring sites, via the square root factors in the first line in (\ref{hamb}). Note that the resulting interaction terms are not purely two-body (i.e., they involve interactions between two or more 
particles; in other words, they are not just quartic in the creation-annihilation operators, but it involve terms with $6$, $8$, etc., operators). 

In the bosonic representation (\ref{hamb}), the vacuum is a ground state of the Hamiltonian, and the excitations of the model can be described as bosonic particles in the same way as phonons in crystals. There exists a zero-energy ground state for any particle number less or equal to $2S |\Lambda|$, in fact. While this may not be immediately apparent from the representation (\ref{hamb}), it is a result of the $SU(2)$ symmetry of the model. The total spin is maximal in the ground state, which is therefore $(2S|\Lambda|+1)$-fold degenerate, corresponding to the different values of the $3$-component of the total spin. The latter, in turn, corresponds to the total particle number (minus $S|\Lambda|$) in the bosonic language.

One of the key ingredients of our proof of the lower bound on $f$ is the following theorem, which shows that the two-point function of a low-energy eigenfunction of $H_\Lambda$ is approximately constant. Since this result may be of independent interest, we display it already at this point.

\begin{theorem}\label{thm:rho}
There exists a constant $C>0$ such that, if $\Psi$ is an eigenfunction of the Heisenberg Hamiltonian on $\Lambda_
\ell := [0,\ell)^3 \cap \Z^3$ with energy $E>0$, and 
\begin{equation}
\rho(x_1,x_2) = \langle\Psi | a^\dagger_{x_1} a^\dagger_{x_2} a_{x_2} a_{x_1} | \Psi \rangle
\end{equation}
is its two-particle density, then 
\begin{equation}\label{thm:eq:rho}
\|\rho\|_\infty \leq C  S^{-3} E^{3} \|\rho\|_1\;.
\end{equation}

\end{theorem}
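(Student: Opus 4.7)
The plan is the two-step procedure foreshadowed in the introduction: first, derive a pointwise discrete differential inequality of the form $(-\Delta^{\mathrm{mod}})\rho\le C(E/S)\rho$ on $\Lambda_\ell^2\subset\Z^6$ from the eigenvalue equation $H_{\Lambda_\ell}\Psi=E\Psi$; second, iterate it via heat-kernel estimates for the random walk generated by $-\Delta^{\mathrm{mod}}$ on $\Z^6$ to obtain the claim. I introduce the Fock-valued function
\[
 \phi_{x_1,x_2}:= a_{x_2}a_{x_1}\Psi,\qquad \rho(x_1,x_2)=\|\phi_{x_1,x_2}\|^2,
\]
and split $H_{\Lambda_\ell}=T_0+W$ with $T_0:=S\sum_{\langle x,y\rangle}(n_x+n_y-a^\dagger_x a_y-a^\dagger_y a_x)\ge 0$ the free hopping, and $W$ collecting the square-root corrections to the hopping together with the attractive interaction $-\tfrac{1}{S}\sum_{\langle x,y\rangle}n_x n_y$. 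The commutator identity $[T_0,a_x]=-S\sum_{y\sim x}(a_x-a_y)$ combined with $H_{\Lambda_\ell}\Psi=E\Psi$ yields
\[
 (-\Delta_{x_1}-\Delta_{x_2})\phi_{x_1,x_2} = \tfrac{E}{S}\phi_{x_1,x_2} - \tfrac{1}{S}T_0\phi_{x_1,x_2} - \tfrac{1}{S}a_{x_2}a_{x_1}W\Psi.
\]
Taking the real part of the inner product with $\phi_{x_1,x_2}$, dropping the non-positive term $-\tfrac{1}{S}\langle\phi,T_0\phi\rangle$ by $T_0\ge 0$, and combining with the vector-valued product rule $-\Delta\|\phi\|^2\le 2\re\langle\phi,-\Delta\phi\rangle$ (valid on any graph) gives the preliminary bound
\[
 (-\Delta_{x_1}-\Delta_{x_2})\rho(x_1,x_2) \le \tfrac{2E}{S}\rho(x_1,x_2) - \tfrac{2}{S}\re\langle\phi_{x_1,x_2},\, a_{x_2}a_{x_1}W\Psi\rangle.
\]

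The central step is to control the remainder involving $W$. Commuting $a_{x_2}a_{x_1}$ past the operators in $W$ produces shifts of the number operators at $x_1,x_2$ and their nearest neighbors. I expect the resulting expression to reorganize into (a) a diagonal contribution at the pair $(x_1,x_2)$, bounded by $C'(E/S)\rho(x_1,x_2)$ using the hard-core constraint $n_y\le 2S$, plus (b) hopping-type terms on $\Lambda_\ell^2$ generated by the Taylor expansion of the square roots; moving the latter to the left-hand side dresses the plain $6$-dimensional Laplacian into a modified one, $-\Delta^{\mathrm{mod}}$, which is still the generator of a reversible random walk on $\Z^6$ with bounded jump rates and uniform ellipticity. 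This produces the sought differential inequality
\[
 \bigl(-\Delta^{\mathrm{mod}}\rho\bigr)(x_1,x_2) \le C\,\tfrac{E}{S}\,\rho(x_1,x_2),\qquad (x_1,x_2)\in\Lambda_\ell^2.
\]

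For the second step, since $\rho\ge 0$, the above inequality states that $\rho$ is a stationary subsolution of $\partial_t u=(\Delta^{\mathrm{mod}}+CE/S)u$, so the discrete maximum principle (equivalently, the Feynman--Kac formula) gives, for every $t\ge 0$,
\[
 \rho(x_1,x_2) \le e^{(CE/S)t}\bigl(P^{\mathrm{mod}}_t \rho\bigr)(x_1,x_2) \le e^{(CE/S)t}\,\|p^{\mathrm{mod}}_t\|_\infty\,\|\rho\|_1,
\]
with $p^{\mathrm{mod}}_t$ the transition kernel of the random walk generated by $-\Delta^{\mathrm{mod}}$. Standard on-diagonal heat-kernel estimates on $\Z^6$, together with the obvious finite-volume saturation for $t\gtrsim \ell^2$, yield $\|p^{\mathrm{mod}}_t\|_\infty\le C\min\{t^{-3},\ell^{-6}\}$ uniformly on $\Lambda_\ell^2$. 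Optimizing $t$ then proves the claim: if $E/S\gtrsim \ell^{-2}$ choose $t\sim S/E$, so that $e^{(CE/S)t}=O(1)$ and the kernel factor is $O((E/S)^3)$; if $E/S\lesssim \ell^{-2}$ choose $t\sim \ell^2$ and both factors are $O(\ell^{-6})$. The main obstacle is the reorganization of the $W$-term in step (b) above: both the attractive sign of the interaction and the genuinely many-body character of the square-root corrections forbid a naive triangle-inequality estimate, so the verification that the full remainder can be recast as a modified Laplacian plus a diagonal term bounded by $CE\rho/S$ is the core technical content to be developed in Section~\ref{sec:rho}.
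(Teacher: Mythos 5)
Your two-step skeleton --- derive a discrete differential inequality for the two-particle density, then iterate it by heat-kernel/Feynman--Kac estimates --- is exactly the paper's strategy, and your choice of optimal $t\sim\min\{S/E,\ell^2\}$ matches the paper's choice of iteration depth $n$. However, the core of your argument rests on a claim that turns out to be false, and the correct version requires three further ideas that your proposal does not anticipate.

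The problematic claim is your item (a): that the ``diagonal contribution'' from commuting $a_{x_2}a_{x_1}$ past $W$ is bounded by $C'(E/S)\rho(x_1,x_2)$. What the eigenvalue equation actually produces, after the reorganization you are trying to guess at, is the inequality
\[
\bigl(-\Delta_{x_1}-\Delta_{x_2}\bigr)\sigma(x_1,x_2)\;\le\;\frac{2E}{S}\,\sigma(x_1,x_2)\;+\;\frac 1S\,\sigma(x_1,x_2)\,\chi_{|x_1-x_2|=1}\,,
\]
where $\sigma=\rho(1-\delta_{x_1,x_2}/(2S))^{-1}$. The second term is an \emph{order--$1/S$ contact attraction} that is not small in $E$ at all; the hard-core constraint $n_y\le 2S$ plays no role in suppressing it. One can indeed hide the $\delta$-contributions inside a weighted Laplacian with edge weights $1-\delta_{\cdot,\cdot}/(2S)$, which is presumably the kind of $-\Delta^{\rm mod}$ you have in mind; but then those weights degenerate to zero as $S\downarrow 1/2$, so the ``uniform ellipticity'' you invoke for the heat kernel estimates fails precisely in the hardest case $S=1/2$. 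Whichever formulation one chooses, the $1/S$ contact term cannot be made to disappear by choosing $t$ small, and your claim that the remainder is $O(E/S)\rho$ is wrong. This is exactly the reason the paper's lower bound is ``particularly tricky.''

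The paper's resolution is to keep the contact term explicitly and show, by iterating the inequality $n$ times, that the accumulated contribution of the contact term is \emph{strictly less than} $\|\sigma\|_\infty$, so that it can be absorbed into the left-hand side. This requires a sharp estimate on $\sum_w Q_n(z,w)\chi^R(w)$ in terms of the resolvent of $-\Delta_{\Z^6}$, and the outcome hinges on the numerical value of the $\Z^3$ lattice Green's function at the origin: $C_4=(-\Delta_{\Z^3})^{-1}(0,0)\approx 0.2527$, and the inequality $3C_4-\tfrac12\approx 0.258<\tfrac12\le S$ is exactly what saves the day (Lemma~\ref{lem:new}). There is nothing like this in your sketch, and without it the coefficient in front of $\|\sigma\|_\infty$ is not controlled. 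In addition, because the reflection to $\Z^6$ introduces image sources, the estimate is only good a finite distance $d$ away from $\partial(\Lambda_\ell\times\Lambda_\ell)$, and because the resulting prefactor $\tfrac{1}{2S}(1-\delta+Cd^{-1})$ can exceed $1$ for small $S$, the paper needs a separate Step~4 (a ``flatness near the maximum'' lemma derived from the same differential inequality) to transfer the bound from the interior to the global supremum. Your proposal has neither the reflection-induced boundary issue nor the small-$S$ compensation built in, and the finite-volume kernel bound $\|p^{\rm mod}_t\|_\infty\le C\min\{t^{-3},\ell^{-6}\}$ is asserted rather than derived. In short: the scaffolding is right, but the single step you flag as ``the core technical content to be developed'' is indeed where the proof lives, and the specific mechanism you conjecture for it (an $O(E/S)$ diagonal remainder absorbed via a uniformly elliptic dressing of the Laplacian) does not work; the real obstruction is an $O(1/S)$ attraction whose iterated effect must be shown to be subcritical by an explicit Green's function computation, supplemented for small $S$ by a maximum-flatness argument.
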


The proof of Theorem~\ref{thm:rho} will be given in Section~\ref{sec:rho}. It will allow us to conclude that all terms in (\ref{hamb}) higher than quadratic in the creation and annihilation operators can be neglected at low energy, and the same is true for the constraint $n_x \leq 2S$. One is thus left with free bosons at zero chemical potential, whose free energy equals $C_0S^{-3/2}  \beta^{-5/2}$ (compare, e.g., with (\ref{4.28}) below). 

The bound~(\ref{thm:eq:rho}) can also be interpreted as absence of bound states of the bosons for small enough energy. It is well-known that due to the attractive nature of the nearest neighbor interaction bound states do exist   at higher energy, see \cite{bs2,bs3,bs1,GS}.


\section{Proof of Theorem~\ref{thm:f}; Upper Bound}\label{sec:up}

In this section we will prove the following.

\begin{proposition}\label{ub: pro}
Let $ C_0 $ be the constant given in \eqref{def:c0}. As $ \beta S \to \infty $, we have 
\begin{equation}\label{fe ub asympt}
f(S,\beta) \leq C_0 S^{-3/2} \beta^{-5/2} \left(1 - \OO( (\beta S)^{-3/8}) \right)\,.
\end{equation}
\end{proposition}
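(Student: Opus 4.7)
The strategy is the Gibbs variational principle: for any trial density matrix $\Gamma$ on $\HH_\Lambda$,
\[
f(S,\beta,\Lambda) \leq \frac{1}{|\Lambda|}\Tr[\Gamma H_\Lambda] + \frac{1}{\beta|\Lambda|}\Tr[\Gamma \ln \Gamma].
\]
Guided by spin-wave theory and the bosonic representation \eqref{hamb}, I would choose $\Gamma$ as a localized hard-core-projected free Bose gas. Concretely, tile $\Lambda$ by disjoint cubes $\Lambda_\ell^{(j)}$ of side length $\ell$, separated by corridors of width one lattice site, with $\ell\gg\sqrt{\beta S}$ to be optimized; set $\Gamma$ to be the tensor product of a box density matrix $\Gamma_j := Z_j^{-1} P_1 e^{-\beta H_0^\ell} P_1$ with the Fock vacuum $|0\rangle$ on every corridor site (corresponding to $S^3=-S$). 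Here $H_0^\ell := -S\Delta_\ell^{\rm D}$ is the free-boson hopping Hamiltonian (the Dirichlet discrete Laplacian on $\Lambda_\ell^{(j)}$ times $S$), $P_1$ is the projection onto the hard-core subspace $\{n_x\leq 1\ \forall x\in\Lambda_\ell^{(j)}\}$ (which embeds in $\HH_{\Lambda_\ell^{(j)}}$ since $S\geq 1/2$), and $Z_j := \Tr[P_1 e^{-\beta H_0^\ell}]$.

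Within one box I would separately estimate the entropy and the energy. For the entropy, the unprojected free Bose gas has expected occupation $\rho_0 \sim (\beta S)^{-3/2}$ per site, so a Chebyshev or short cluster expansion yields $Z_j = (1+O(\ell^3(\beta S)^{-3}))\,\Tr e^{-\beta H_0^\ell}$, and the entropy of $\Gamma_j$ matches the free-boson entropy $-\sum_k \ln(1-e^{-\beta\epsilon_k^\ell})$ up to the same small relative correction, with $\{\epsilon_k^\ell\}$ the eigenvalues of $H_0^\ell$. A Riemann-sum comparison, legitimate because $\ell \gg \sqrt{\beta S}$, identifies this sum with $\ell^3\beta\,C_0 S^{-3/2}\beta^{-5/2}$ plus errors of polynomial order in $\sqrt{\beta S}/\ell$ from the Dirichlet infrared cutoff and the lattice discretization of $k$, and in $(\beta S)^{-1}$ from the lattice-versus-continuum dispersion mismatch. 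For the energy, the purely quadratic part of $H_{\Lambda_\ell^{(j)}}$ in the bosonic form \eqref{hamb} equals $H_0^\ell$; its expectation combines with the entropy term to reproduce the free-boson free energy. The genuinely interacting remainders in \eqref{hamb}, namely the attractive $-S^{-1}n_x n_y$ and the corrections arising from expanding $\sqrt{1-n_x/(2S)}\sqrt{1-n_y/(2S)}$ around $1$, are polynomials in neighboring number operators whose expectations are bounded by Wick's theorem at density $\rho_0$; these contribute a relative error of order $(\beta S)^{-3/2}$.

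The bonds crossing corridor sites must also be accounted for. Because $\langle 0|S^\pm_y|0\rangle = 0$, the expectation of $S^2 - \vec S_x\cdot\vec S_y$ on a box-to-corridor bond reduces to $S\langle n_x\rangle_{\Gamma_j} \sim S\rho_0$, so summing over the $O(\ell^2)$ boundary sites of each box yields a total relative error of order $\beta S/\ell$. Collecting all sources, the total relative error has the schematic form $\beta S/\ell + \sqrt{\beta S}/\ell + (\beta S)^{-3/2}$; balancing the dominant pieces at $\ell \sim (\beta S)^{11/8}$ reproduces the claimed rate $(\beta S)^{-3/8}$ in \eqref{fe ub asympt}.

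The main obstacle I expect is that $\Gamma_j$ is a \emph{projected} Gibbs state rather than an honest Gibbs state of a simple Hamiltonian, so its entropy cannot be read off from a trial partition function and must instead be controlled, uniformly in $\ell$ and in $S\geq 1/2$, via a Peierls--Bogoliubov-type inequality or by an explicit cluster/combinatorial expansion of $Z_j$ in terms of hard-core configurations. Handling this expansion together with the delicate cancellation of the quadratic energy against the entropy of the projected state is the most delicate technical point.
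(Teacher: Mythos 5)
Your overall approach is the same as the paper's: tile $\Lambda$ into cubes of side $\ell$ separated by corridors of width one on which $S^3_x=-S$, use the Gibbs variational principle with a trial state given by the projected Dirichlet free-boson Gibbs state $\mathcal{P}e^{-\beta T}\mathcal{P}/\Tr\mathcal{P}e^{-\beta T}$, and control (i) the boundary/Dirichlet correction, (ii) the interaction corrections via Wick's theorem, and (iii) the entropy of the projected state. You also correctly identify the entropy of the projected state as the delicate point; the paper handles it via operator monotonicity of $\ln$ (Lemma~\ref{lem:ent}) rather than a cluster expansion, but that is a technical choice, not a conceptual difference.

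There is, however, a concrete gap in your error bookkeeping that invalidates the claimed rate. You estimate the box-to-corridor bonds by taking $\langle n_x\rangle_{\Gamma_j}\sim\rho_0\sim(\beta S)^{-3/2}$ on the boundary, obtaining a relative error $\beta S/\ell$, and then set $\ell\sim(\beta S)^{11/8}$ so that this term equals $(\beta S)^{-3/8}$. This is not self-consistent: (a) the $\ell^3(\beta S)^{-3}$ relative error you yourself derive from the hard-core projection grows with $\ell$, and at $\ell=(\beta S)^{11/8}$ it equals $(\beta S)^{9/8}\gg 1$, so your trial state no longer controls the entropy; (b) more importantly, the $\beta S/\ell$ estimate of the boundary energy is far too crude for a Dirichlet trial state. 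The Dirichlet modes $\phi_p(x)\propto\prod_j\sin((x^j+1)p^j)$ vanish at the boundary, and the relevant thermal modes have $|p|\sim(\beta S)^{-1/2}$, so for a boundary site $\langle n_x\rangle\sim(\beta S)^{-5/2}$, not $(\beta S)^{-3/2}$. The boundary energy, after the compensating entropy gain from the Dirichlet infrared cutoff, contributes a relative error $(\beta S)^{1/2}/\ell$ (this is exactly the Riemann-sum correction you also listed as a separate term — it already \emph{is} the boundary correction and should not be double-counted at the larger size $\beta S/\ell$). Balancing $(\beta S)^{1/2}/\ell$ against $\ell^3(\beta S)^{-3}$ gives the optimal $\ell\sim(\beta S)^{7/8}$, not $(\beta S)^{11/8}$, and then the relative error is $(\beta S)^{-3/8}$. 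Also, for the record, the Wick estimate of the quartic interaction gives a relative error of order $(\beta S)^{-1/2}$ (up to $S$-dependent constants), not $(\beta S)^{-3/2}$, though this is still subleading compared to $(\beta S)^{-3/8}$.
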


By the Gibbs variational principle, 
\begin{equation}\label{varpr}
f(S,\beta,\Lambda) \leq  \frac{1}{|\Lambda|} \tr H_\Lambda \Gamma  + \frac{1}{\beta |\Lambda|} \tr  \Gamma \ln \Gamma
\end{equation}
for any positive $\Gamma$ with $\tr \Gamma =1$. We can use this to confine particle into boxes, with Dirichlet boundary conditions. To be precise, let 
\begin{equation}
H_{\Lambda}^D = H_\Lambda + \sum_{\underset{|x-y|=1}{x\in \Lambda, y\in \Lambda^c}} \left( S^2 + S S_x^3\right)
\end{equation}
be the Heisenberg Hamiltonian on $\Lambda\subset \Z^3$ with $S^3_x=-S$ boundary conditions.  Note that $H_\Lambda^D \geq H_\Lambda$. We take $\Lambda$ to be the cube $\Lambda_L := [0,L)^3\cap \Z^3$ with $L^3$ sites, and assume that $L = k(\ell+1)$ for some integers $k$ and $\ell$. By letting all the spins point maximally in the negative $3$-direction on the boundary of the smaller cubes of side length $\ell$, we obtain the upper bound
\begin{equation}
f(S,\beta,\Lambda_L) \leq \left( 1 + \ell^{-1} \right)^{-3} f^D(S,\beta,\Lambda_\ell) \ , \quad f^D(S,\beta,\Lambda) := - \frac 1{\beta |\Lambda|} \ln \Tr e^{-\beta H_\Lambda^D} \,.
\end{equation}
In particular, by letting $k\to \infty$ for fixed $\ell$, we have 
\begin{equation}
f(S,\beta) \leq \left( 1 + \ell^{-1} \right)^{-3} f^D(S,\beta,\Lambda_\ell) 
\end{equation}
in the thermodynamic limit.

To obtain an upper bound on $f^D$, we can use the variational principle (\ref{varpr}), with
\begin{equation}
\Gamma = \frac{ \mathcal{P} e^{-\beta  T} \mathcal{P} }{\Tr_\F \mathcal{P} e^{-\beta T }} \,.
\end{equation}
Here,  $\mathcal{P}$ projects onto $n_x\leq 1$ for every site $x\in\Lambda_\ell$, and $T$ is the Hamiltonian on Fock space $\F$ describing free bosons on $\Lambda_\ell$ with Dirichlet boundary conditions, 
\begin{align}\nonumber 
T & =   S \sum_{x,y \in \Lambda_\ell} \left( -\Delta^D\right)(x,y) a^\dagger_x a_y \\ & =  S  \sum_{\langle x,y\rangle\subset \Lambda_\ell} \left( - a^\dagger_x a_y  - a^\dagger_y  a_x  + n_x + n_y  \right)  + S \sum_{\underset{|x-y|=1}{x\in \Lambda_\ell, y\in \Lambda_\ell^c}} n_x\,,\label{hamd}
\end{align}
where $\Delta^D$ denotes the Dirichlet Laplacian on $\Lambda_\ell$. The eigenvalues of $-\Delta^D$ are given by
\begin{equation}\label{epsd}
\left\{ \epsilon(p) = \sum_{j=1}^3 2 (1-\cos(p^j)) \, : \, p \in \Lambda_\ell^{*D}:= \left( \frac \pi{\ell+1} \{ 1, 2,\dots, \ell\} \right)^3 \right\}
\end{equation}
with corresponding eigenfunctions $\phi_p(x) = [2/(\ell+1)]^{3/2}\prod_{j=1}^3 \sin((x^j+1) p^j)$.

\begin{lemma} \label{lem:thp}
On the Fock space $\F=\bigotimes_{x\in\Lambda} \ell^2(\N_0)$, 
\begin{equation}\label{THP}
\mathcal{P} H_\Lambda^D \mathcal{P} \leq T +  (2S-1) \sum_{\langle x,y\rangle\subset \Lambda} n_x n_y  \,.
\end{equation}
\end{lemma}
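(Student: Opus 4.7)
The plan is to reduce the global operator inequality to an edge-wise one on a two-site Fock space, and then to an explicit $3\times 3$ matrix check. Write the bulk pieces of both operators edgewise as
\begin{equation*}
H_\Lambda^D = \sum_{\langle x,y\rangle\subset \Lambda} h_{xy} + B,\qquad T=\sum_{\langle x,y\rangle\subset\Lambda} t_{xy}+B,
\end{equation*}
with $h_{xy}=S^2-\spinv_x\cdot \spinv_y$, $t_{xy}=S(a^\dagger_x-a^\dagger_y)(a_x-a_y)\geq 0$, and common boundary term $B=S\sum_{x\in\Lambda,\,y\in\Lambda^c,\,|x-y|=1}n_x$. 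Since $B$ is diagonal in the occupation basis, $\mathcal P B\mathcal P=B\mathcal P\leq B$. For each edge we use the tensor factorization $\F=\F_{xy}\otimes \F_{xy^c}$ and $\mathcal P=\mathcal P_{xy}\otimes \mathcal P_{xy^c}$, where $\mathcal P_{xy}$ projects onto $n_x,n_y\leq 1$; the positivity $h_{xy}\geq 0$ (which is the standard Heisenberg-edge non-negativity) gives $\mathcal P_{xy}h_{xy}\mathcal P_{xy}\geq 0$, and hence
\begin{equation*}
\mathcal P h_{xy}\mathcal P=(\mathcal P_{xy}h_{xy}\mathcal P_{xy})\otimes \mathcal P_{xy^c}\leq (\mathcal P_{xy}h_{xy}\mathcal P_{xy})\otimes I_{xy^c}.
\end{equation*}
Summing over edges, the lemma will follow from the local inequality
\begin{equation*}
\mathcal P_{xy}h_{xy}\mathcal P_{xy}\leq t_{xy}+(2S-1)n_xn_y\qquad\text{on }\F_{xy}.
\end{equation*}

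All three local operators preserve the edge particle number $n_x+n_y$, so this inequality can be verified sector by sector. In the sectors $n_x+n_y\leq 1$, the Holstein--Primakoff square-roots appearing in \eqref{hamb} reduce to $1$, so $h_{xy}$ coincides with $t_{xy}$ while $n_xn_y=0$; in the sectors $n_x+n_y\geq 3$, the left-hand side vanishes because they lie in the kernel of $\mathcal P_{xy}$, and the right-hand side is non-negative since $t_{xy}\geq 0$ and $n_xn_y\geq 0$. Both cases are trivial.

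The only non-trivial sector is $n_x+n_y=2$, spanned by $\{|1,1\rangle,|0,2\rangle,|2,0\rangle\}$. Since the Holstein--Primakoff hopping applied to $|1,1\rangle$ lands outside $\mathcal P_{xy}$, a short calculation from \eqref{hamb} shows that in this basis $\mathcal P_{xy}h_{xy}\mathcal P_{xy}=(2S-1)|1,1\rangle\langle 1,1|$, while $t_{xy}+(2S-1)n_xn_y$ has diagonal $(4S-1,\,2S,\,2S)$ and only two nonzero off-diagonal entries, both equal to $-S\sqrt{2}$ and coupling $|1,1\rangle$ to the other two basis vectors. For $v=\alpha|1,1\rangle+\beta|0,2\rangle+\gamma|2,0\rangle$, the difference matrix $M$ of right minus left therefore satisfies
\begin{equation*}
\langle v|M|v\rangle = 2S\bigl[(\alpha-(\beta+\gamma)/\sqrt 2)^2+(\beta-\gamma)^2/2\bigr]\geq 0,
\end{equation*}
completing the proof. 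The only real technical point is this sum-of-squares verification; the coefficient $2S-1$ is moreover optimal, since any smaller value would make the restriction of $M$ to the $(\beta,\gamma)$-subspace negative somewhere, so the lemma is in this sense sharp.
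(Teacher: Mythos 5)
Your proof is correct, and it takes a genuinely different route from the paper's. The paper works directly with $\mathcal{P}H_\Lambda^D\mathcal{P}$ on the range of $\mathcal{P}$: after replacing the Holstein--Primakoff square roots by $1$ (which is forced since between the projectors one always has $n_x=n_y=0$), it reorganizes each edge contribution into
\begin{equation*}
(a^\dagger_x - a^\dagger_y)\mathcal{P}(1-n_x)(1-n_y)(a_x-a_y) + \mathcal{P}\bigl(2 - \tfrac 1 S\bigr) n_x n_y\,,
\end{equation*}
and then concludes by the two immediate operator bounds $\mathcal{P}(1-n_x)(1-n_y)\leq 1$ and $\mathcal{P}\leq 1$, with no sector decomposition or matrix computation. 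You instead reduce to a two-site problem via the tensor factorization $\mathcal{P}=\mathcal{P}_{xy}\otimes\mathcal{P}_{xy^c}$ together with $h_{xy}\geq 0$, observe that $n_x+n_y$ is conserved, dispose of the sectors $n_x+n_y\leq 1$ and $\geq 3$ by inspection, and verify the only non-trivial sector $n_x+n_y=2$ by an explicit $3\times3$ sum-of-squares. Both are valid edge-wise arguments; the paper's is shorter and avoids the matrix check, while yours is more explicit and has the bonus of making the sharpness of the constant $2S-1$ visible. I checked the arithmetic: $\mathcal{P}_{xy}h_{xy}\mathcal{P}_{xy}=(2S-1)|1,1\rangle\langle1,1|$, the matrix of $t_{xy}+(2S-1)n_xn_y$ is as you state, and your sum-of-squares identity holds.

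One small imprecision, not affecting the proof: the justification you give for optimality, that a smaller coefficient ``would make the restriction of $M$ to the $(\beta,\gamma)$-subspace negative somewhere,'' is not quite right, since that restriction is always $2S\cdot I_2\geq 0$. The correct reason is the Schur complement with respect to that block: replacing $2S-1$ by $c$ changes the $(1,1)$ entry of $M$ to $c+1$, and the Schur complement is $c+1-2S$, which is non-negative iff $c\geq 2S-1$; equivalently, test $M$ against the vector with $\alpha=1$, $\beta=\gamma=1/\sqrt2$.
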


Note that for $S=1/2$ the second term on the right side vanishes.

\begin{proof}
We write $\mathcal{P} = \prod_{x\in \Lambda} p_x$, where $p_x$ projects onto the subspace of $\F$ with $n_x\leq 1$. We have 
\begin{equation}
p_x p_y  a^\dagger_x \sqrt{ 1 - \frac {n_x}{2S} }  \sqrt{ 1-\frac{n_y}{2S}}a_y p_x p_y = p_x p_y a^\dagger_x a_y p_x p_y = a^\dagger_x p_x(1-n_x) (1-n_y)p_y a_y \,.
\end{equation}
In particular,
\begin{align}\nonumber
&\mathcal{P} \left(   - a^\dagger_x \sqrt{ 1 - \frac {n_x}{2S} }  \sqrt{ 1-\frac{n_y}{2S}}a_y  - a^\dagger_y \sqrt{ 1-\frac{n_y}{2S}}   \sqrt{ 1 - \frac {n_x}{2S} }  a_x + n_x + n_y - \frac 1{S} n_x n_y  \right) \mathcal{P}  \\ & = (a^\dagger_x - a^\dagger_y) \mathcal{P} (1-n_x)(1-n_y) (a_x-a_y) + \mathcal{P} \left( 2 - \frac 1 S\right) n_x n_y \,.
\end{align}
If we bound $\mathcal{P}(1-n_x)(1-n_y) \leq 1$ in the first term, and $\mathcal{P}\leq 1$ in the second, we arrive at (\ref{THP}).
\end{proof}

As a next step, we will show that $\tr_\F \mathcal{P} e^{-\beta T}$ is close to $\tr_\F e^{-\beta T}$ for  $\beta S \gg \ell$.

\begin{lemma}\label{lem:ppex}
With $C_3:= 8 \pi^{-3} \zeta(3/2)^2$, we have 
 \begin{equation}\label{ppex}
 \frac{ \Tr_\F \mathcal{P} e^{-\beta T}}{\Tr_\F e^{-\beta T} } \geq 1- \frac{C_3 \ell^3}{(\beta S)^3} \,.
 \end{equation} 
\end{lemma}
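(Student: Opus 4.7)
The plan is to read the left-hand side of \eqref{ppex} as the Gibbs expectation $\langle \mathcal{P}\rangle_0$ in the quasi-free state $\Gamma_0 := e^{-\beta T}/\Tr_\F e^{-\beta T}$, and to bound $1-\langle \mathcal{P}\rangle_0$ by a union bound followed by a second-moment estimate and Wick's theorem. Writing $\mathcal{P}=\prod_{x\in\Lambda_\ell}p_x$ with $p_x:=\mathbf{1}(n_x\leq 1)$ a product of commuting projections, one has the operator inequality
\[
1-\mathcal{P} \;\leq\; \sum_{x\in\Lambda_\ell}(1-p_x) \;=\; \sum_{x\in\Lambda_\ell}\mathbf{1}(n_x\geq 2) \;\leq\; \tfrac{1}{2}\sum_{x\in\Lambda_\ell} n_x(n_x-1) \;=\; \tfrac{1}{2}\sum_{x\in\Lambda_\ell} a_x^{\dagger\,2}a_x^2.
\]
Since $T$ is a quadratic, particle-number-conserving Hamiltonian, $\Gamma_0$ is gauge-invariant quasi-free; Wick's theorem together with $\langle a_x a_y\rangle_0=0$ then yields $\langle a_x^{\dagger\,2}a_x^2\rangle_0 = 2\langle n_x\rangle_0^2$, so that
\[
1-\langle \mathcal{P}\rangle_0 \;\leq\; \sum_{x\in\Lambda_\ell} \langle n_x\rangle_0^2.
\]

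To estimate this sum, I would diagonalise $T$ in the Dirichlet eigenbasis $\{\phi_p\}_{p\in\Lambda_\ell^{*D}}$ of \eqref{epsd}, giving $\langle n_x\rangle_0 = \sum_p |\phi_p(x)|^2/(e^{\beta S\epsilon(p)}-1)$. The $L^\infty$ bound $\|\phi_p\|_\infty^2\leq 2^3/(\ell+1)^3$ (read off the explicit sine product) implies
\[
\sum_x \langle n_x\rangle_0^2 \;\leq\; \bigl(\max_x \langle n_x\rangle_0\bigr)\,\langle N\rangle_0 \;\leq\; \frac{8}{(\ell+1)^3}\,\langle N\rangle_0^2, \qquad \langle N\rangle_0 := \sum_{p\in\Lambda_\ell^{*D}}\frac{1}{e^{\beta S\epsilon(p)}-1}.
\]
Since $p\mapsto (e^{\beta S\epsilon(p)}-1)^{-1}$ is decreasing in each $|p^j|$ on $[0,\pi]$, a Riemann-sum comparison gives $\langle N\rangle_0 \leq ((\ell+1)/\pi)^3 \int_{[0,\pi]^3}(e^{\beta S\epsilon(p)}-1)^{-1}\,dp$. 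Applying the elementary inequality $\epsilon(p)\geq (4/\pi^2)|p|^2$ on $[0,\pi]^3$, extending to $[0,\infty)^3$, and rescaling $q = (2\sqrt{\beta S}/\pi)\,p$ reduces everything to the temperature-independent integral $\int_{[0,\infty)^3}(e^{|q|^2}-1)^{-1}dq = \tfrac{1}{8}\pi^{3/2}\zeta(3/2)$, evaluated by expanding $(e^s-1)^{-1}=\sum_{k\geq 1}e^{-ks}$ and doing Gaussian integrals. Substituting back and absorbing the $(\ell+1)/\ell$ factors yields the claim with $C_3$ of the stated form.

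None of the above is a genuine obstacle; the whole lemma is essentially a one-body computation. The one step worth flagging is the use of Wick's theorem in the first paragraph: it is precisely the quadratic-in-density estimate $\langle\mathbf{1}(n_x\geq 2)\rangle_0 \leq \langle n_x\rangle_0^2$, rather than the linear estimate $\langle\mathbf{1}(n_x\geq 2)\rangle_0\leq \langle n_x\rangle_0$ that Markov alone would provide, which produces the correct $(\ell/(\beta S))^3$ scaling on the right-hand side of \eqref{ppex} and hence the threshold $\beta S\gg\ell$ under which the trial state in Proposition~\ref{ub: pro} is useful.
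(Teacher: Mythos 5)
Your proof is correct and follows the paper's reduction to the quadratic-in-density bound $1-\langle\mathcal{P}\rangle_0 \leq \sum_x\langle n_x\rangle_0^2$ via the same operator inequality and Wick's theorem (the first half of your argument and the paper's coincide). Where you diverge is in the one-body estimate on $\langle n_x\rangle_0$. The paper does not diagonalise $T$ at all: it writes $\langle n_x\rangle_0 = \sum_{n\geq 1}e^{n\beta S\Delta^D}(x,x)$, uses the Feynman--Kac domination of the Dirichlet heat kernel by the full $\Z^3$ heat kernel $e^{t\Delta_{\Z^3}}(x,x)=e^{-6t}I_0(2t)^3$, and then invokes the Bessel bound $I_0(t)\leq 2e^t/\sqrt{\pi t}$ to sum the series and reach $\langle n_x\rangle_0\leq \tfrac{8}{(2\pi)^{3/2}}\zeta(3/2)(\beta S)^{-3/2}$ uniformly in $x$ and $\ell$. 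Your route instead uses the explicit Dirichlet eigenbasis, the $L^\infty$ bound $\|\phi_p\|_\infty^2\leq 8/(\ell+1)^3$, a Riemann-sum comparison, and $\epsilon(p)\geq(4/\pi^2)|p|^2$; it is more elementary but also more calculational, and it introduces $(\ell+1)$-dependent factors that then have to be absorbed. You also sharpen $\sum_x\langle n_x\rangle_0^2\leq(\max_x\langle n_x\rangle_0)\langle N\rangle_0$ rather than the paper's cruder $\leq\ell^3(\max_x\langle n_x\rangle_0)^2$, though this changes nothing essential. One small caveat: your claim that the argument gives ``$C_3$ of the stated form'' is a bit glib --- working through your constants yields a numerically different prefactor (your $\pi^{3/2}\zeta(3/2)/64$ bound on $\langle N\rangle_0/(\ell+1)^3$ produces a different coefficient than the paper's $8\pi^{-3}\zeta(3/2)^2$), but this is immaterial since only the $\ell^3/(\beta S)^3$ scaling is used downstream.
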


\begin{proof}
As in the proof of Lemma~\ref{lem:thp}, we write $\mathcal{P} = \prod_{x\in \Lambda_\ell} p_x$. Then
\begin{equation}\label{omp}
1-\mathcal{P}  \leq \sum_x (1-p_x) \leq \frac 12\sum_x  n_x (n_x-1) = \frac 12 \sum_x a^\dagger_x a^\dagger_x a_x a_x \,.
\end{equation}
Wick's rule for Gaussian states therefore implies that 
\begin{equation}
\frac{ \Tr_\F \mathcal{P} e^{-\beta T}}{\Tr_\F e^{-\beta T} } \geq 1- \frac 12 \sum_{x\in \Lambda_\ell} \frac{ \Tr_\F a^\dagger_x a^\dagger_x a_x a_x e^{-\beta T}}{\Tr_\F e^{-\beta T} } = 1 -\sum_{x\in \Lambda_\ell} \left( \frac{ \Tr_\F n_x e^{-\beta T}}{\Tr_\F e^{-\beta T} }\right)^2 \,.
\end{equation} 
Moreover,
\begin{equation}
 \frac{ \Tr_\F n_x e^{-\beta T}}{\Tr_\F e^{-\beta T}} = \frac 1{ e^{-\beta S \Delta^D} -1 }(x,x) = \sum_{n\geq 1}  e^{n\beta S \Delta^D}(x,x) \,.
 \end{equation}
It is well known that the heat kernel of the Dirichlet Laplacian $\Delta^D$ is pointwise bounded from above by the one of the Laplacian $\Delta_{\Z^3}$ on all of $\Z^3$; this follows, e.g., from the Feynman-Kac formula. The latter equals (see, e.g., \cite{CY})
 \begin{equation}
 e^{t \Delta_{\Z^3}}(x,x) = e^{-6t} I_0(2t)^3
 \end{equation}
 on the diagonal, with $I_0$ a modified Bessel function (see \cite{GR} or Eq.~(\ref{def:is}) below for a definition). As explained in (\ref{i0e}) below, $I_0(t) \leq  2e^t/\sqrt{\pi t}$, and thus 
  \begin{equation}
 \sum_{n\geq 1}  e^{n\beta S \Delta^D}(x,x)  \leq  \frac 8{(2\pi)^{3/2} } \frac 1 {(\beta S)^{3/2}}  \zeta(3/2)\,. 
 \end{equation}
 In particular, we obtain the bound (\ref{ppex}). 
\end{proof}

By using Wick's rule in the same way as in the proof of Lemma~\ref{lem:ppex}, and following the same estimates, we have, for $x\neq y$, 
\begin{align}\nonumber
\frac { \tr_\F n_x n_y e^{-\beta T}}{\Tr_\F e^{-\beta T}} & =   \frac { \tr_\F n_x e^{-\beta T}}{\Tr_\F e^{-\beta T}} \frac { \tr_\F  n_y e^{-\beta T}}{\Tr e^{-\beta T}} + \left(  \frac { \tr_\F a^\dagger_x a_y e^{-\beta T}}{\Tr_\F e^{-\beta T}}\right)^2 \\ & \leq 2  \frac { \tr_\F n_x e^{-\beta T}}{\Tr_\F e^{-\beta T}} \frac { \tr_\F  n_y e^{-\beta T}}{\Tr_\F e^{-\beta T}} \leq \frac{2C_3}{(\beta S)^3}\,,
\end{align}
where we used the Cauchy-Schwarz Inequality in the second step. In combination with Lemma~\ref{lem:thp} and~\ref{lem:ppex}, we have thus shown that 
\begin{equation}
\tr H_{\Lambda_\ell}^D \Gamma \leq \frac { \tr_\F T e^{-\beta T}}{\Tr_\F \mathcal{P} e^{-\beta T}}  + 12 (2S -1)\left(1- \frac{C_3 \ell^3}{(\beta S)^3} \right)^{-1} \frac {C_3\ell^3}{(\beta S)^3}\,,
\end{equation}
where we bounded the number of nearest neighbor pairs in $\Lambda_\ell$ by $6 \ell^3$.

It remains to give a bound on the entropy of $\Gamma$.

\begin{lemma}\label{lem:ent}
For some constant $C>0$ and $\ell \geq (\beta S)^{1/2} $
\begin{equation}\label{entb}
\frac 1 \beta \Tr \Gamma \ln \Gamma \leq - \frac 1 \beta \ln \tr_\F \mathcal{P} e^{-\beta T} -  \frac{\Tr_\F T e^{-\beta T} }{\Tr_\F \mathcal{P} e^{-\beta T} } + \frac {C}{(\beta S)^{9/2}} \frac {\ell^6}\beta   \frac{\Tr_\F e^{-\beta T} }{\Tr_\F \mathcal{P} e^{-\beta T} }\,.  
\end{equation}
\end{lemma}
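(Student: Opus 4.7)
The plan is to apply operator Jensen's inequality to pass from $\Tr\Gamma\ln\Gamma$ to a trace involving $Te^{-\beta T}$, and then to bound the remaining error $\Tr QTe^{-\beta T}$ (with $Q=1-\mathcal{P}$) using Wick's theorem in the quasi-free Gibbs state of $T$. Set $A = \mathcal{P} e^{-\beta T}\mathcal{P}$ and $Z = \Tr_\F A$, so that $\Gamma = A/Z$ and $\Tr\Gamma\ln\Gamma = Z^{-1}\Tr A\ln A - \ln Z$. Viewing $\mathcal{P} = VV^*$ for the isometric inclusion $V$ of $\mathrm{range}(\mathcal{P})$ into $\F$ and using that $x\mapsto x\ln x$ is operator convex on $[0,\infty)$, operator Jensen's inequality gives $(V^*e^{-\beta T}V)\ln(V^*e^{-\beta T}V) \leq V^*(e^{-\beta T}\ln e^{-\beta T})V = -\beta V^*(Te^{-\beta T})V$; taking traces yields $\Tr A\ln A \leq -\beta\Tr\mathcal{P} Te^{-\beta T}$. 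Writing $\mathcal{P} = 1-Q$ and rearranging,
\[
\frac{1}{\beta}\Tr\Gamma\ln\Gamma \leq -\frac{1}{\beta}\ln Z - \frac{\Tr Te^{-\beta T}}{Z} + \frac{\Tr QTe^{-\beta T}}{Z},
\]
so the lemma reduces to the quantitative estimate $\Tr QTe^{-\beta T} \leq C\ell^6 Z_0/[\beta(\beta S)^{9/2}]$ with $Z_0 = \Tr e^{-\beta T}$.

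To control $\Tr QTe^{-\beta T}$, I would use the operator inequality $Q \leq \tfrac{1}{2}\sum_x a^\dagger_x a^\dagger_x a_x a_x$ (already established in the proof of Lemma~\ref{lem:ppex}) together with positivity of $Te^{-\beta T}$, reducing matters to $\sum_x\langle a^\dagger_x a^\dagger_x a_x a_x\, T\rangle_0$, the expectation in the quasi-free Gibbs state. After normal-reordering via $a_x a_x a^\dagger_u = a^\dagger_u a_x a_x + 2\delta_{x,u}a_x$ and applying Wick's theorem with two-point function $G(x,y) = [(e^{\beta K}-1)^{-1}](x,y)$ and $K = S(-\Delta^D)$, the sum against $T = \sum_{u,v}K(u,v)a^\dagger_u a_v$ decomposes as
\[
\sum_x\langle a^\dagger_x a^\dagger_x a_x a_x\, T\rangle_0 = 2\langle T\rangle_0\sum_x G(x,x)^2 + 4\sum_x G(x,x)[GKG](x,x) + 4\sum_x G(x,x)[KG](x,x),
\]
with the third (connected) term arising from the $\delta_{x,u}$ commutator. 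Each sum is then controlled by H\"older's inequality as $\|G^{\mathrm{diag}}\|_\infty$ times one of the traces $N_0 = \Tr G$, $\Tr(GKG) = \sum_p \eta_p N_p^2$, or $\Tr(KG) = \langle T\rangle_0$, with $N_p = 1/(e^{\beta\eta_p}-1)$ and $\eta_p = S\epsilon(p)$.

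The diagonal bound $\|G^{\mathrm{diag}}\|_\infty \leq C(\beta S)^{-3/2}$ is precisely the estimate already used in the proof of Lemma~\ref{lem:ppex}, and the auxiliary bounds $N_0 \leq C\ell^3(\beta S)^{-3/2}$ and $\langle T\rangle_0,\ \sum_p \eta_p N_p^2 \leq C\ell^3/[\beta(\beta S)^{3/2}]$ follow from the same Feynman--Kac heat-kernel bound together with the elementary inequality $\eta e^{-\lambda\eta} \leq 2(\lambda e)^{-1}e^{-\lambda\eta/2}$ (which absorbs the energy weight into a heat kernel at $\lambda/2$) and the series representations $N_p^m = \sum_{k\geq m}\binom{k-1}{m-1}e^{-k\beta\eta_p}$ (which reduce everything to the $k$-dependent heat kernel at inverse temperature $k\beta$). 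Substituting, each of the three Wick contributions is bounded by $CZ_0\ell^6/[\beta(\beta S)^{9/2}]$, \emph{provided} $\ell\geq(\beta S)^{1/2}$: this hypothesis ensures $\ell^3\geq(\beta S)^{3/2}$, precisely what is needed to absorb the two connected terms (each of order $\ell^3/[\beta(\beta S)^3]$) into the disconnected one (of order $\ell^6/[\beta(\beta S)^{9/2}]$). The main obstacle I anticipate is the careful bookkeeping of the Wick expansion---in particular, the $\delta_{x,u}$ contribution from the canonical commutator is easy to overlook---together with tracking constants uniformly across the three contributions.
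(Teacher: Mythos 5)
Your proof is correct and follows essentially the same route as the paper. The only deviation is in the entropy step: you invoke the Hansen--Pedersen operator Jensen inequality for the operator convex function $x\mapsto x\ln x$ with the isometry onto $\mathrm{range}(\mathcal{P})$, whereas the paper uses that $\mathcal{P}e^{-\beta T}\mathcal{P}$ and $e^{-\beta T/2}\mathcal{P}e^{-\beta T/2}$ are isospectral and then applies operator monotonicity of $\ln$; both are standard, interchangeable devices yielding the identical bound $\Tr A\ln A\le-\beta\Tr\mathcal{P}Te^{-\beta T}$. The remainder---bounding $1-\mathcal{P}$ by $\tfrac12\sum_x n_x(n_x-1)$, the Wick expansion into three contributions (your $\delta_{x,u}$ commutator term is exactly the $[KG](x,x)$ part hidden inside the paper's $\sinh^{-2}$ factor, since $(\sinh\tfrac12\beta S\epsilon_p)^{-2}=4N_p(N_p+1)$), the H\"older estimates with the uniform diagonal bound on $G$, and the use of $\ell\ge(\beta S)^{1/2}$ to absorb the two connected terms into the disconnected one---matches the paper's calculation.
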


\begin{proof}
We have 
\begin{equation}
\Tr \Gamma \ln \Gamma = - \ln \tr_\F \mathcal{P} e^{-\beta T} + \frac 1{\Tr_\F \mathcal{P} e^{-\beta T} } \Tr_\F \mathcal{P} e^{-\beta T} \mathcal{P} \ln \mathcal{P} e^{-\beta T} \mathcal{P}\,.
\end{equation}
Using the operator monotonicity of the logarithm, as well as the fact that the spectrum of $\mathcal{P} e^{-\beta T} \mathcal{P}$ and $e^{-\beta T/2} \mathcal{P} e^{-\beta T/2}$ agree, we can bound
\begin{align}\nonumber
 \Tr_\F \mathcal{P} e^{-\beta T} \mathcal{P} \ln \mathcal{P} e^{-\beta T} \mathcal{P} & =  \Tr_\F e^{-\beta T/2} \mathcal{P} e^{-\beta T/2} \ln e^{-\beta T/2} \mathcal{P} e^{-\beta T/2} \\ & \leq  \Tr_\F  e^{-\beta T/2} \mathcal{P} e^{-\beta T/2} \ln e^{-\beta T} = - \beta  \Tr_\F \mathcal{P} e^{-\beta T} T \,.
\end{align}
Hence
\begin{equation}
\Tr \Gamma \ln \Gamma \leq - \ln \tr_\F \mathcal{P} e^{-\beta T} - \beta \frac{\Tr_\F T e^{-\beta T} }{\Tr_\F \mathcal{P} e^{-\beta T} } + \beta  \frac{\Tr_\F T (1-\mathcal{P}) e^{-\beta T} }{\Tr_\F \mathcal{P} e^{-\beta T} } \,.
\end{equation}
In the last term, we can bound  $1-\mathcal{P}$ as in (\ref{omp}), and evaluate the resulting expression using Wick's rule. With $\phi_p$ the eigenfunctions of the Dirichlet Laplacian, displayed below Eq.~(\ref{epsd}), we obtain
\begin{align}\nonumber
\frac{ \Tr_\F T n_x(n_x-1) e^{-\beta T} }{\Tr_\F e^{-\beta T} } & =  \left(\frac{ \Tr_\F n_x e^{-\beta T} }{\Tr_\F e^{-\beta T} } \right)^2  \sum_{p \in \Lambda_\ell^{*D}}  \frac {2S \epsilon(p) }{e^{\beta  S \epsilon(p)} -1}   \\  & \quad + \frac{ \Tr_\F  n_x e^{-\beta T} }{\Tr_\F e^{-\beta T} }    \sum_{p \in \Lambda_\ell^{*D}}  \frac { S\epsilon(p)  |\phi_p(x)|^2 }{ \left( \sinh \tfrac 12 \beta S \epsilon(p) \right)^2 } \,.
\end{align}
The expectation value of $n_x$ can be bounded independently of $x$ by $\sqrt{C_3} (\beta S)^{-3/2}$, as in the proof of Lemma~\ref{lem:ppex}. When summing over $x$, we can use the normalization $\sum_x |\varphi_p(x)|^2 =1$. The sums over $p$ can be bounded by the corresponding integrals, which leads to the bound (\ref{entb}). 
\end{proof}

In combination, Lemmas~\ref{lem:thp}, \ref{lem:ppex} and \ref{lem:ent} imply that 
\begin{align}\nonumber
f^D(S,\beta,\Lambda_\ell)  & \leq - \frac 1 {\beta \ell^3} \ln \tr_\F e^{-\beta T}  - \frac 1 {\beta\ell^3} \ln\left(1- \frac{C_3 \ell^3}{(\beta S)^3} \right) \\ & \quad  + C \left(1- \frac{C_3 \ell^3}{(\beta S)^3} \right)^{-1} \left(\frac {\ell^3}{\beta (\beta S)^{9/2}} + \frac { 2S -1}{(\beta S)^3}\right)\,.
\end{align}
The first term on the right side equals
\begin{equation}
 - \frac 1 {\beta \ell^3} \ln \tr_\F e^{-\beta T} = \frac 1 {\beta\ell^3} \sum_{p\in \Lambda_\ell^{*D} } \ln ( 1- e^{-\beta S \epsilon(p)} )\,.
\end{equation}
By viewing the sum as a Riemann sum for the corresponding integral, it is not difficult to see that 
\begin{equation}\label{up:rie} 
 \frac 1 {\beta\ell^3} \sum_{p\in \Lambda_\ell^{*D} } \ln ( 1- e^{-\beta S \epsilon(p)} ) \leq  \frac 1 {\beta (2\pi)^3} \int_{[-\pi,\pi]^3} \ln ( 1- e^{-\beta S \epsilon(p)} ) + \frac C {S \beta^2 \ell } 
\end{equation}
for some constant $C>0$ (compare, e.g., with \cite[Lemma~4]{RSft}). We can further use that  $\epsilon(p)\leq |p|^2$ and find that
\begin{align}\nonumber
\frac 1 {\beta (2\pi)^3} \int_{[-\pi,\pi]^3} \ln ( 1- e^{-\beta S \epsilon(p)} ) & \leq \frac 1 {\beta (2\pi)^3} \int_{\R^3} \ln ( 1- e^{-\beta S |p|^2}  )+ \frac C {\beta (\beta S)^\alpha }  \\
& =C_0 S^{-3/2} \beta^{-5/2} + \frac C {\beta (\beta S)^\alpha }  \label{4.28}
\end{align}
for some $C>0$, $C_0$ defined in (\ref{def:c0}), and $\alpha>0$ arbitrary. For $\beta S \gg \ell \gg (\beta S)^{1/2}$, all the error terms are small compared to the main term. The optimal choice of $\ell$ is $\ell \sim (\beta S)^{7/8}$, which leads to the desired upper bound stated in (\ref{fe ub asympt}).

\section{Proof of Theorem~\ref{thm:f}; Lower Bound}\label{sec:low}

In this section we will prove the following lower bound on the free energy of the Heisenberg ferromagnet.

\begin{proposition}\label{lb: pro}
Let $C_0$ be the constant given in \eqref{def:c0}. {Given $\alpha>0$}, if $ \beta S \to \infty $ {with $ \beta S \ge S^\alpha$}, we have 
\begin{equation}\label{fe lb asympt}
f(S,\beta) \geq C_0 S^{-3/2} \beta^{-5/2} \left(1 + \OO( (\beta S)^{-\kappa}) \right)
\end{equation}
for any $\kappa<1/40$.
\end{proposition}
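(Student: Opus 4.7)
My strategy follows the outline at the end of Section~\ref{sec:model}. First I would use subadditivity: since each bond operator $S^2-\spinv_x\cdot\spinv_y$ is non-negative, discarding the bonds connecting a tiling of $\Lambda$ by cubes of side $\ell$ gives $H_\Lambda\ge\bigoplus_j H_{\Lambda_\ell^{(j)}}$ and therefore $f(S,\beta)\ge f(S,\beta,\Lambda_\ell)$ for every $\ell$. I would take $\ell\sim(\beta S)^{1/2+\epsilon}$, slightly above the thermal wavelength, with a small $\epsilon>0$ to be optimized. Next, rotational invariance is used to restrict to a subspace that sees each $SU(2)$ multiplet only once: since every eigenvalue of $H_{\Lambda_\ell}$ is $(2S^T+1)$-fold degenerate,
\begin{equation*}
\Tr_{\HH_{\Lambda_\ell}} e^{-\beta H_{\Lambda_\ell}}\le (2S\ell^3+1)\,\Tr_{\HH_{\mathrm{lw}}}e^{-\beta H_{\Lambda_\ell}},
\end{equation*}
with $\HH_{\mathrm{lw}}=\ker S^-_{\mathrm{tot}}$ the space of lowest-weight vectors; the prefactor costs only a harmless $\beta^{-1}\ell^{-3}\ln(S\ell^3)$ in the free energy. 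In the Holstein--Primakoff picture, a lowest-weight state of multiplet spin $S^T$ is an $N$-boson state with $N=S\ell^3-S^T$, so the relevant multiplets at low temperature are exactly those with few bosons.

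Next I would split the bosonic Hamiltonian (\ref{hamb}) as $H_{\Lambda_\ell}=T+V$, with $T$ the free-boson quadratic piece and $V$ collecting the attractive nearest-neighbor interaction $-\tfrac{1}{S}\sum_{\langle x,y\rangle}n_xn_y$ together with all higher-order contributions from the Taylor expansion of the square-root factors. A sharpened form of Proposition~\ref{ham lb: pro}, namely an excitation bound of type $E\ge CS\ell^{-2}(S\ell^3-S^T)$, lets me discard multiplets with $N$ exceeding a threshold $N_{\max}$, the discarded contribution being exponentially small relative to the leading $\beta^{-5/2}$ term; this also automatically enforces the hard-core constraint $n_x\le 2S$. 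On the surviving states, Peierls's inequality applied in the eigenbasis $\{\phi_n\}$ of $H$ yields $\sum_n e^{-\beta\langle T\rangle_n}\le \Tr e^{-\beta T}$, and together with $\langle T\rangle_n=E_n-\langle V\rangle_n$ this gives
\begin{equation*}
\Tr e^{-\beta H}=\sum_n e^{-\beta E_n}\le \exp\!\bigl(\beta\max_n|\langle V\rangle_n|\bigr)\,\Tr e^{-\beta T}.
\end{equation*}
Controlling $\max_n|\langle V\rangle_n|$ uniformly on the surviving eigenstates is exactly where Theorem~\ref{thm:rho} enters: the attractive contribution is bounded by $\tfrac{6\ell^3}{S}\|\rho\|_\infty$, and Theorem~\ref{thm:rho} gives $\|\rho\|_\infty\le C\|\rho\|_1\max\{S^{-3}E^3,\ell^{-6}\}$, which is small when $E\lesssim S\ell^{-2}N_{\max}$ and $\|\rho\|_1\lesssim N_{\max}^2$; the higher-order contributions in $V$ are handled similarly, with additional factors of $N/S$ per extra pair of creation/annihilation operators.

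The free-boson trace $\Tr e^{-\beta T}$ on $\HH_{\mathrm{lw}}$ is then evaluated as in Section~\ref{sec:up}, comparing the momentum sum with $\frac{\ell^3}{(2\pi)^3}\int\ln(1-e^{-\beta S\epsilon(p)})\,dp$, which after the usual Riemann-sum and dispersion-relation estimates produces the main term $C_0 S^{-3/2}\beta^{-5/2}\ell^3$ plus controllable errors. Collecting everything and optimizing $\ell$, $N_{\max}$ and $\epsilon$ should give (\ref{fe lb asympt}) with any $\kappa<1/40$. The main obstacle is the uniform control of $\max_n|\langle V\rangle_n|$: a naive operator lower bound $V\ge-\mathrm{const}$ is hopelessly weak, because the attractive nearest-neighbor interaction does form two-particle bound states at higher energy (cf.\ \cite{GS} and references therein). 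Theorem~\ref{thm:rho} is precisely the tool that rules out such clustering at low energy by forcing the two-particle density to be essentially uniform, and this is the mechanism that makes the attractive interaction subleading in the dilute (low-temperature) limit.
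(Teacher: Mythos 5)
Your strategy matches the paper's: subadditivity, restriction to a lowest-weight sector using the $SU(2)$ degeneracy, the decomposition $H=T+V$ (the paper writes $H=T-K$), a Peierls--Bogoliubov-type comparison with the free-boson trace, Theorem~\ref{thm:rho} to control the interaction, and optimization of $\ell$. However, there are three places where the sketch glosses over steps that the paper handles with some care, and at least one of them is a genuine logical gap rather than a routine technicality.

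First, the step where you ``discard multiplets with $N$ exceeding a threshold $N_{\max}$, the discarded contribution being exponentially small relative to the leading $\beta^{-5/2}$ term'' is circular as stated: to know that the discarded piece is small \emph{relative} to the leading term, you need an a priori lower bound on $\Tr e^{-\beta H_{\Lambda_\ell}}$, i.e., an a priori lower bound on $f$, which is precisely what you are trying to prove. The paper resolves this with a two-stage bootstrap: it first establishes a crude preliminary lower bound $f(S,\beta,\Lambda_\ell)\geq -CS(\ln\beta S/\beta S)^{5/2}$ (Lemma~\ref{lem:prel}), using Proposition~\ref{ham lb: pro} together with a counting bound on the dimension of the $S^T=S\ell^3-s$ sector. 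That preliminary bound is then fed back in to define the energy cutoff $E_0=-\ell^3 f(S,\beta/2,\Lambda_\ell)$ and to show $\Tr(1-P_{E_0})e^{-\beta H_{\Lambda_\ell}}\leq 1$. Without some version of this preliminary estimate, the cutoff is not justified.

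Second, the object ``$\Tr e^{-\beta T}$ on $\HH_{\mathrm{lw}}$'' is problematic. The lowest-weight subspace is not invariant under the quadratic $T$, so the compression of $T$ to $\HH_{\mathrm{lw}}$ is not the free-boson Hamiltonian; and if you trace $e^{-\beta T}$ over the full Fock space instead, the unconstrained $p=0$ mode with $\epsilon(0)=0$ makes the trace diverge. The paper sidesteps both issues by passing, via the Gibbs variational principle, from the projector $Q_{E_0}=P_{E_0}P^3$ to the projector $\mathcal{Q}_{N_0}$ onto particle number $\leq N_0$ (which dominates $Q_{E_0}$ and commutes with $T_0$); the $p=0$ mode then contributes only the benign factor $(N_0+1)$. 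You do introduce a particle-number cutoff $N_{\max}$, so you likely have this in mind, but the inequality $\sum_n e^{-\beta\langle T\rangle_n}\leq\Tr e^{-\beta T}$ as you wrote it is not finite and needs to be replaced by a Peierls inequality on the compressed operator $\mathcal{Q}_{N_0}T_0\mathcal{Q}_{N_0}$.

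Third, the claim that the $N_{\max}$ cutoff ``automatically enforces the hard-core constraint $n_x\leq 2S$'' is incorrect. With the paper's choices, $N_0\sim(\beta S)^{1/8}(\ln\beta S)^{5/2}$, which grows unboundedly and is much larger than $2S$, so a global particle-number bound does not enforce a per-site occupation bound. The paper instead builds the projection $\mathcal{P}_S$ directly into $T$ (equation~(\ref{def:T})) and drops it at the appropriate point in Lemma~\ref{lem:K}. This is a technical adjustment rather than a conceptual difficulty, but the statement as you gave it would not survive scrutiny. Apart from these points, the role you assign to Theorem~\ref{thm:rho} is exactly the one it plays in the paper, and your handling of the higher-order square-root corrections is in the same spirit as Lemma~\ref{lem:K}, though the latter gives a cleaner operator bound purely in terms of $n_xn_y$ and $n_x(n_x-1)$.
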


Let again denote $\Lambda_L = [0,L)^3\cap \Z^3$  a cube with $L^3$ sites, and let $L = k \ell$ for some positive integers $k$ and $\ell$. We can decompose $\Lambda_L$ into $k^3$ disjoint cubes, all of which are translations of $\Lambda_\ell$. By simply dropping the terms in the Hamiltonian (\ref{heisenberg ham 1}) corresponding to pairs of nearest neighbor sites in different cubes, we obtain the lower bound
\begin{equation}\label{subadd}
f(S,\beta,\Lambda_L) \geq f(S,\beta,\Lambda_\ell)\,.
\end{equation}
By sending $k\to \infty$ at fixed $\ell$, we thus have 
\begin{equation}
f(S,\beta) \geq f(S,\beta,\Lambda_\ell)
\end{equation}
for the free energy in the thermodynamic limit.

The Hamiltonian (\ref{heisenberg ham 1}) commutes with the total spin operators $\sum_{x\in \Lambda} \spinv_x$, and hence we can label all eigenstates by the value of the corresponding square of the total spin, i.e., by the integer or half-integer eigenvalues of $S^T$, where 
\begin{equation}\label{def:st}
S^T (S^T+1) = \left | \sum_{x\in \Lambda} \spinv_x \right|^2\,.
\end{equation}
The following proposition shows that $S^T$ is close to its maximal value $S \ell^3$ at low energy.

\begin{proposition}\label{ham lb: pro}
There exists a positive constant $ C > 0 $ such that
\begin{equation}\label{ham lb}
H_{\Lambda_\ell} \ \geq C \frac S{\ell^{2}} \left( S \ell^3 - S^T \right)\,.
\end{equation}
\end{proposition}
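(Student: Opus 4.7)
The natural strategy exploits $SU(2)$ rotational invariance together with the bosonic representation. Since $H_{\Lambda_\ell}$ commutes with $\vec S^T = \sum_{x\in \Lambda_\ell}\vec S_x$, the Hilbert space decomposes into $SU(2)$-irreducibles labeled by $S^T$, and within each irreducible all magnetic substates share the same spectrum under $H_{\Lambda_\ell}$. It therefore suffices to establish \eqref{ham lb} on the lowest-weight state $|\psi\rangle$ of each irreducible, i.e., the joint eigenstate of $S^3_T$ with eigenvalue $-S^T$. In the Holstein--Primakoff language of Section~\ref{sec:bosons}, such a state has exactly $N = S\ell^3 - S^T$ bosons and is annihilated by $S^-_T = \sum_x \sqrt{2S-n_x}\,a_x$; the claim then becomes $\langle\psi|H_{\Lambda_\ell}|\psi\rangle \geq (CS/\ell^2)\,N\,\|\psi\|^2$ on this lowest-weight $N$-boson subspace.

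At the heart of the proof lies a kinetic estimate. I would isolate the free-hopping piece $T := S\sum_{\langle x,y\rangle}(n_x + n_y - a^\dagger_x a_y - a^\dagger_y a_x) = S(-\Delta^{\mathrm{open}})$, the second-quantized open Laplacian on $\Lambda_\ell$, and use that its single-particle spectrum has a unique zero eigenvalue (the constant eigenfunction) separated from the rest by a gap $\lambda_1 \geq c/\ell^2$. This yields $T \geq (cS/\ell^2)(N - \hat n_0)$, where $\hat n_0 = a_0^\dagger a_0$ and $a_0 = \ell^{-3/2}\sum_x a_x$ is the zero-mode operator. The lowest-weight condition nearly eliminates the zero-mode occupation: on single-occupancy configurations $\sqrt{2S-n_x}$ takes the constant value $\sqrt{2S}$ on the image of $a_x$, so $S^-_T$ reduces to $\sqrt{2S\ell^3}\,a_0$ and $S^-_T|\psi\rangle = 0$ forces $\hat n_0|\psi\rangle = 0$ identically. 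For states with higher on-site occupation, the deviation is controlled by the hard-core bound $n_x \leq 2S$ combined with $\sum_x n_x = N$, leading to a quantitative estimate of the form $\langle \hat n_0\rangle \leq N/2$, and hence $\langle T \rangle \geq (cS/2\ell^2)\,N$ on the lowest-weight sector.

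The main obstacle is to absorb the remainder $V := H_{\Lambda_\ell} - T$, which contains the square-root corrections $I - \sqrt{(1-n_x/2S)(1-n_y/2S)}$ and the attractive $-n_x n_y/S$ term. Crude bounds using $n_x\leq 2S$ and $\sum n_x = N$ give $|V| \leq C\,N$, so the kinetic estimate $(CS/\ell^2)\,N$ dominates whenever $N \lesssim S\ell$. The regime of larger $N$, where attractive multi-boson bound states could a priori compete with the free-kinetic energy, must be handled differently: either by a refined operator inequality exploiting the suppression of the zero mode in the lowest-weight sector, or by complementing the trivial bound $H_{\Lambda_\ell}\geq 0$ with a Poincar\'e-type comparison between the Dirichlet forms on the nearest-neighbor and the complete graph, recalling that the Heisenberg Hamiltonian on the complete graph has eigenvalue $\tfrac12(S\ell^3-S^T)(S\ell^3+S^T+1)$ in the total-spin-$S^T$ sector. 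The careful interplay between the lowest-weight constraint and the attractive nearest-neighbor interaction is the most delicate part of the argument.
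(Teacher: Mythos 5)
Your main line of attack (bosonic kinetic estimate plus zero-mode suppression on the lowest-weight sector) is genuinely different from the paper's proof, but it does not close. Two specific problems. First, you assert without proof that $\langle \hat n_0\rangle\leq N/2$ on lowest-weight states with multi-occupancy; the identity $S^-_T=\sqrt{2S\ell^3}\,a_0$ only holds on single-occupancy configurations, and for $S>1/2$ the operator $\sum_x\sqrt{2S-n_x}\,a_x$ is not close to $a_0$ in any operator norm, so the zero-mode argument needs a real proof that you have not given. Second, the arithmetic in the remainder estimate is off: with $\langle T\rangle\gtrsim (S/\ell^2)N$ and $|V|\lesssim N$, positivity of $\langle T\rangle-|V|$ requires $\ell\lesssim\sqrt S$, not $N\lesssim S\ell$; in the regime the paper actually needs (it later takes $\ell\sim(\beta S)^{21/40}$, far larger than $\sqrt S$) your crude bound gives nothing. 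So your approach would at best work on very small boxes, and you explicitly leave the high-occupation regime open.

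The remark you make in passing at the end --- a Poincar\'e-type comparison between the nearest-neighbor Dirichlet form and the complete-graph one, using that $\sum_{x\ne y}(S^2-\spinv_x\cdot\spinv_y)=S\ell^3(S\ell^3+1)-S^T(S^T+1)$ --- is in fact precisely the paper's proof, and it suffices by itself. The paper establishes the elementary operator inequality $(S^2-\spinv_x\cdot\spinv_y)+(S^2-\spinv_y\cdot\spinv_z)\ge\tfrac12(S^2-\spinv_x\cdot\spinv_z)$ for distinct sites, chains it along shortest lattice paths to bound each $S^2-\spinv_x\cdot\spinv_y$ by $2\|x-y\|_1$ times the nearest-neighbor energy along the path, sums over all pairs, and controls the overcounting by showing each edge lies on at most $C\ell^4$ of the chosen paths. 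This yields $S\ell^3(S\ell^3-S^T)\le C\ell^5 H_{\Lambda_\ell}$ uniformly in $S^T$, works entirely in the spin language without Holstein--Primakoff, and requires neither a restriction to lowest-weight states nor a case distinction in $N$. You identified the right tool but set it aside as a fallback; it should have been the whole argument.
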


Note that the lower bound (\ref{ham lb}) implies, in particular, that the gap in the spectrum of $H_{\Lambda_\ell}$ above zero is at least as big as $CS \ell^{-2}$. Except for the value of the constant, this bound is sharp, since one can easily obtain an upper bound of the form $2S(1-\cos(\pi/\ell))\approx \pi^2 S \ell^{-2}$. This latter expression is actually known to be the exact gap in the spin $1/2$ case \cite{CLR} (see also \cite{gap1,gap2,gap3} for related results). 

\begin{proof}
The starting point is the simple inequality
\begin{equation}\label{ineq 1}
\left( S^2  - \spinv_{x} \cdot \spinv_{y} \right) + \left( S^2 - \spinv_{y} \cdot\spinv_{z} \right) \geq \frac{1}{2} \left( S^2 - \spinv_{x} \cdot \spinv_{z} \right)
\end{equation}
for distinct sites $x$, $y$ and $z$. To prove it, it is convenient to use the equivalent formulation
\begin{equation}
S^2 - \frac 12 S   - \spinv_{y} \cdot \left(  \spinv_{x} + \spinv_{z} \right) + \frac{1}{4} \left( \spinv_{x} + \spinv_{z} \right)^2 \geq 0\,.
\end{equation}
The eigenvalues of $( \spinv_{x} + \spinv_{z} )^2$ are given by $t(t+1)$, with $t \in\{ 0, 1, \dots, 2S\}$, and we have  $ - \spinv_{y} \cdot ( \spinv_{x} + \spinv_{z} ) \geq  - S  t$ in the subspace corresponding to $t$. It is thus sufficient to prove that 
\begin{equation}
S^2 - \frac 12 S - S t + \frac 14  t( t+1) \geq 0  \quad \forall t \in \{ 0, 1, \dots, 2S\}\,.
\end{equation}
The expression on the left side of  this inequality vanishes for $t=2S$ and $t=2S-1$, and since it is quadratic in $t$ this implies non-negativity for all the relevant $t$. This proves (\ref{ineq 1}). 

We claim that if we have a number $n+1$ of distinct sites $x_j$,  inequality (\ref{ineq 1}) implies that 
\begin{equation}\label{indi}
\sum_{j=1}^{n}  \left( S^2  - \spinv_{x_{j}} \cdot \spinv_{x_{j+1}} \right)  \geq \frac{1}{2n}  
 \left( S^2 - \spinv_{x_{1}} \cdot \spinv_{x_{n+1}} \right)\,.
\end{equation}
If $n=2^k$ for some $k\geq 1$, this follows immediately from a repeated application of (\ref{ineq 1}), even without the factor $2$ in the denominator on the right side. The result in the general case can then easily be obtained by induction, writing a general $n$ as a sum $n = \sum_{j=1}^m 2^{k_j}$ with $0\leq k_1 < k_2 < \dots < k_m$.

For any pair of distinct sites $(x,y)\in \Lambda_\ell\times \Lambda_\ell$, we choose a path $x_0,x_1,\dots,x_n$ in $\Lambda_\ell$, such that $x_0=x$, $x_n=y$, $|x_{j-1} - x_j|=1$ for all $1\leq j \leq n$, and $x_j \neq x_{k}$ for $k\neq j$. Then (\ref{indi}) implies that 
\begin{equation}\label{indi2}
S^2 - \spinv_{x} \cdot \spinv_{y} \leq 2 n  \sum_{j=1}^n \left( S^2 - \spinv_{x_{j-1}} \cdot \spinv_{x_j} \right)\,.
\end{equation}
We shall choose the path as short as possible, i.e., $n=\|x-y\|_1\leq 3 \ell$. There are many such paths, and we take one that is closest to the straight line connecting $x$ and $y$. Let us denote such a path by $\mathcal{C}_{x,y}$. We have 
\begin{align}\nonumber 
S  \ell^3 \left( S \ell^3 + 1 \right) - S^T(S^T + 1) & = \sum_{\underset{x \neq y}{x, y \in \Lambda_\ell}} \left(  S^2 - \spinv_{x} \cdot \spinv_{y} \right) 	\\ \nonumber 
& \leq 2  \sum_{\underset{x \neq y}{x, y \in \Lambda_\ell}} \| x-y\|_1 \sum_{(x_i,x_{i+1})\in \mathcal{C}_{x,y}}  \left( S^2 - \spinv_{x_i} \spinv_{x_{i+1}} \right) \\ &
\leq 6 \ell  \sum_{\underset{|x-y|=1}{x, y \in \Lambda_\ell}}  \left(  S^2 - \spinv_{x} \cdot \spinv_{y} \right)  N_{x,y}\,, \label{nxy}
\end{align}
where $N_{x,y}$ denotes the number of  paths among all the $\mathcal{C}_{z,z'}$, $z,z'\in \Lambda_\ell$, that contain the step $x\to y$. By construction, the edge from $x$ to $y$ can be part of $\mathcal{C}_{z,z'}$ only if either $x$ or $y$ is within a distance $\mathcal{O}(1)$ from the line connecting $z$ and $z'$. For a given $z \neq x$, this will be the case for at most $C \ell^3 |x-z|^{-2}$ values of $z'$, which leads to the bound $N_{x,y} \leq C \ell^4$ for some $C>0$ for all nearest neighbor pairs $(x,y)$. By inserting this bound in (\ref{nxy}), we thus obtain 
\begin{align}\nonumber
S\ell^3 \left( S \ell^3 - S^T\right) & \leq S  \ell^3 \left( S \ell^3 + 1 \right) - S^T(S^T + 1) 
\\ & \leq 6 C \ell^5  \sum_{\underset{|x-y|=1}{x, y \in \Lambda_\ell}}  \left(  S^2 - \spinv_{x} \cdot \spinv_{y} \right)  = 12 C \ell^5 H_{\Lambda_\ell}\,.
\end{align}
This completes the proof of (\ref{ham lb}).
\end{proof}

With the aid of the bound (\ref{ham lb}) we shall now prove the following preliminary lower bound on the free energy.

\begin{lemma} \label{lem:prel}
For $\ell \geq (\beta S)^{1/2}$ and {$\beta S \ge S^\alpha$, with $\alpha>0$}, we have
\begin{equation}\label{prel lb 1}
f(S,\beta,\Lambda_\ell) \geq - C S  \left( \frac{\ln S\beta}{S\beta} \right)^{5/2}
\end{equation}
for some constant $ C{=C(\alpha)}>0$.
\end{lemma}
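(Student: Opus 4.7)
The plan is to combine Proposition~\ref{ham lb: pro} with the $SU(2)$-invariance of the Hamiltonian and a simple generating-function identity, after first reducing to an optimally chosen sub-box scale via the subadditivity~\eqref{subadd}. Concretely, I would pick $\ell_\ast\sim (\beta S/\ln(\beta S))^{1/2}$, which satisfies $\ell_\ast \leq (\beta S)^{1/2}\leq \ell$, and apply \eqref{subadd} to get $f(S,\beta,\Lambda_\ell)\geq f(S,\beta,\Lambda_{\ell_\ast})$ whenever $\ell_\ast\mid\ell$ (approximate divisibility is handled by the standard trick of dropping non-negative cross terms and bounding the leftover boundary sites trivially by $(2S+1)^{|\mathrm{bdry}|}$, which is of lower order for $\ell/\ell_\ast$ large).

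On $\Lambda_{\ell_\ast}$, Proposition~\ref{ham lb: pro} gives $H_{\Lambda_{\ell_\ast}}\geq \alpha_\ast(S\ell_\ast^3-S^T)$ with $\alpha_\ast = CS/\ell_\ast^2$. Since $H_{\Lambda_{\ell_\ast}}$ commutes with $(S^T)^2$, this exponentiates to $e^{-\beta H_{\Lambda_{\ell_\ast}}}\leq e^{-\beta\alpha_\ast(S\ell_\ast^3-S^T)}$ as operators. Decomposing $\HH_{\Lambda_{\ell_\ast}}$ into its spin-$S^T$ irreducible components with multiplicities $d(S^T)$, bounding $(2S^T+1)\leq 2S\ell_\ast^3+1$ and $d(S^T)\leq \dim V_{S^T}$ where $V_m=\ker(S^3_T-m)$, and using the $S^3_T\mapsto -S^3_T$ symmetry to identify $\dim V_{S\ell_\ast^3-k}$ with the number of boson configurations $(n_x)_{x\in\Lambda_{\ell_\ast}}$ satisfying $\sum_x n_x = k$ and $n_x\leq 2S$, the generating-function identity
\begin{equation*}
\sum_{k\geq 0}\dim V_{S\ell_\ast^3-k}\,e^{-\beta\alpha_\ast k} = \prod_{x\in\Lambda_{\ell_\ast}}\frac{1-e^{-(2S+1)\beta\alpha_\ast}}{1-e^{-\beta\alpha_\ast}} \leq (1-e^{-\beta\alpha_\ast})^{-\ell_\ast^3}
\end{equation*}
yields $\tr\,e^{-\beta H_{\Lambda_{\ell_\ast}}}\leq (2S\ell_\ast^3+1)(1-e^{-\beta\alpha_\ast})^{-\ell_\ast^3}$, and therefore
\begin{equation*}
f(S,\beta,\Lambda_{\ell_\ast}) \geq -\frac{\ln(2S\ell_\ast^3+1)}{\beta\ell_\ast^3}+\frac{1}{\beta}\ln(1-e^{-\beta\alpha_\ast}).
\end{equation*}
Tuning the constant in $\ell_\ast$ so that $\beta\alpha_\ast = \tfrac32\ln(\beta S)$, and hence $e^{-\beta\alpha_\ast} = (\beta S)^{-3/2}$, the second term is of order $(\beta S)^{-3/2}/\beta$ and the first of order $(\ln\beta S)^{5/2}/(\beta^{5/2}S^{3/2})$, matching $CS\bigl(\ln(S\beta)/(S\beta)\bigr)^{5/2}$.

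The main obstacle is the two-sided balance in the choice of $\ell_\ast$: the ``gap parameter'' $\beta\alpha_\ast$ must be large enough that the factor $(1-e^{-\beta\alpha_\ast})^{-\ell_\ast^3}$ stays controlled (forcing $\ell_\ast$ small), but $\ell_\ast$ cannot be too small or the multiplicity term $\ln(2S\ell_\ast^3+1)/(\beta\ell_\ast^3)$ dominates; the logarithmic sweet spot $\beta\alpha_\ast\sim \ln(\beta S)$ is exactly what produces the excess factor of $(\ln\beta S)^{5/2}$ relative to the sharp bound, and accounts for the ``rough'' nature of the estimate. A secondary technical point is the divisibility $\ell_\ast\mid\ell$, which (as noted above) is handled by a standard approximate tiling.
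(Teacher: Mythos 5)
Your proposal is correct and follows essentially the same route as the paper: subadditivity to pass to an optimally tuned sub-box, Proposition~\ref{ham lb: pro} to replace $e^{-\beta H}$ by $e^{-\beta\alpha_\ast(S\ell_\ast^3-S^T)}$, a multiplicity count over total-spin sectors, and the logarithmic choice of scale. The only (cosmetic) difference is that you sum the exact boson generating function $\prod_x\tfrac{1-e^{-(2S+1)\beta\alpha_\ast}}{1-e^{-\beta\alpha_\ast}}$ where the paper overcounts with a binomial $\binom{2S\ell^3}{s}$ and resums to $(1+e^{-\beta\alpha_\ast})^{2S\ell^3}$; both yield the same order and the same optimization.
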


\begin{proof}
The dimension of the subspace of $\HH_{\Lambda_\ell}=\bigotimes_{x\in \Lambda_\ell} \C^{2S+1}$ corresponding to $S^T = \ell^3 S - s$ is bounded from above by 
\begin{equation}
(2 \ell^3 S  +1) \binom{2S \ell^3}{s}\,.
\end{equation}
The factor $2\ell^3S+1$ is a bound on the number of different values of the $3$-component of the total spin, and the binomial factor comes from distributing the $s$ particles over $2S \ell^3$ slots, $2S$ for each site. 
Hence, from (\ref{ham lb}), 
\begin{align}\nonumber
\tr e^{-\beta H_{\Lambda_\ell}} &\leq   \tr e^{\beta C S \ell^{-2} (S^T - \ell^3 S)} \leq (2 \ell^3 S +1)\sum_{s=0}^{\lfloor S \ell^3 \rfloor}  \binom{2S \ell^3}{s} e^{-\beta C S \ell^{-2} s} \\ & \leq 
(2 \ell^3 S +1) \left( 1+  e^{-\beta C S \ell^{-2}} \right)^{2S \ell^3}\,.
\end{align}
The free energy is thus bounded from below as
\begin{align}\nonumber
f(S,\beta,\Lambda_\ell) & \geq - \frac {2S} \beta  \ln \left( 1+  e^{-\beta C S \ell^{-2}} \right) - \frac 1 {\beta \ell^3} \ln \left(2 \ell^3 S +1\right) \\ &\geq - \frac{2S}\beta   e^{-\beta C S \ell^{-2}}  - \frac 1{\beta \ell^3} \ln \left(2 \ell^3 S +1\right) \,.
\end{align}
For \begin{equation}
\ell = \ell_0 := ( \beta C S)^{1/2} \left(\ln \left( S(\beta C S)^{3/2}\right) \right)^{-1/2}
\end{equation}
this yields an expression of the desired form (\ref{prel lb 1}). For larger $\ell$, we can use the subadditivity (\ref{subadd}) to obtain the result in general. 
\end{proof} 

We now come to the main part of our lower bound on the free energy. The preliminary estimate \eqref{prel lb 1} allows us to restrict the computation of the partition function to the subspace of states with not too large energy. Let $P_E$ be the spectral projection of $H_{\Lambda_\ell}$ corresponding to energy $\leq E$. Then
\begin{equation}
\tr (1-P_E) e^{-\beta H_{\Lambda_\ell}} \leq e^{-\beta E/2} \Tr (1-P_E) e^{-\beta H_{\Lambda_\ell}/2} \leq e^{-\beta E/2} e^{-\beta \ell^3 f(S,\beta/2,\Lambda_\ell) /2} \,.
\end{equation}
In particular, with 
\begin{equation}\label{def:e0}
E = E_0 := - {\ell^3}  f(S,\beta/2,\Lambda_\ell) \,,
\end{equation}
we have
\begin{equation}\label{p1}
\tr (1-P_{E_0}) e^{-\beta H_{\Lambda_\ell}} \leq 1 \,.
\end{equation}
Note that Lemma~\ref{lem:prel} implies that 
\begin{equation}\label{upe0}
E_0 \leq C \ell^3 S^{-3/2} (\beta^{-1} \ln S\beta)^{5/2} \qquad \text{for $\ell \geq ( \beta S)^{1/2}$.}
\end{equation} 

For the part of the spectrum corresponding to energy $\leq E_0$, we decompose the Hilbert space into sectors of total spin $S^T$, defined in (\ref{def:st}). For given $S^T$, every eigenvalue of $H_{\Lambda_\ell}$ is $(2S^T+1)$-fold degenerate, corresponding to the different values $-S^T, S^T+1, \dots, S^T$ the third component of the total spin,  $\sum_{x\in\Lambda_\ell} S_x^3$,  can take. We can thus restrict our attention to the eigenstates for which $\sum_{x\in\Lambda_\ell} S_x^3 = -S^T$, taking the degeneracy factor into account. That is, with $P^3$ denoting the projection onto the subspace of our Hilbert space corresponding to $\sum_{x\in\Lambda_\ell} S_x^3 = -S^T$, we have
\begin{equation}
\Tr g (H_{\Lambda_\ell}) = \Tr\, (2 S^T +1) P^3 g(H_{\Lambda_\ell})
\end{equation}
for any function $g$. In particular,
\begin{equation}\label{p2}
\tr P_{E_0} e^{-\beta H_{\Lambda_\ell}} = \Tr P_{E_0} (2 S^T +1) P^3 e^{-\beta H_{\Lambda_\ell}} \leq (2 S \ell^3 +1)\Tr P_{E_0}  P^3 e^{-\beta H_{\Lambda_\ell}} \,.
\end{equation}
Note the total particle number in any eigenstate of $H_{\Lambda_\ell}$ in the range of $P_{E_0} P^3$ is bounded by 
$\ell^2 E_0 /(CS)$, according to Proposition~\ref{ham lb: pro}.

Let us denote $P_{E_0} P^3$ by $Q_{E_0}$ for short. By combining (\ref{p1}) and (\ref{p2}), we obtain
\begin{equation}
\Tr e^{-\beta H_{\Lambda_\ell}} \leq 1 + (2S\ell^3+1 ) \Tr Q_{E_0} e^{-\beta H_{\Lambda_\ell}} \leq  (2S\ell^3+2 ) \Tr Q_{E_0} e^{-\beta H_{\Lambda_\ell}} \,,
\end{equation}
where we have used that $\Tr Q_{E_0} e^{-\beta H_{\Lambda_\ell}} \geq 1$ in the last step (which follows from the fact that $H_{\Lambda_\ell}$ has a zero eigenvalue with eigenvector in the range of $Q_{E_0}$). 
If we write $H_{\Lambda_\ell} = T - K$ for two hermitian operators $T$ and $K$, the Peierls--Bogoliubov inequality implies that
\begin{equation}
\tr Q_{E_0} e^{-\beta H_{\Lambda_\ell}} \leq \tr Q_{E_0} e^{-\beta Q_{E_0} T Q_{E_0} } \exp \left( \beta \frac { \tr Q_{E_0} K Q_{E_0} e^{-\beta H_{\Lambda_\ell}}}{ \tr Q_{E_0} e^{-\beta H_{\Lambda_\ell}}}\right)\,.
\end{equation}
We choose $T$ to be the Hamiltonian of free bosons, projected to our Hilbert space where $n_x \leq 2S$ for every $x \in \Lambda_\ell$. That is, 
\begin{equation}
T  =   S   \sum_{\langle x,y\rangle \subset \Lambda_\ell} \mathcal{P}_S  \left(  - a^\dagger_x a_y  - a^\dagger_y   a_x  + n_x + n_y \right) \mathcal{P}_S  \label{def:T}
\end{equation}
with $\mathcal{P}_S$ the projection onto $n_{x}\leq 2S$ for every site. The operator $K$ is then simply defined via $H_{\Lambda_\ell} = T - K$.  We have the following bound on $K$, similar to \cite[Prop.~2.3]{CG}. 

\begin{lemma}\label{lem:K}
The operator $K$ defined above satisfies the bound
\begin{equation}
K \leq \frac 12 \sum_{\langle x,y\rangle \subset \Lambda_\ell}  \left( 4 n_x n_y + n_x (n_x -1) +  n_y( n_y -1) \right)\,.
\end{equation}
\end{lemma}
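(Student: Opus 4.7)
My plan is to bound the bond contributions to $K$ one bond at a time. Writing $K = \sum_{\langle x,y\rangle}K_{xy}$, I first observe that the on-site terms $S(n_x+n_y)$ appear identically in $T_{xy}$ (from (\ref{def:T})) and in the nearest-neighbor piece of $H_{\Lambda_\ell}$ (from (\ref{hamb})), and cancel. What remains on the admissible subspace $\HH_{\Lambda_\ell}$ is
\begin{equation*}
K_{xy} = -S\,\mathcal{P}_S\bigl[a^\dagger_x(1-B_{xy})a_y + a^\dagger_y(1-B_{xy})a_x\bigr]\mathcal{P}_S + n_x n_y,
\end{equation*}
where $B_{xy}:=\sqrt{(1-n_x/(2S))(1-n_y/(2S))}$. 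Since $0\leq B_{xy}\leq 1$ on this subspace, the positive square root $\sqrt{1-B_{xy}}$ is well-defined.

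The key step is then to apply the operator inequality $-(AB+B^\dagger A^\dagger)\leq AA^\dagger + B^\dagger B$, which follows from $(A+B^\dagger)(A^\dagger+B)\geq 0$, with the choice $A = a^\dagger_x\sqrt{1-B_{xy}}$ and $B = \sqrt{1-B_{xy}}\,a_y$. This ``folds'' the hopping into number-diagonal operators:
\begin{equation*}
-S\bigl[a^\dagger_x(1-B_{xy})a_y + a^\dagger_y(1-B_{xy})a_x\bigr] \leq S\bigl[a^\dagger_x(1-B_{xy})a_x + a^\dagger_y(1-B_{xy})a_y\bigr].
\end{equation*}

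The right-hand side can now be evaluated explicitly on $|m,n\rangle_{xy}$: its first term acts as $Sm\bigl[1-\sqrt{(1-(m-1)/(2S))(1-n/(2S))}\bigr]$, and similarly for the second. Using $1-\sqrt{C}\leq 1-C$ for $C\in[0,1]$ and $1-(1-a)(1-b)\leq a+b$, the eigenvalue is bounded by $\tfrac{1}{2S}(m(m-1)+mn)$, so
\begin{equation*}
S\,a^\dagger_x(1-B_{xy})a_x \leq \tfrac{1}{2}\bigl(n_x(n_x-1)+n_x n_y\bigr),
\end{equation*}
and the symmetric bound holds with $x\leftrightarrow y$. Adding the residual $n_x n_y$ term from $K_{xy}$ and summing over bonds yields the claim.

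The only mildly non-obvious ingredient is choosing the splitting $A,B$: the prefactor $S$ multiplying the hopping must precisely cancel the $1/(2S)$ extracted by expanding $1-B_{xy}\leq 1-B_{xy}^2 = n_x/(2S)+n_y/(2S)-n_xn_y/(4S^2)$, so that the final bound is $S$-independent, matching the right-hand side of the lemma. Beyond this symmetrization, the argument is a routine term-by-term estimate, and the presence of the projection $\mathcal{P}_S$ creates no additional difficulty since only number-diagonal operators appear in the final bound.
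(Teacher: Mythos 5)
Your proof is correct and follows the same strategy as the paper's. You split $K$ bond by bond into the hopping deficit $-S\,\mathcal{P}_S[a^\dagger_x k_{xy} a_y + a^\dagger_y k_{xy} a_x]\mathcal{P}_S$ (with $k_{xy}=1-B_{xy}$) plus the residual $n_x n_y$; you then invoke Cauchy--Schwarz in the form $(A+B^\dagger)(A^\dagger+B)\geq 0$ with $A=a^\dagger_x\sqrt{k_{xy}}$, $B=\sqrt{k_{xy}}\,a_y$, which is exactly the paper's ``Cauchy--Schwarz'' step spelled out explicitly; and finally you bound $1-\sqrt{(1-a)(1-b)}\leq a+b$, which is the paper's inequality $k_{xy}\leq(n_x+n_y)/(2S)$. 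The only stylistic difference is that you evaluate $a^\dagger_x(1-B_{xy})a_x$ on occupation-number eigenstates directly, while the paper conjugates the operator inequality $k_{xy}\leq(n_x+n_y)/(2S)$ by $a^\dagger_x\,(\cdot)\,a_x$ and uses $a^\dagger_x n_x a_x = n_x(n_x-1)$, $a^\dagger_x n_y a_x = n_x n_y$; these give identical bounds.
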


\begin{proof}
The operator $K$ can be written as a sum of two terms, $K=K_1 + K_2$, with 
\begin{equation}
K_2 =  \sum_{\langle x,y\rangle \subset \Lambda_\ell} n_x n_y \,.
\end{equation}
Hence it only remains to look at $K_1$, given by 
\begin{equation}
K_1 =  - S   \sum_{\langle x,y\rangle \subset \Lambda_\ell}  \mathcal{P}_S \left( a^\dagger_x  k_{x,y} a_y + a^\dagger_y k_{x,y} a_x   \right) \mathcal{P}_S \,,
\end{equation}
where
\begin{equation}
k_{x,y} :=   1-  \sqrt{ 1 - \frac {n_x}{2S} }  \sqrt{ 1-\frac{n_y}{2S}} \geq 0 \,.
\end{equation}
The Cauchy-Schwarz inequality and the fact that $k_{x,y} \leq (n_x+n_y)/(2S)$ imply that 
\begin{align}\nonumber 
K_1 & \leq  S   \sum_{\langle x,y\rangle \subset \Lambda_\ell}  \mathcal{P}_S \left( a^\dagger_x  k_{x,y} a_x + a^\dagger_y k_{x,y} a_y   \right) \mathcal{P}_S 
\\ & \leq \frac 12   \sum_{\langle x,y\rangle \subset \Lambda_\ell}  \mathcal{P}_S \left( n_x(n_x-1) + n_y(n_y-1) + 2 n_x n_y   \right) \mathcal{P}_S \,.
\end{align}
The projections $\mathcal{P}_S$ can be dropped in  the last expression, since $\HH_{\Lambda_\ell} = \mathcal{P}_S \F$ is left invariant by the operators $n_x$.
\end{proof}

Let now $\Psi$ be an eigenstate of $H_{\Lambda_\ell}$ in the range of $Q_{E_0}$, and let $\rho(x_1,x_2) = \langle \Psi | a^\dagger_x a^\dagger_y a_y a_x |\Psi\rangle$ denote its two-particle density. From Lemma~\ref{lem:K} we have 
\begin{equation} 
\langle \Psi | K | \Psi\rangle \leq \sum_{\langle x,y\rangle \subset \Lambda_\ell}  \left( 2  \rho(x,y) + \frac 12 \rho(x,x) + \frac 12 \rho(y,y) \right) \leq 18 \ell^3 \|\rho\|_\infty \,.
\end{equation}
Theorem~\ref{thm:rho} states that $\|\rho\|_\infty \leq C S^{-3}E_0^{3} \|\rho\|_1$. 
Moreover,  $\|\rho\|_1$ is bounded by the square of the particle number, i.e., 
\begin{equation}
\|\rho\|_1 \leq  \left( \frac{\ell^2 E_0}{CS} \right)^2\,.
\end{equation}
 In particular, we conclude that 
\begin{equation}
 \frac { \tr Q_{E_0} K Q_{E_0} e^{-\beta H_{\Lambda_\ell}}}{ \tr Q_{E_0} e^{-\beta H_{\Lambda_\ell}}} \leq \frac {C}{S^5}  \ell^7 E_0^5 \leq   C  \frac {\ell^2}{(S\beta)^{5/2}} \left( \frac {\ell^2}{S\beta} \right)^{10}  ( \ln S\beta )^{25/2}
 \end{equation}
for some constant $C>0$ and $\ell \geq (\beta S)^{1/2}$.

We are left with deriving an upper bound on 
\begin{equation}
 \tr Q_{E_0} e^{-\beta Q_{E_0} T Q_{E_0} }\,,
\end{equation}
with $T$ defined in (\ref{def:T}) above. As already noted, the total number of particles in the range of $Q_{E_0}$ is bounded by $N_0:=\ell^2 E_0 /(CS)$, and hence $Q_{E_0} \leq \mathcal{Q}_{N_0}$, the projection onto the subspace corresponding to particle number $\leq N_0$. Let again $\F = \bigotimes_{x\in \Lambda_\ell} L^2(\N_0)$ denote the bosonic Fock space. The operator $T$ in (\ref{def:T}) is of the form $\mathcal{P}_S T_0 \mathcal{P}_{S}$, with $T_0$ the Hamiltonian for free bosons on $\F$. We can thus write
\begin{equation}
 \tr Q_{E_0} e^{-\beta Q_{E_0} T Q_{E_0} } = \Tr_\F \mathcal{P}_S Q_{E_0} e^{-\beta  Q_{E_0}\mathcal{P}_S T_0 \mathcal{P}_S Q_{E_0}}  
 \end{equation}
 where we denote by $\Tr_\F$ the trace on the Fock space $\F$. By the Gibbs variational principle, 
 \begin{align}\nonumber
& - \frac 1 \beta   \ln \Tr_\F \mathcal{P}_S Q_{E_0} e^{-\beta  Q_{E_0}\mathcal{P}_S T_0 \mathcal{P}_S Q_{E_0}}  \\ & = \min \left\{ \Tr T_0 \rho + \frac 1\beta \Tr_\F \rho \ln \rho \, : \, 0\leq \rho \leq \mathcal{P}_S Q_{E_0} \, , \ \Tr_\F \rho =1 \right\}\,.
 \end{align}
Since  $\mathcal{P}_{S} Q_{E_0} \leq \mathcal{Q}_{N_0}$ (viewed as an operator on $\F$), this implies that 
 \begin{equation}
\Tr_\F \mathcal{P}_S Q_{E_0} e^{-\beta  Q_{E_0}\mathcal{P}_S T_0 \mathcal{P}_S Q_{E_0}}   \leq    \Tr_\F \mathcal{Q}_{N_0} e^{-\beta \mathcal{Q}_{N_0} T_0 \mathcal{Q}_{N_0} } = \Tr_\F \mathcal{Q}_{N_0} e^{-\beta T_0 }\,,
\end{equation}
where we used that $\mathcal{Q}_{N_0}$ commutes with $T_0$ in the last step.

The eigenvalues of the Laplacian on $\Lambda_\ell$ are given by 
\begin{equation}
\left\{ \epsilon(p) = \sum_{i=1}^3 2 (1- \cos(p^i)) \, : \, p \in \Lambda_\ell^{*N} :=   \frac \pi \ell \Lambda_\ell \right\}\,.
\end{equation}
For $p\neq 0$, we can simply ignore the restriction on the particle number, and bound
\begin{equation}
\Tr_\F \mathcal{Q}_{N_0} e^{-\beta T_0 } \leq (N_0+1) \prod_{\underset{p\neq 0}{p \in \Lambda^{*N}_\ell}} \frac 1 {1- e^{-\beta S \epsilon(p)}}\,.
\end{equation}
By viewing the sum over $p$ is a Riemann approximation to the corresponding integral, it is not difficult to see that 
\begin{equation}
 \frac 1 {\beta\ell^3} \sum _{\underset{p\neq 0}{p \in \Lambda^{*N}_\ell}} \ln \left(1- e^{-\beta S \epsilon(p)}\right)  \geq  \frac 1 {(2\pi)^3\beta} \int_{[-\pi,\pi]^3}\ln \left(1- e^{-\beta S \epsilon(p)}\right) dp  - \frac C {S \beta^2 \ell } 
 \end{equation}
for some constant $C>0$. (Compare with the corresponding bound (\ref{up:rie}) in the previous section.) Finally, for some $C>0$ 
\begin{equation}\label{tos} 
 \frac 1 {(2\pi)^3\beta} \int_{[-\pi,\pi]^3}\ln \left(1- e^{-\beta S \epsilon(p)}\right) dp   \geq   \frac {C_0}{\beta^{5/2}S^{3/2}} \left( 1 + \frac C {\beta S} \right)
\end{equation}
with $C_0$ given in (\ref{def:c0}). To see (\ref{tos}), note that $C_0 \beta^{-5/2} S^{-3/2}$ is the result of the integral if $\epsilon(p)$ is replaced by $|p|^2$ and the region of integration $[-\pi,\pi]^3$ is replaced by $\R^3$. Using that $\epsilon(p)\geq |p|^2 \max\{ 1 - |p|^2/12, 4 /\pi^2\}$ for $p \in [-\pi,\pi]^3$, we have
\begin{align}\nonumber 
&  \frac 1 {(2\pi)^3\beta} \int_{[-\pi,\pi]^3}\ln \left(1- e^{-\beta S \epsilon(p)}\right) dp   - \frac {C_0}{\beta^{5/2}S^{3/2}} \\ & \geq     \frac 1 {(2\pi)^3\beta} \int_{|p|\leq 2}\ln \frac{1- e^{-\beta S |p|^2(1-|p|^2/12)}}{1- e^{-\beta S |p|^2}}  dp  +  \frac 1 {(2\pi)^3\beta} \int_{|p|\geq 2 }\ln \left(1- e^{- 4\beta S |p|^2/\pi^2}\right) dp \,.
\label{tos2} 
\end{align}
The last term is exponentially small in $\beta S$. In the integrand of the first term, we can bound 
\begin{equation}
\ln \frac{1- e^{-\beta S |p|^2(1-|p|^2/12)}}{1- e^{-\beta S |p|^2}} = - \int_0^{\beta S|p|^4/12} \frac 1{ e^{\beta S |p|^2 - t} -1} dt \geq -\frac {\beta S |p|^4}{12} \frac 1{ e^{2 \beta S |p|^2/3} -1}
\end{equation}
for $|p|\leq 2$, which leads to the desired estimate (\ref{tos}).

Collecting all the bounds, we have
\begin{align}\nonumber
f(S,\beta,\Lambda_\ell) & \geq \frac {C_0}{\beta^{5/2}S^{3/2}} \left( 1 + \frac C {\beta S} \right) -  \frac {C}{\ell} \frac {1}{(S\beta)^{5/2}} \left( \frac {\ell^2}{S\beta} \right)^{10}  ( \ln S\beta )^{25/2} \\ & \quad -  \frac C {S \beta^2 \ell }   - \frac 1 {\beta \ell^3} \ln \left[ (N_0+1) ( 2 S \ell^3 + 2)\right]\,.
\end{align} 
We are still free to choose $\ell$. For the choice $\ell = (\beta S)^{21/40}$ we obtain an error term smaller than $C S (\beta S)^{-5/2 - 1/40} (\ln \beta S)^{25/2}$, implying (\ref{fe lb asympt}).

\section{Proof of Theorem~\ref{thm:rho}}\label{sec:rho}

In this section we will give the proof of Theorem~\ref{thm:rho}. Note that since $\|\rho\|_\infty \leq \|\rho\|_1$ holds trivially, it suffices to prove the theorem when the parameter $E/S$ 
is suitably small. Thanks to Proposition~\ref{ham lb: pro}, all non-zero eigenvalues of $H_{\Lambda_\ell}$ are bounded from below by $C S \ell^{-2}$. Hence 
$E/S$ small implies that $\ell$ is large. 

We shall divide the proof into several steps. In Step 1, we shall prove a differential inequality satisfied by the two-particle density of an eigenstate of $H_{\Lambda}$. It involves a suitable weighted Laplacian on $\Lambda\times \Lambda$. In Step 2, we shall use the method of reflections to extend the inequality from $\Lambda_\ell\times \Lambda_\ell$ to the whole of $\Z^6$. By iterating the resulting inequality, we obtain a bound on the two-particle density in terms of the probability density of a random walk on $\Z^6$. The necessary bounds on this probability density are derived in Step 3. With their aid, we can show that the desired bound on the two-particle density holds at least a certain finite distance away from the boundary of $\Lambda_\ell\times\Lambda_\ell$. To extend this result to the whole space, we shall show in Step 4 that our differential inequality also implies that the two-particle density is very flat near its maximum, implying that 
its maximal value in the smaller cube a finite distance away from the boundary is very close to its global maximum.

\subsection{Step 1}
The first step in the proof is to derive a differential inequality for the two-particle density of an eigenstate of $H_{\Lambda}$. We state it in the following lemma.

\begin{lemma} Let $\Psi$ be an eigenstate of $H_\Lambda$ with eigenvalue $E$, and let $\rho(x_1,x_2) = \langle \Psi | a^\dagger_{x_1} a^\dagger_{x_2} a_{x_2} a_{x_1}|\Psi\rangle$ denote its two-particle density. Then
\begin{align}\nonumber
\frac {2 E}{S} \rho(x_1,x_2)   & \geq \sum_{\underset{|y-x_1|=1}{y\in \Lambda}} \left[ \rho(x_1, x_2) \left( 1 - \frac {\delta_{y,x_2}}{2S} \right) - \rho(y,x_2) \left( 1-\frac{\delta_{x_1,x_2}}{2S}\right)  \right]  \\ & \quad +  \sum_{\underset{|y-x_2|=1}{y\in\Lambda}}  \left[ \rho(x_1,x_2)\left( 1 - \frac {\delta_{x_1,y}}{2S}  \right) - \rho(x_1,y)  \left(  1-\frac{\delta_{x_1,x_2}}{2S} \right) \right] \,. \label{diff:ineq:lem}
\end{align}
\end{lemma}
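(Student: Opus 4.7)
\emph{Setup via positivity.} The plan combines two positivities: the ferromagnetic non-negativity $H_\Lambda \ge 0$ (each bond operator $S^2 - \vec S_z \cdot \vec S_w$ has minimum eigenvalue $0$, attained at maximal total spin $t = 2S$), and the trivial $\|a_{x_2}(a_{x_1} - a_y)\Psi\|^2 \ge 0$. Setting $B = a_{x_2} a_{x_1}$ and $\Phi = B\Psi$, one has $\rho(x_1, x_2) = \|\Phi\|^2$, and from $H_\Lambda \Psi = E \Psi$,
\begin{equation*}
0 \le \langle \Phi, H_\Lambda \Phi \rangle = E\rho(x_1, x_2) + \langle \Psi, B^\dagger[H_\Lambda, B]\Psi\rangle,
\end{equation*}
equivalently
\begin{equation*}
\langle \Psi, B^\dagger[B, H_\Lambda]\Psi\rangle \;\le\; E\rho(x_1, x_2). \qquad (\star)
\end{equation*}
It then suffices to show that the left-hand side of $(\star)$ is at least $(S/2)$ times the right-hand side of \eqref{diff:ineq:lem}.

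\emph{Computing $[B, H_\Lambda]$.} Write $[B, H_\Lambda] = a_{x_2}[a_{x_1}, H_\Lambda] + [a_{x_2}, H_\Lambda]a_{x_1}$; only bonds meeting $\{x_1, x_2\}$ contribute. Using the bosonic form \eqref{hamb}, for a bond $\{x_1, y\}$ with $y \sim x_1$ in $\Lambda$ a direct computation gives
\begin{equation*}
[a_{x_1}, h_{x_1 y}] = S a_{x_1} - n_y a_{x_1} - S T_{x_1 y} a_y - S a_{x_1}^\dagger (T^{(x_1)} - T_{x_1 y}) a_{x_1} a_y - S a_y^\dagger (T^{(x_1)} - T_{x_1 y}) a_{x_1}^2,
\end{equation*}
where $T_{x_1 y} = \sqrt{1 - n_{x_1}/(2S)}\sqrt{1 - n_y/(2S)}$ and $T^{(x_1)} = T_{x_1 y}|_{n_{x_1}\to n_{x_1} + 1}$. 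The leading $O(S)$ piece $S(a_{x_1} - T_{x_1 y} a_y)$ is the discrete magnon dispersion, while the $-n_y a_{x_1}$ term encodes the attractive interaction and the last two terms are the hardcore corrections from the fact that $a_{x_1}$ does not commute with the square roots.

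\emph{Lower bound via a Cauchy--Schwarz trick.} Sandwiching with $a_{x_1}^\dagger a_{x_2}^\dagger$ and using $a_{x_1}^\dagger a_{x_2}^\dagger a_{x_2} = n_{x_2} a_{x_1}^\dagger$ (for $x_1 \ne x_2$), the leading contribution becomes $S\sum_{y \sim x_1}[\rho(x_1, x_2) - \Re\langle n_{x_2} a_{x_1}^\dagger a_y\rangle]$. The off-diagonal one-particle term is then controlled by
\begin{equation*}
2 \Re \langle n_{x_2} a_{x_1}^\dagger a_y \rangle \;\le\; \rho(x_1, x_2) + \rho(y, x_2),
\end{equation*}
which is precisely the expansion of $\|a_{x_2}(a_{x_1} - a_y)\Psi\|^2 \ge 0$. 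Summing over $y \sim x_1$ gives a lower bound of $(S/2)\sum_{y \sim x_1}[\rho(x_1, x_2) - \rho(y, x_2)]$, and the symmetric contribution from $[a_{x_2}, H_\Lambda] a_{x_1}$ gives the same with $x_1 \leftrightarrow x_2$. Together, these reproduce $(S/2)$ times the right-hand side of \eqref{diff:ineq:lem} at leading order, i.e. without the $1/(2S)$ corrections.

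\emph{Corrections and main obstacle.} The $(1 - \delta_{y, x_2}/(2S))$ and $(1 - \delta_{x_1, x_2}/(2S))$ weights arise from three sources: the attractive $-n_y a_{x_1}$ term, the sub-leading $(1 - T_{x_1 y})$ deficit of the hopping amplitude, and the square-root cross-terms $T^{(x_1)} - T_{x_1 y} = -(2S)^{-1}/(\sqrt{1 - (n_{x_1}+1)/(2S)} + \sqrt{1 - n_{x_1}/(2S)})\cdot\sqrt{1 - n_y/(2S)}$, combined with the two special configurations $y = x_2$ (the bond $\{x_1, x_2\}$ when $x_1 \sim x_2$, where $[a_{x_1}, T_{x_1, x_2}]$ picks up a genuine $(2S)^{-1}$ factor) and $x_1 = x_2$ (where $a_x^\dagger a_x^\dagger a_x = (n_x - 1) a_x^\dagger$ replaces $n_x a_x^\dagger$). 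The main technical obstacle is the bookkeeping: the sub-leading commutator terms produce three-body correlators such as $\langle n_{x_1} n_y n_{x_2}\rangle$ that must be reorganized, using the hardcore bound $n_x \le 2S$ on $\HH_\Lambda$ together with further applications of the inequality $\|a_{x_2}(a_{x_1} - a_y)\Psi\|^2 \ge 0$, into precisely the stated $1/(2S)$-weighted 2-body densities. Once this is done, combining with $(\star)$ yields \eqref{diff:ineq:lem}.
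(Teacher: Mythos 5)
Your overall strategy is in the right family as the paper's: express the Hamiltonian as a sum of bond positives, exploit the eigenvalue equation, and peel off nonnegative pieces to isolate the Laplacian-like structure acting on the two-particle density. The commutator computation $[a_{x_1}, h_{x_1 y}]$ is correct, and the leading-order Cauchy--Schwarz step $2\Re\langle n_{x_2} a_{x_1}^\dagger a_y\rangle \le \rho(x_1,x_2) + \rho(y,x_2)$ does recover the coefficient $S/2$ on the dominant terms. However, the proposal is not a complete proof: the \emph{entire} content of the lemma beyond the trivial leading order lies in the $1/(2S)$ corrections, and you explicitly defer the ``bookkeeping'' that would establish them.

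This is a genuine gap, not a routine verification. The correlators that appear at subleading order, such as $\langle n_{x_1} n_{x_2} n_y\rangle$, are genuine three-body objects that cannot be bounded by two-body densities by positivity alone; the point is that they \emph{cancel identically} between two pieces of the expression. Concretely, the ``diagonal'' contribution carries $-\tfrac{1}{2S}\langle n_{x_1} n_{x_2} n_y\rangle$ while the ``exchange'' contribution $-\langle n_y n_{x_2}(1 - n_{x_1}/2S)\rangle$ carries $+\tfrac{1}{2S}\langle n_{x_1} n_{x_2} n_y\rangle$, and these cancel exactly for $y\notin\{x_1,x_2\}$; for $y=x_2$ the residual produces precisely the $(1-\tfrac{1}{2S})$ weight, and similarly for $x_1=x_2$. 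Showing that the algebra of your Step 4 produces this cancellation is not a formality --- it is where the sharp coefficients in the lemma come from, and it is sensitive to exactly which nonnegative pieces are dropped.

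A related concern: your positivity $\langle B^\dagger H_\Lambda B\rangle \ge 0$ is close to, but not the same as, the one the paper uses. The paper drops $S\sum_{\langle x,y\rangle}\|B\,V_{xy}\Psi\|^2$ (with $V_{xy}=a_x\sqrt{1-n_y/2S}-a_y\sqrt{1-n_x/2S}$), whereas $\langle B^\dagger H_\Lambda B\rangle = S\sum_{\langle x,y\rangle}\|V_{xy}\,B\,\Psi\|^2$; these differ by the nontrivial commutators $[B,V_{xy}]$. Since the final inequality has no slack to absorb extra terms, you would need to track these commutators through the Cauchy--Schwarz step and confirm that they still sum to exactly the claimed $1/(2S)$-weighted two-body densities. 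Further ``applications of the inequality $\|a_{x_2}(a_{x_1}-a_y)\Psi\|^2\ge 0$'' that you invoke in Step 4 each discard a nonnegative term; without a precise accounting, there is a real risk of overshooting and ending up with a coefficient strictly worse than $1-\tfrac{1}{2S}$. Until the cancellation is carried out, the lemma is not established.
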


\begin{proof}
The Heisenberg Hamiltonian (\ref{hamb}) can be written as 
\begin{equation}
H_\Lambda =  S \sum_{\langle x,y\rangle\subset \Lambda}  \left( a^\dagger_x \sqrt{ 1 - \frac {n_y}{2S} } - a^\dagger_y \sqrt{ 1-\frac{n_x}{2S}} \right) \left(  a_x \sqrt{ 1 - \frac {n_y}{2S} } - a_y \sqrt{ 1-\frac{n_x}{2S}} \right)\,,
\end{equation}
where $n_x = a^\dagger_x a_x$ and the sum is over all bonds in the graph. Equivalently, 
\begin{equation}
H_\Lambda =  S\sum_{(x,y) }  \left( a^\dagger_x \sqrt{ 1 - \frac {n_y}{2S} } - a^\dagger_y \sqrt{ 1-\frac{n_x}{2S}} \right)   a_x \sqrt{ 1 - \frac {n_y}{2S} } \,,
\end{equation}
where the sum is now over all {\em ordered} nearest neighbor pairs in $\Lambda$.

Let $\Psi$ be an eigenfunction of $H_\Lambda$ with eigenvalue $E$. Then
\begin{equation}
E \rho(x_1,x_2)  = E \langle \Psi | a^\dagger_{x_1} a^\dagger_{x_2} a_{x_2} a_{x_1} | \Psi \rangle = \langle \Psi | H_\Lambda a^\dagger_{x_1} a^\dagger_{x_2} a_{x_2} a_{x_1}  | \Psi \rangle \,.
\end{equation}
We compute
\begin{equation}
 a_x \sqrt{ 1 - \frac {n_y}{2S} }  \, a^\dagger_{x_1}  a^\dagger_{x_2} a_{x_2} a_{x_1} =  \left( a^\dagger_{x_1} a^\dagger_{x_2} a_{x_2} a_{x_1} + \delta_{x,x_1} n_{x_2} + \delta_{x,x_2} n_{x_1} \right)  a_x \sqrt{ 1 - \frac {n_y}{2S} }  
\end{equation}
 and thus
\begin{align}\nonumber 
E \rho(x_1,x_2)  & =   S \sum_{( x,y )  }  \biggl\langle \Psi \biggl|  \left( a^\dagger_x \sqrt{ 1 - \frac {n_y}{2S} } - a^\dagger_y \sqrt{ 1-\frac{n_x}{2S}} \right) 
\\ & \qquad\qquad \times  \left( 
a^\dagger_{x_1} a^\dagger_{x_2} a_{x_2} a_{x_1}  + \delta_{x,x_1} n_{x_2} + \delta_{x,x_2} n_{x_1} \right) a_x \sqrt{ 1 - \frac {n_y}{2S} }  \,\biggl| \Psi \biggl\rangle \,. \label{lfa}
\end{align}
The contribution of the first term $a^\dagger_{x_1} a^\dagger_{x_2} a_{x_2} a_{x_1}$ in the middle parenthesis is non-negative after summing over all pairs $(x,y)$, and can hence be dropped for a lower bound. For the remaining two terms, we write the last factor in (\ref{lfa})  as 
\begin{equation}
 a_x \sqrt{ 1 - \frac {n_y}{2S} }   = \frac 12 \left(  a_x \sqrt{ 1 - \frac {n_y}{2S} }  - a_y \sqrt{ 1 - \frac {n_x}{2S} } \right) +  \frac 12 \left(  a_x \sqrt{ 1 - \frac {n_y}{2S} }  + a_y \sqrt{ 1 - \frac {n_x}{2S} } \right) 
\end{equation}
and observe that the contribution of the first term yields again a non-negative expression. Hence  we get the lower bound 
\begin{align}\nonumber
 E \rho(x_1,x_2)    & \geq \frac S 2 \sum_{( x,y ) } \biggl\langle \Psi \biggl|  \left( a^\dagger_x \sqrt{ 1 - \frac {n_y}{2S} } - a^\dagger_y \sqrt{ 1-\frac{n_x}{2S}} \right)  \\ & \qquad \qquad \times  \left( \delta_{x,x_1} n_{x_2} + \delta_{x,x_2} n_{x_1} \right) \left(  a_x \sqrt{ 1 - \frac {n_y}{2S} } + a_y \sqrt{ 1 - \frac {n_x}{2S} } \right) \biggl| \Psi \biggl\rangle \,.
\end{align}
Since the right side is real, we only have to consider the hermitian part of the operator involved. This gives
\begin{align}\nonumber
& \frac {2 E}{S} \rho(x_1,x_2)   \\ \nonumber & \geq  \sum_{( x,y ) } \left\langle \Psi \left| a^\dagger_x \sqrt{ 1 - \frac {n_y}{2S} }   \left( \delta_{x,x_1} n_{x_2} + \delta_{x,x_2} n_{x_1} \right)  a_x \sqrt{ 1 - \frac {n_y}{2S} } \, \right| \Psi \right\rangle \\ \nonumber & \quad -  \sum_{( x,y ) } \left\langle \Psi \left|   a^\dagger_y \sqrt{ 1-\frac{n_x}{2S}}   \left( \delta_{x,x_1} n_{x_2} + \delta_{x,x_2} n_{x_1} \right) a_y \sqrt{ 1 - \frac {n_x}{2S} }  \, \right| \Psi \right\rangle \\ \nonumber & = \sum_{y: |y-x_1|=1}  \left\langle \Psi \left| \left[   a^\dagger_{x_1} a^{\dagger}_{x_2} a_{x_2} a_{x_1}   \left( 1 - \frac {n_y}{2S} \right) - a^\dagger_y a^\dagger_{x_2} a_{x_2} a_y \left( 1-\frac{n_{x_1}}{2S} \right) \right] \right| \Psi \right\rangle \\ \nonumber & \quad +  \sum_{y:|y-x_2|=1}  \left\langle \Psi \left| \left[ a^\dagger_{x_1} a^\dagger_{x_2} a_{x_2} a_{x_1}\left( 1 - \frac {n_y}{2S}  \right) - a^\dagger_{x_1}  a^\dagger_y a_y a_{x_1} \left(  1-\frac{n_{x_2}}{2S} \right) \right] \right| \Psi \right\rangle\\ \nonumber  & = \sum_{y: |y-x_1|=1} \left[ \rho(x_1, x_2) \left( 1 - \frac {\delta_{y,x_2}}{2S} \right) - \rho(y,x_2) \left( 1-\frac{\delta_{x_1,x_2}}{2S}\right)  \right]  \\ & \quad +  \sum_{y:|y-x_2|=1}  \left[ \rho(x_1,x_2)\left( 1 - \frac {\delta_{x_1,y}}{2S}  \right) - \rho(x_1,y)  \left(  1-\frac{\delta_{x_1,x_2}}{2S} \right) \right]\,. \label{diff:ineq}
\end{align}
\end{proof}

Instead of looking at $\rho(x_1,x_2)$, it will be convenient below to define $\sigma(x_1,x_2)$ by
\begin{equation}
\rho(x_1,x_2) = \sigma(x_1,x_2) \left( 1 - \frac{\delta_{x_1,x_2}}{2S} \right) \,.
\end{equation}
For $S\geq 1$ this defines $\sigma$ in terms of $\rho$ for every pair of points; for $S=1/2$ we  take $\sigma(x,x)=0$, i.e., $\sigma = \rho$. By plugging this ansatz into (\ref{diff:ineq:lem}) we obtain
\begin{align}\nonumber
 \frac {2 E}S \sigma(x_1,x_2) & \geq    \sum_{y: |y-x_1|=1} \left( \sigma(x_1, x_2)  -\sigma(y,x_2) \right)\left( 1 - \frac {\delta_{y,x_2}}{2S} \right) \\ & \quad +
\sum_{y:|y-x_2|=1}  \left( \sigma(x_1,x_2) - \sigma(x_1,y)\right) \left( 1 - \frac {\delta_{x_1,y}}{2S}\right) \,.\label{diff:sigma}
\end{align}
In particular,  $\sigma$ satisfies the inequality 
\begin{equation}\label{diffineq}
\left( \left(-\Delta_{x_1} - \Delta_{x_2}\right) \sigma \right)(x_1,x_2) \leq \frac {2 E}S \sigma(x_1,x_2) +  \frac 1 S \sigma(x_1,x_2) \chi_{|x_1-x_2|=1}\,,
\end{equation}
with $\Delta$ denoting the Laplacian on $\Lambda$.

\subsection{Step 2} 
Consider now a cubic lattice restricted to $\ell^3$ sites, $\Lambda_\ell := [0,\ell)^3 \cap \Z^3$. The inequality (\ref{diffineq}) holds for $(x_1,x_2)\in \Lambda_\ell\times \Lambda_\ell$. It can be extended to all of $\Z^6$ via reflection: For $z \in \{0,1,\dots, \ell-1\}$ and $m\in \Z$, let 
\begin{equation}\label{def:refl}
z_m = m \ell + \frac 12 (\ell -1)  + (-1)^m \left( z - \frac 12 (\ell -1) \right) \in \{m \ell, m\ell+1,\dots,  (m+1)\ell - 1\}
\end{equation}
denote the image of $z$ obtained after reflecting $m$ times at the boundary of the interval. One readily checks that
\begin{equation}\label{prop:refl}
z - w_m = (-1)^m \left( z_{(-1)^{m+1} m} - w\right)\,,
\end{equation}
which will be useful below.
We extend this to $z \in \Lambda_\ell \times \Lambda_\ell$ componentwise, and introduce the corresponding $z_m$ for $m\in \Z^6$. For any function $f$ on $\Lambda_\ell\times\Lambda_\ell$, we define a corresponding function $f^{R}$ on $\Z^6$ by 
\begin{equation}
f^R(z_m) = f(z)
\end{equation}
for all $m\in\Z^6$ and $z\in \Lambda_\ell\times\Lambda_\ell$. With $\chi$ denoting the characteristic function of the subset of $\Lambda_\ell\times\Lambda_\ell$ with $|x_1-x_2|=1$, we obtain from (\ref{diffineq}) the bound
\begin{equation}\label{iop}
\left( -\Delta_{\Z^6}  \sigma^R \right)(z) \leq \frac {2 E}S \sigma^R(z) +  \frac 1 S \sigma^R(z) \chi^R(z)
\end{equation}
for {\em all} $z=(x_1,x_2) \in \Z^6$, and with $\Delta_{\Z^6}$ now the usual Laplacian on the full space $\Z^6$. 

We bound the $\sigma^R$ in the last term on the right side of (\ref{iop}) simply by $\|\sigma^R\|_\infty = \|\sigma\|_\infty$.
For $E<6S$, we can write the resulting inequality  equivalently as 
\begin{equation}
\sigma^R(z) \leq (1 - E/(6S))^{-1} \left( \langle \sigma^R \rangle(z) + \frac 1 {12 S} \|\sigma\|_\infty \chi^R(z) \right)\,,
\end{equation}
where $\langle\,\rangle$ means averaging over nearest neighbors in $\Z^6$. If we iterate this bound $n$ times, we further obtain
\begin{equation}\label{insert}
\sigma^R(z) \leq (1-E/(6S))^{-n}  \left( \sum_{w\in \Z^6} P_n(z,w) \sigma^R(w)  + \frac 1{12 S} \| \sigma\|_\infty  \sum_{w\in\Z^6} Q_n(z,w) \chi^R(w) \right)\,,
\end{equation}
where $P_n(z,w)$ denotes the probability that a simple symmetric random walk on $\Z^6$ starting at $z$ ends up at $w$ in $n$ steps, and $Q_n = \sum_{j=0}^{n-1} P_j$. 

In the next step, we shall derive a simple upper bound on $P_n$ which will allow us to bound the first term on the right side of (\ref{insert}) in terms of the $1$-norm of $\sigma$. Moreover, we shall carefully evaluate the last term in (\ref{insert}) in order to show that it is strictly less than $\|\sigma\|_\infty$. It can thus be combined with the term on the left side of (\ref{insert}) to obtain the desired bound on the $\infty$-norm in terms of the $1$-norm. 

\subsection{Step 3} We shall first give a bound on the last term in (\ref{insert}). 
For any $F\geq 0$, we can bound $Q_n$ as  
\begin{align}\nonumber
Q_n(z,w) & \leq (1+ F/6)^{n-1} \sum_{j=0}^{n-1} (1+F/6)^{-j} P_j(z,w) \\ \nonumber 
&\leq (1+ F/6)^{n-1} \sum_{j=0}^{\infty} (1+ F/6)^{-j} P_j(z,w) \\ & = 12  (1 +F/6 )^{n}  \, (-\Delta_{\Z^6}+ 2F)^{-1}(z,w)\,. \label{ins1}
\end{align}
We are thus left with the task of deriving an upper bound on the quantity 
\begin{equation}\label{f1}
\sum_{w\in\Z^6} (-\Delta_{\Z^6}+ 2F)^{-1}(z,w) \chi^R(w) = \sum_{m\in\Z^6} \sum_{w\in \Lambda_\ell \times \Lambda_\ell }  (-\Delta_{\Z^6}+ 2F)^{-1}(z,w_m) \chi(w)\,.
\end{equation}
Using detailed properties of the resolvent of the Laplacian on $\Z^6$, we can obtain the following bound. Its proof will be given in the appendix.

\begin{lemma}\label{lem:new}
Let  
\begin{equation}\label{def:c4}
C_4:= \frac{\sqrt{3}-1}{192 \pi^3} \Gamma^2(\tfrac 1{24}) \Gamma^2(\tfrac {11}{24})  \approx 0.2527 
\end{equation}
and assume that  $z \in\Lambda_\ell\times\Lambda_\ell$ is a distance $d$ away from the complement of $\Lambda_\ell \times \Lambda_\ell$. Then  
\begin{equation}\label{neweq}
\sum_{w\in\Z^6} (-\Delta_{\Z^6}+ 2F)^{-1}(z,w) \chi^R(w)\leq -\frac 12  +  (3 + F/2)  \left[ C_4 + \frac 2 {\pi d }    \left(  \frac { 2 (1+\sqrt{F} )^{\ell/3}}{1- (1+\sqrt{F})^{-\ell/3} } \right)^3 \right] \,.
\end{equation}
\end{lemma}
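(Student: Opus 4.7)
The plan is to combine a Fourier-based exact identity for a sum of the free $\Z^6$ resolvent over nearest-neighbor pairs with the method of reflections for $\chi^R$, and then control the resulting image terms using exponential decay of the resolvent. Throughout I write $G_F^{6D}(z,w) = (-\Delta_{\Z^6}+2F)^{-1}(z,w)$ and $g_F^{3D}(y) = \int_{[-\pi,\pi]^3}(\epsilon_3(p)+F)^{-1} e^{ip\cdot y}\,dp/(2\pi)^3$.

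First I would establish the Fourier identity
\begin{equation*}
\sum_{w\in\Z^6:\,|w_1-w_2|=1} G_F^{6D}(z,w) \;=\; \bigl(3+\tfrac{F}{2}\bigr)\, g_F^{3D}(z_2-z_1) - \tfrac{1}{2}\delta_{z_1,z_2}.
\end{equation*}
This follows from direct Fourier computation: summing $\sum_{w_1\in\Z^3} e^{-i(p_1+p_2)\cdot w_1}$ produces the distributional constraint $p_1=-p_2$ that collapses the $6$-dimensional integral to a $3$-dimensional one, while the algebraic identity $2\sum_j\cos p^j = (6+F)-(\epsilon_3(p)+F)$ produces both the $(3+F/2)$ prefactor and the $-\tfrac{1}{2}\delta_{z_1,z_2}$. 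Combined with (i) the discrete maximum principle applied to $-g_F^{3D}$ (subharmonic off the origin, so $g_F^{3D}(y)\le g_F^{3D}(e_1)$ for $y\neq 0$), (ii) the resolvent relation $(6+F)g_F^{3D}(0)-6g_F^{3D}(e_1)=1$, and (iii) the classical lattice evaluation $g_0^{3D}(0)=C_4$, this yields the uniform estimate
\begin{equation*}
(3+\tfrac F2)\, g_F^{3D}(y) - \tfrac{1}{2}\delta_{y,0} \;\le\; -\tfrac{1}{2}+(3+\tfrac F2)C_4\qquad (y\in\Z^3),
\end{equation*}
with equality at $F=0$ and $|y|=1$.

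Next, since $\chi^R(w_m)=\chi(w)$ for every $m\in\Z^6$ and $w\in\Lambda_\ell\times\Lambda_\ell$,
\begin{equation*}
\sum_{w\in\Z^6} G_F^{6D}(z,w)\,\chi^R(w) \;=\; \sum_{m\in\Z^6}\sum_{\substack{(w_1,w_2)\in\Lambda_\ell\times\Lambda_\ell \\ |w_1-w_2|=1}} G_F^{6D}(z,(w_1,w_2)_m),
\end{equation*}
where $(w_1,w_2)_m$ denotes the coordinatewise reflection. For the $m=0$ contribution I extend the $w$-sum to all of $\Z^6$ (using positivity of the kernel) and apply the identity and uniform bound above to extract exactly the main term $-\tfrac{1}{2}+(3+F/2)C_4$ of the lemma.

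The main technical obstacle is the $m\neq 0$ image sum, which I would control using exponential decay of the $6$D resolvent at rate $\sqrt F$ per coordinate. Such an estimate follows either from shifting the Fourier contour $p^j\to p^j+i\mu$ (permissible as long as $\cosh\mu\le 1+cF$, which keeps $\mathrm{Re}(\epsilon_6+2F)>0$) or equivalently from the Bessel representation $G_F^{6D}(0,y)=\int_0^\infty e^{-(12+2F)t}\prod_j I_{|y^j|}(2t)\,dt$ combined with a bound of the form $I_n(2t)\le I_0(2t)(1+\sqrt F)^{-n/c_0}$ derivable from the generating function of the $I_n$. Since $z$ is at distance $d$ from the complement of $\Lambda_\ell\times\Lambda_\ell$, for every $m\neq 0$ the image $(w_1,w_2)_m$ lies at distance $\ge 2d$ from $z$ in some ``saturating'' coordinate and at geometric-series distances in the remaining ones; summing over $m\in\Z^6\setminus\{0\}$ produces a cube of the geometric series $2(1+\sqrt F)^{\ell/3}/(1-(1+\sqrt F)^{-\ell/3})$ (one factor per spatial direction) together with a prefactor $2/(\pi d)$ from the saturating direction. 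Multiplying by the $(3+F/2)$ factor that re-emerges upon summing over nearest-neighbor pairs $(w_1,w_2)$ as in the first step yields the bracket in (\ref{neweq}). Pinning down the sharp constants $c_0$, the $2/\pi$, and the exponent $\ell/3$ is the delicate part of the argument, and I expect it will rely on auxiliary Bessel/lattice estimates deferred to the appendix.
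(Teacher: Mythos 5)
Your Fourier identity
\[
\sum_{w\in\Z^6:\,|w_1-w_2|=1} (-\Delta_{\Z^6}+2F)^{-1}(z,w) = \bigl(3+\tfrac F2\bigr)(-\Delta_{\Z^3}+F)^{-1}(0,z_1-z_2) - \tfrac12\delta_{z_1,z_2}
\]
is correct, and is precisely the algebraic heart of the paper's proof (the collapse of the $6$-dimensional momentum integral via the delta constraint $p_1=-p_2$, together with $6-\epsilon(p)=(6+F)-(\epsilon(p)+F)$). Two caveats on the $m=0$ piece: your chain (i)--(iii) does not quite close as stated. Using the resolvent relation (ii) at generic $F$ together with $g_F(0)\le C_4$ gives $(3+\tfrac F2)g_F(e_1)\le\tfrac{1}{12}(6+F)\bigl[(6+F)C_4-1\bigr]$, which exceeds $(3+\tfrac F2)C_4-\tfrac12$ because $(6+F)C_4>1$. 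What you actually need is the pointwise monotonicity $g_F\le g_0$ (immediate from the heat-kernel formula), which reduces (ii)--(iii) to $F=0$ and gives $g_F(y)\le g_0(e_1)=C_4-\tfrac16$ for $y\neq0$. This is exactly the route the paper takes via its Lemma on the resolvent of $\Delta_{\Z^3}$; the fix is easy but should be made explicit.

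The real gap is in the $m\neq 0$ image sum. You propose to estimate $\sum_{m\in\Z^6\setminus\{0\}}\sum_{w\in\Lambda_\ell^2,\,|w_1-w_2|=1} G_F^{6D}(z,w_m)$ directly by exponential decay of the $6$D resolvent, and assert that the geometric series cubes ``(one factor per spatial direction).'' But $m$ runs over $\Z^6$, so a per-coordinate exponential decay estimate produces a \emph{sixth} power of a geometric series, not a cube, and moreover for each $m$ there is still an inner sum over $\sim 6\ell^3$ pairs $w$ that must be controlled. The cube in the statement is not an accident of constant-chasing: it reflects the fact that the remaining image sum lives in $\Z^3$, not $\Z^6$. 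The paper achieves this \emph{before} splitting off any $m$: after dropping $w_2\in\Lambda_\ell$ by positivity, the sum over $w_1\in\Lambda_\ell$ together with the sum over the first reflection index $m_1\in\Z^3$ exactly tiles $\Z^3$ once, so the $(m_1,w_1)$-sum becomes the free $\Z^3$ sum that produces the Fourier delta. This reduces the whole expression, image sum included, to
\[
(3+\tfrac F2)\sum_{m\in\Z^3}(-\Delta_{\Z^3}+F)^{-1}(x_1,x_{2,m}) - \tfrac12\delta_{x_1,x_2},
\]
and the $m\neq 0$ estimate is then a one-line application of the $3$D decay bound $(-\Delta_{\Z^3}+F)^{-1}(0,x)\le\frac{2}{\pi\|x\|_\infty}(1+\sqrt F)^{-\|x\|_\infty}$ together with $\|m\|_\infty\ge\|m\|_1/3$, which is exactly where the cube and the $\ell/3$ exponent come from. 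Your decomposition into ``$m=0$ in $\Z^6$ extended'' plus ``$m\neq 0$ in $\Z^6$ bounded crudely'' forgoes this reduction and would require a separate, and considerably more involved, argument to recover a bound of the form $O(d^{-1})\cdot(\text{bounded bracket})$ --- a gap you yourself flag (``pinning down the sharp constants \dots is the delicate part''). Incidentally, the distance from $z$ to its nearest nonzero image is $\ge d$, not $\ge 2d$.
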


The last term on the right side of (\ref{neweq}) is due to the finite size of $\Lambda_\ell$. It would be absent in infinite volume, in which case we could set $F=0$. 
It will be very important to note that 
\begin{equation}
3 C_4 - \frac 12 \approx 0.2582 < \frac 12 \leq S\,.
\end{equation}
It implies that, for our choice of $F\sim \ell^{-2}$ below, the expression on the right side of (\ref{neweq}) is strictly less than $S$ for large enough $d$. 

It remains to derive a bound on $P_n(z,w)$. The central limit theorem implies that, for large $n$, $P_n(z,w)$ behaves like $(3/(\pi n))^{3} e^{-3 \|z-w\|_2^2/n}$. In fact, we have the following explicit bound.

\begin{lemma}\label{lem:pn}
Let $b_0 \approx 1.942 $ denote the unique solution of 
\begin{equation}
\frac{6 b^2}{(\sinh b)^2} = b\,.
\end{equation}
Then 
\begin{equation}
P_n(z,w) \leq  \left( \frac{3\pi}{n} \right)^3 \exp\left(- b_0 \frac { \|z-w\|_2^2 }{2  n}    \right) \,.
\end{equation}
\end{lemma}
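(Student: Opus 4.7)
The plan is to combine a Fourier representation with a complex contour shift (Cram\'er-type tilt), and then bound the resulting oscillatory integral by a Gaussian. The equation $\sinh^2 b_0 = 6 b_0$ is the matching condition needed to make these two steps produce the claimed rate.

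Writing $\hat p(k) = \tfrac{1}{6}\sum_{j=1}^6 \cos k_j$ for the characteristic function of a single step, I would start from
\[
P_n(0,v) \,=\, \int_{[-\pi,\pi]^6} \hat p(k)^n \, e^{ik\cdot v}\,\frac{dk}{(2\pi)^6}\,,
\]
and shift the contour by $i\mu$ with $\mu := (b_0/n)\,v$; analyticity and $2\pi$-periodicity of $\hat p$ make this lawful and extract the tilt factor $e^{-b_0|v|^2/n}$. Since $\cos(k_j + i\mu_j) = \cos k_j \cosh\mu_j - i \sin k_j \sinh\mu_j$, a Cauchy--Schwarz inequality applied to the six-term sum gives the pointwise estimate $|\hat p(k+i\mu)|^2 \leq M - \tfrac{1}{6}\sum_j \sin^2 k_j$, where $M := \tfrac{1}{6}\sum_j \cosh^2 \mu_j$. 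Combining with $1-x \leq e^{-x}$ and raising to the $n$-th power,
\[
|\hat p(k+i\mu)|^n \,\leq\, M^{n/2}\exp\!\Bigl(-\frac{n}{12 M}\sum_{j=1}^6 \sin^2 k_j\Bigr)\,.
\]
Using $\sin^2 k \geq 4 k^2/\pi^2$ on $[-\pi/2, \pi/2]$ and the reflection symmetry $\sin^2(\pi - k) = \sin^2 k$, each one-dimensional integral is bounded by $\sqrt{3\pi M/n}$, so in six dimensions I obtain
\[
P_n(0, v) \,\leq\, \Bigl(\frac{3\pi}{n}\Bigr)^3 M^{n/2+3}\, e^{-b_0 |v|^2/n}\,.
\]

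The last step is to exploit the defining equation for $b_0$. Since $\sinh(x)/x$ is increasing in $|x|$, for $|x| \leq b_0$ one has $\sinh^2(x) \leq (x/b_0)^2 \sinh^2(b_0)$, and the constraint $|\mu_j| = b_0 |v_j|/n \leq b_0$ is automatic because $|v_j| \leq n$ whenever $P_n(0, v) > 0$. Using $\sinh^2(b_0) = 6 b_0$,
\[
M \,\leq\, 1 + \frac{\sinh^2 b_0}{6}\cdot\frac{|v|^2}{n^2} \,=\, 1 + \frac{b_0 |v|^2}{n^2}\,,
\]
so $\tfrac{n}{2}\ln M \leq b_0|v|^2/(2n)$, and the factor $M^{n/2}$ converts the tilt-induced exponent $-b_0|v|^2/n$ into $-b_0|v|^2/(2n)$. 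The residual $M^3$ prefactor is uniformly bounded (by $(1+b_0)^3$) and is absorbed into the prefactor together with the slack left in the one-dimensional Gaussian estimate. The main obstacle I foresee is the clean matching of constants: the identity $\sinh^2 b_0 = 6 b_0$ is precisely the condition that, combined with the inverse variance $n/(12 M)$ produced by the $k$-integration, makes half of the tilt exponent cancel against the $M^{n/2}$ factor and leaves the other half as the advertised Gaussian decay rate $b_0/2$.
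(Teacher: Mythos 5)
Your overall strategy --- contour shift by $i\mu_j$ with $\mu_j = b_0(z^j-w^j)/n$, Cauchy--Schwarz on the six-term average, Jordan's inequality $\sin^2 k \geq 4k^2/\pi^2$, and using $\sinh^2 x \leq (x/b_0)^2\sinh^2 b_0 = 6x^2/b_0$ on $|x|\leq b_0$ together with the defining equation for $b_0$ --- is precisely the paper's. However, the specific organization of the intermediate estimates introduces a genuine gap, and as written the argument does not give the stated prefactor $(3\pi/n)^3$.

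The problem is the step where you factor $M^{n/2}$ out of $|\hat p(k+i\mu)|^n$ as a multiplicative constant and only then exponentiate the remaining $\bigl(1-\tfrac{1}{6M}\sum_j\sin^2 k_j\bigr)^{n/2}$. After Jordan's inequality your one-dimensional integral becomes $\int_{-\pi/2}^{\pi/2}\exp\bigl(-nk^2/(3\pi^2 M)\bigr)\,dk$, and its Gaussian bound carries a factor of $\sqrt{M}$. In six dimensions this yields the prefactor $(3\pi M/n)^3 = M^3(3\pi/n)^3$, so you end up with
\begin{equation*}
P_n(z,w) \leq \Bigl(\frac{3\pi}{n}\Bigr)^3 M^{n/2+3}\, e^{-b_0\|z-w\|_2^2/n}\,.
\end{equation*}
You correctly show $M^{n/2}\leq e^{b_0\|z-w\|_2^2/(2n)}$, halving the tilt exponent, but the residual $M^3$ is a real loss: $M$ can be as large as $1+b_0\approx 2.94$ (e.g.\ when $|z^1-w^1|=n$ and $z^j=w^j$ for $j\geq 2$), giving $M^3\approx 25$. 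The claim that this is ``absorbed into the prefactor together with the slack left in the one-dimensional Gaussian estimate'' is not a valid step and in fact fails: for such $(z,w)$ with $n$ large, $M$ is bounded away from $1$ while the width $n/(3\pi^2 M)\to\infty$, so the truncated integral over $[-\pi/2,\pi/2]$ is asymptotically equal to the full-line Gaussian integral --- there is no slack left to eat a constant factor of~$25$.

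The fix, and what the paper actually does, is to keep the $\cosh^2\mu$ contribution \emph{in the exponent} rather than extract it as a multiplicative constant. Write $\tfrac16\sum_j|\cos(k_j+i\mu_j)|^2\leq 1+(M-1)-\tfrac16\sum_j\sin^2 k_j$ and apply $(1+X)^{1/2}\leq e^{X/2}$ to the entire expression before taking the $n$-th power. The Gaussian integral then has $M$-independent width $n/(3\pi^2)$ and produces exactly $(3\pi/n)^3$; the factor $e^{n(M-1)/2}$ is controlled by $M-1=\tfrac16\sum_j\sinh^2\mu_j\leq b_0\|z-w\|_2^2/n^2$, giving $e^{b_0\|z-w\|_2^2/(2n)}$, and the bookkeeping closes with no stray constant.
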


The proof of Lemma~\ref{lem:pn} is straightforward and we shall give it in the appendix.
From Lemma~\ref{lem:pn} we have the bound
\begin{align}\nonumber
\sum_{w\in\Z^6} P_n(z,w) \sigma^R(w) & \leq \|\sigma\|_1  \sum_{k\in \Z^6} \sup_{w\in \Lambda_\ell\times\Lambda_\ell} P_n(z,w-k\ell) \\ \nonumber
 & \leq  \|\sigma\|_1 \left( \frac{3\pi}{n} \right)^3 \sum_{k\in \Z^6} \sup_{z,w\in \Lambda_\ell\times\Lambda_\ell}  \exp\left(- b_0 \frac { \|z-w + k\ell\|_2^2 }{2  n}    \right) \\ & = \|\sigma\|_1 \left( \frac{3\pi}{n} \right)^3 \left( \sum_{k\in \Z} \sup_{z,w\in \{0,1,\dots,\ell-1\}}  \exp\left(- b_0 \frac { |z-w+k\ell|^2 }{2  n}\right) \right)^6\,,
\end{align}
where $\|\sigma\|_1 = \sum_{w\in \Lambda_\ell\times\Lambda_\ell} \sigma(w)$. 
We can bound the last exponential by $1$ for $|k|\leq 1$, and by $\exp(-b_0 \ell^2 (|k|-1)^2 /(2n))$ for $|k|\geq 2$. This gives
\begin{align}\nonumber
\sum_{w\in\Z^6} P_n(z,w) \sigma^R(w) & \leq  \|\sigma\|_1 \left( \frac{3\pi}{n} \right)^3 \left( 3 + 2 \sum_{m\geq 1}   \exp\left(- b_0 \frac { m^2 \ell^2 }{2  n}\right) \right)^6 \\  \nonumber &  \leq  \|\sigma\|_1 \left( \frac{3\pi}{n} \right)^3 \left( 3 + 2 \int_0^\infty  \exp\left(- b_0 \frac { m^2 \ell^2 }{2  n}\right) dm \right)^6 \\ & =  \|\sigma\|_1 \left( \frac{3\pi}{n} \right)^3 \left( 3 + \sqrt{\frac{2\pi n}{b_0\ell^2}} \right)^6 \label{ins2}\,.
\end{align}

If we insert the bounds obtained in (\ref{ins1}), (\ref{neweq}) and (\ref{ins2}) into (\ref{insert}), we  obtain
\begin{align}\nonumber
\sigma(z) &\leq (1-E/(6S))^{-n}  \|\sigma\|_1 \left( \frac{3\pi}{n} \right)^3 \left( 3 + \sqrt{\frac{2\pi n}{b_0\ell^2}} \right)^6  \\ & \quad  + \left(\frac{1+F/6}{1-E/(6S)}\right)^{n}  \frac 1{ S} \| \sigma\|_\infty 
 \left( 3 C_4 -\frac 12 + \frac {C_4} 2 F +  \frac{6 + F}{\pi d }    \left[  \frac { 2 (1+\sqrt{F})^{\ell/3}}{1- (1+\sqrt{F})^{-\ell/3} } \right]^3\right) \label{54}
\end{align}
for all $z=(x_1,x_2) \in \Lambda_\ell\times\Lambda_\ell$  a distance $d$ away from its complement. The bound holds for all $n\geq 1$ and all $F>0$. 

We shall simply choose $F=\ell^{-2}$ and, recalling that $E/S\ge C\ell^{-2}$, we fix  
\begin{equation}
n = \lfloor \epsilon  S E^{-1}  \rfloor
\end{equation}
with $\epsilon$ small enough such that
\begin{equation}
\left(\frac{1+F/6}{1-E/(6S)}\right)^{n} \leq \frac{1-\delta}{6 C_4 -1 + C_4 \ell^{-2}}
\end{equation}
for some $\delta>0$ and all small  enough $E/S$. Since $6C_4-1 \approx 0.516 < 1$, this condition can be satisfied for small enough (but strictly positive) $\delta$.
The resulting bound is then
\begin{equation}\label{pp}
\sigma(z) \leq C S^{-3} E^{3}\|\sigma\|_1 + \frac 1 {2S} \left(1-\delta + C d^{-1} \right) \|\sigma\|_\infty \,.
\end{equation}

For $S$ large enough, the coefficient in front of the last term in (\ref{pp}) is smaller than $1$ for all $d\geq 1$, hence we obtain the desired result directly from (\ref{pp}) in this case, taking the supremum over $z$ on the left. For smaller $S$, we need an additional argument, which is provided in the next and final step.

\subsection{Step 4} The following lemma implies that $\sigma$ is very flat near its maximum. In particular, the maximal value of $\sigma$ in the smaller cube a distance $d$ away from the boundary of $\Lambda_\ell\times\Lambda_\ell$ is close to its global maximum. We shall deduce this property from the differential inequality (\ref{diff:sigma}). 

\begin{lemma}\label{lem:sig}
 Assume that $\sigma$ satisfies (\ref{diff:sigma}), and let 
$z_0 \in \Lambda_\ell\times \Lambda_\ell$ be such that $\sigma(z_0)=\|\sigma\|_\infty$. Then, for $S\geq 1$, 
\begin{equation}\label{s1b}
\min_{z: \|z-z_0\|_1=n} \sigma(z) \geq \|\sigma\|_\infty \left(1 - \frac {2 E}{11 S} \left(\frac{12}{1-\frac 1{2S}}\right)^{n} \right)
\end{equation}
for any $n\geq 1$. For $S=1/2$ we have the bound 
\begin{equation}\label{s12b}
\min_{z: d(z,z_0)=n} \sigma(z) \geq \|\sigma\|_\infty \left(1 -  \frac {4 E}{11} (12)^{n} \right)
\end{equation}
instead, where $d(z,w)$ denotes the distance on the graph $\Lambda_\ell\times\Lambda_\ell \setminus \{ (x,x) : x\in \Lambda_\ell\}$.
\end{lemma}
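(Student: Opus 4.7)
The plan is to turn the differential inequality (\ref{diff:sigma}) into a multiplicative recursive bound on $g(z) := \|\sigma\|_\infty - \sigma(z) \geq 0$, which vanishes at the maximum point $z_0$, and then iterate it along paths emanating from $z_0$.

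Set $M := \|\sigma\|_\infty$ and denote by $\omega_{z,z'} \in \{0\}\cup[1-1/(2S),1]$ the weights $(1-\delta_{\cdot,\cdot}/(2S))$ appearing in (\ref{diff:sigma}). Since $\sigma(z)-\sigma(z') = g(z')-g(z)$, the inequality reads
\[
\tfrac{2E}{S}\,\sigma(z) \;\geq\; \sum_{z'\sim z}\omega_{z,z'}\bigl(g(z')-g(z)\bigr),
\]
where $\sim$ denotes nearest-neighbors in either coordinate of $\Lambda_\ell\times\Lambda_\ell$. Fixing a single neighbor $z^*$, isolating its contribution, and discarding the non-negative terms $\omega_{z,z'}g(z')\geq 0$ for $z'\neq z^*$, I obtain
\[
\omega_{z,z^*}\,g(z^*) \;\leq\; \tfrac{2E}{S}\,M + g(z)\Bigl(\sum_{z'\sim z}\omega_{z,z'} - \tfrac{2E}{S}\Bigr).
\]
Using $\sum_{z'\sim z}\omega_{z,z'}\leq 12$ and $\omega_{z,z^*}\geq 1-1/(2S)$, this yields the core estimate
\[
g(z^*) \;\leq\; 12\,c\,g(z) + \alpha M,\qquad c := \frac{1}{1-1/(2S)},\quad \alpha := \frac{2Ec}{S}.
\]

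Define $\gamma_n := \max\{g(z):\|z-z_0\|_1\leq n\}$. Since $g(z_0)=0$, iterating the core estimate along shortest paths gives the recursion $\gamma_{n+1}\leq 12c\,\gamma_n + \alpha M$ with $\gamma_0=0$, whose explicit solution is $\gamma_n \leq \alpha M\bigl((12c)^n-1\bigr)/(12c-1)$. A short calculation shows that
\[
\frac{\alpha}{12c-1} \;=\; \frac{2E/S}{12-1/c} \;\leq\; \frac{2E}{11S} \qquad (c\geq 1),
\]
which together with $c = 1/(1-1/(2S))$ proves (\ref{s1b}). For $S=1/2$ the weight $\omega_{z,z'}$ vanishes whenever the edge touches the diagonal $\{(x,x)\}$, so the propagation lives on the subgraph $\Lambda_\ell\times\Lambda_\ell\setminus\{(x,x):x\in\Lambda_\ell\}$; on this subgraph every surviving weight equals $1$, so effectively $c=1$ and $\alpha=4E$, yielding (\ref{s12b}) with the graph-distance specified in the statement.

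The places requiring care are minor bookkeeping: near the boundary of $\Lambda_\ell$ (or near the diagonal for $S\geq 1$) the number of non-zero weights is smaller than $12$, but since I only use $W\leq 12$ this only strengthens the bound; in the $S=1/2$ case one must verify that a shortest off-diagonal path from $z_0$ to any off-diagonal $z$ exists, which is built into the definition of the distance in (\ref{s12b}); and the discarded term $-(2E/S)g(z)\leq 0$ only weakens the inequality. I expect no serious obstacle beyond these checks, since the mechanism $g(z_0)=0$ combined with a linear-growth recursion is very robust.
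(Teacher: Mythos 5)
Your proof is correct and follows essentially the same route as the paper's: apply the differential inequality (\ref{diff:sigma}) at a point $z$, isolate the term coming from a neighbor $z^*$ one step farther from $z_0$, bound the remaining terms crudely by $\|\sigma\|_\infty$ using $\sum\omega\leq 12$ and $\omega_{z,z^*}\geq 1-1/(2S)$, and iterate the resulting affine recursion with zero initial datum to get geometric growth with prefactor $\frac{2E}{11S}$. The only cosmetic difference is that you track $g(z)=\|\sigma\|_\infty-\sigma(z)$ and the ball-maximum $\gamma_n$, while the paper tracks $\nu_n = \|\sigma\|_\infty^{-1}\min_{\|z-z_0\|_1=n}\sigma(z)$ directly; the recursion, the optimization of the constants, and the separate treatment of $S=1/2$ on the off-diagonal subgraph are all identical.
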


\begin{proof}
Let us first consider the case $S\geq 1$. 
Let 
\begin{equation}
\nu_n = \|\sigma\|_\infty^{-1} \min_{z: \|z-z_0\|_1=n} \sigma(z) \,,
\end{equation}
and choose $z_n$ with $\|z_n-z_0\|_1=n$. Let us define the degree of the vertex $z\in \Lambda_\ell\times \Lambda_\ell$ as 
\begin{equation}
d_{z} = \sum_{w: |w-z|=1} \left( 1- \frac{ \delta_{w_1,w_2}}{2S}\right)\,.
\end{equation}
The inequality (\ref{diff:sigma}) can be written as 
\begin{equation}
\frac {2 E}S \sigma(z) \geq  d_z \sigma(z) - \sum_{w: |w-z|=1} \sigma(w) \left( 1- \frac{ \delta_{w_1,w_2}}{2S}\right)\,.
\end{equation}
Hence we have, for $z=z_n$, 
\begin{equation}
\frac {2 E}S \|\sigma\|_\infty \geq \frac {2 E}S \sigma(z_n) \geq d_{z_n} \sigma(z_n) - (d_{z_n}-\lambda) \|\sigma\|_\infty - \lambda \sigma(z_{n+1})\,,
\end{equation}
where $z_{n+1}$ is a neighbor of $z_n$ such that $\|z_{n+1}-z_0\|_1=n+1$, and $\lambda$ is either $1$ or $(1-1/(2S))$. Equivalently, 
\begin{equation}
\sigma(z_{n+1}) \geq \frac 1 \lambda d_{z_n}  \sigma(z_n) - \frac 1 \lambda (d_{z_n}+ 2 E S^{-1}-\lambda) \|\sigma\|_\infty\,.
\end{equation}
Note that $d_z \leq 12$ for any $z$. The right side above is decreasing in $d_{z_n}$ and increasing in $\lambda$, hence we can replace $d_{z_n}$ by $12$ and $\lambda$ by $(1-1/(2S))$ for a lower bound. Moreover, we pick $z_n$ and $z_{n+1}$ in such a way that $\sigma(z_{n+1}) = \|\sigma\|_\infty \nu_{n+1}$. 
We thus conclude that
\begin{equation}
\nu_{n+1}\geq 1  - \frac { 12 (1 - \nu_n) + 2 E S^{-1}} { 1 - \frac 1{2S} } \,.
\end{equation}
By induction, one easily sees that this implies that 
\begin{equation}
\nu_n \geq 1 - \frac {2 E}S \frac { \left( \frac{12}{1-\frac 1{2S}}\right)^n -1}{11 + \frac 1{2S}} \geq 1 -   \frac {2E}{11S}  \left( \frac {12}{1-\frac 1{2S}} \right)^{n}\,.
\end{equation}
This proves the bound (\ref{s1b}) in the case $S\geq 1$. The proof of (\ref{s12b}) works analogously.
\end{proof}

Lemma~\ref{lem:sig} implies that
\begin{equation}
\sup \{ \sigma(z) \, : \, \dist(z,(\Lambda_\ell\times\Lambda_\ell)^c) \geq d\} \geq \|\sigma\|_\infty \left( 1 - \frac {2E}{11S} \left( \frac {12}{1-\frac 1{2S}} \right)^{d-1} \right)
\end{equation}
for $S\geq 1$. Similarly, we can bound for $S=1/2$
\begin{equation}
\sup \{ \sigma(z) \, : \, \dist(z,(\Lambda_\ell\times\Lambda_\ell)^c) \geq d\} \geq \|\sigma\|_\infty \left( 1 - \frac {4E}{11} (12)^{d+1} \right)\,,
\end{equation}
noting that because of the hard-core constraint $z_1\neq z_2$ it may take up to two more steps to go from a point $w$ to a point $z$. In both cases,
\begin{equation}
\sup \{ \sigma(z) \, : \, \dist(z,(\Lambda_\ell\times\Lambda_\ell)^c) \geq d\} \geq \|\sigma\|_\infty \left( 1 -  E S^{-1} C^d  \right)
\end{equation}
for a constant $C>1$. We plug this into (\ref{pp}), taking the maximum over all $z$ a distance $d$ away from the boundary on the left side. This gives
\begin{equation}\label{pp2}
\|\sigma\|_\infty \left( 1 -  \frac {E C^d}S   - \frac 1{ 2 S} \left(1-\delta + C d^{-1} \right) \right) \leq CS^{-3} E^{3} \|\sigma\|_1  \,,
\end{equation}
and this bound now holds for all $d$. We choose $d$ large enough such that $1- \frac 1{2 S}(1-\delta +C d^{-1}) \geq \delta/2$, and thus obtain, for small enough $E/S$, 
\begin{equation}
\|\sigma\|_\infty \leq CS^{-3}E^{3} \|\sigma\|_1  \,.
\end{equation}
Since $\rho(z)\leq \sigma(z) \leq \rho(z) (1-1/(2S))^{-1}$ for $S\geq 1$, and $\rho(z)=\sigma(z)$ for $S=1/2$, this implies (\ref{thm:eq:rho}). 

\bigskip

\noindent
{\bf Acknowledgments.} The research leading to these results has received funding from the European Research Council under the European Union's Seventh Framework Programme ERC Starting Grant CoMBoS (grant agreement 
n$^o$ 239694).

\appendix
\section{Proofs of Auxiliary Lemmas}

\begin{proof}[Proof of Lemma~\ref{lem:new}]

From the property (\ref{prop:refl}) and translation-invariance and parity of the Laplacian, the expression in (\ref{f1})  equals
\begin{equation}\label{f2}
(\ref{f1}) = \sum_{m\in\Z^6} \sum_{w\in \Lambda_\ell \times \Lambda_\ell }  (-\Delta_{\Z^6}+ 2F)^{-1}(z_m,w) \chi(w)
\end{equation}
for $z\in \Lambda_\ell\times\Lambda_\ell$. Because of $\chi$, the sum is restricted to $w=(w_1,w_2)\in \Lambda_\ell\times\Lambda_\ell$ with $|w_1-w_2|=1$. Since the resolvent of the Laplacian has a positive kernel, we can drop the condition that $w_2\in \Lambda_\ell$ for an upper bound. This gives
\begin{equation}\label{f3}
\eqref{f2} \leq \sum_{m\in\Z^6} \sum_{e\in\Z^3 : |e|=1} \sum_{x \in \Lambda_\ell }  (-\Delta_{\Z^6}+ 2F)^{-1}(z_m,(x,x+e)) \,.
\end{equation}
The resolvent of the Laplacian can be conveniently written in terms of its Fourier transform as 
\begin{equation}
 (-\Delta_{\Z^6}+ 2F)^{-1}(z,w) = \frac 1{(2\pi)^{6}} \int_{[-\pi,\pi]^6} \frac {e^{i p_1 \cdot (x_1-y_1)+ i p_2 \cdot (x_2-y_2)}} { \epsilon(p_1) + \epsilon(p_2) +2 F }  dp_1\, dp_2\,,
\end{equation}
where $z=(x_1,x_2)$, $w=(y_1,y_2)$ and $\epsilon(p) = 6 -\sum_{e\in\Z^3: |e|=1} e^{i p \cdot e}$ denotes the dispersion relation of the Laplacian on $\Z^3$. Hence  
\begin{align}\nonumber 
\eqref{f3} & = \sum_{m_1\in\Z^3} \sum_{m_2\in \Z^3} \sum_{e\in\Z^3 : |e|=1} \sum_{x\in \Lambda_\ell}  \frac 1{(2\pi)^{6}} \int_{[-\pi,\pi]^6} \frac {e^{i p_1 \cdot (x_{1,m_1}- x ) + i p_2 \cdot (x_{2,m_2} - x - e)}} { \epsilon(p_1) + \epsilon(p_2)+2F }  dp_1\, dp_2 \\   &  = \sum_{m_1\in\Z^3} \sum_{m_2\in \Z^3}  \sum_{x\in \Lambda_\ell}  \frac 1{(2\pi)^{6}} \int_{[-\pi,\pi]^6} \frac {e^{i p_1\cdot (x_{1,m_1} - x )+i p_2 \cdot (x_{2,m_2} - x)}} { \epsilon(p_1) + \epsilon(p_2) +2F } \left(6 - \epsilon(p_2) \right) dp_1\, dp_2  \,. \label{f4}
\end{align}
With the aid of the identity (\ref{prop:refl}), we can rewrite the last expression as 
\begin{align}\nonumber 
\eqref{f4} & = \sum_{m_1\in\Z^3} \sum_{m_2\in \Z^3}  \sum_{x\in \Lambda_\ell} \frac 1{ (2\pi)^{6}} \int_{[-\pi,\pi]^6} \frac {e^{i p_1\cdot (x_{1} - x_{m_1} )+i p_2\cdot (x_{2,m_2} - x_{m_1})} } { \epsilon(p_1) + \epsilon(p_2) +2 F} \left(6 - \epsilon(p_2) \right)dp_1\, dp_2\\ &  =  \frac 12\sum_{m_2\in \Z^3}  \frac 1  {(2\pi)^{3}} \int_{[-\pi,\pi]^3} \frac {6 - \epsilon(p)} { \epsilon(p) + F} e^{i p\cdot (x_{1} - x_{2,m_2})} dp \,. \label{f5a}
\end{align}
For $(x_1,x_2)\in \Lambda_\ell\times \Lambda_\ell$, this further equals
\begin{equation}\label{f5}
\eqref{f5a} = (3+F/2) \sum_{m\in\Z^3} \left( -\Delta_{\Z^3} + F \right)^{-1}(x_1,x_{2,m}) - \frac 12 \delta_{x_1,x_2} \,.
\end{equation}
At this point, we need some properties of the resolvent of the Laplacian on $\Z^3$, which we collect in the following lemma. Its proof will be given at the end of the proof of Lemma~\ref{lem:new}

\begin{lemma}\label{lem:gre}
For $F\geq 0$, the function $\Z^3 \ni x \mapsto \left( -\Delta_{\Z^3} + F \right)^{-1} (0,x)$ is positive and decreasing in the components $x^j$ for $x^j$ positive, and increasing otherwise. We have the bounds  
\begin{equation}\label{value00}
\left( -\Delta_{\Z^3} + F \right)^{-1} (0,0)\leq \left( -\Delta_{\Z^3} \right)^{-1}(0,0) = \frac{\sqrt{3}-1}{192 \pi^3} \Gamma^2(\tfrac 1{24}) \Gamma^2(\tfrac {11}{24})  \approx 0.2527 
\end{equation}
and 
\begin{equation}\label{value01}
\left( -\Delta_{\Z^3} + F \right)^{-1} (0,x)\leq \left( -\Delta_{\Z^3} \right)^{-1}(0,e)  = \frac{\sqrt{3}-1}{192 \pi^3} \Gamma^2(\tfrac 1{24}) \Gamma^2(\tfrac {11}{24})  - \frac 16 \approx 0.0861
\end{equation}
for $x \neq 0$ and $|e|=1$. Moreover, for $x\neq  0$ and $\|x\|_\infty = \max_{1\leq j \leq 3} |x^j|$, 
\begin{equation}\label{exp:decay}
\left( -\Delta_{\Z^3} + F \right)^{-1} (0,x)\leq  \frac 2 {\pi \|x\|_\infty} \left( 1  + \sqrt{F} \right)^{-\|x\|_\infty} \,.
\end{equation}
\end{lemma}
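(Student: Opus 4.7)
All parts of the lemma will be derived from the heat-kernel representation
\begin{equation*}
G(0,x) := (-\Delta_{\Z^3}+F)^{-1}(0,x) = \int_0^\infty e^{-tF}\prod_{j=1}^3 e^{-2t}I_{|x^j|}(2t)\,dt,
\end{equation*}
valid because the lattice heat kernel on $\Z^3$ factors and the 1D heat kernel equals $e^{-2t}I_{|n|}(2t)$. The first three assertions are routine. Positivity is immediate from $I_n(t)>0$ for $t>0$. Monotonicity of $G(0,x)$ in each $|x^j|$ reduces to the classical fact that $n\mapsto I_n(t)$ is strictly decreasing on $\N$ --- a consequence of the recursion $I_{n-1}(t)-I_{n+1}(t)=(2n/t)I_n(t)$ together with $I_n(t)\to 0$ as $n\to\infty$ --- and monotonicity in $F$ is evident from the representation. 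The value $G(0,0)$ at $F=0$ is the classical Watson integral for the simple cubic lattice, whose closed form \eqref{value00} has been evaluated in the literature and will just be cited. For the nearest-neighbor value, the identity $(6+F)G(0,0)-\sum_{|e|=1}G(0,e)=1$ combined with cubic symmetry yields $G(0,e)=((6+F)G(0,0)-1)/6$, and monotonicity in $F$ then reduces \eqref{value01} to the case $F=0$; the component-wise monotonicity extends this to $G(0,x)\leq G(0,e)$ for every $x\neq 0$.

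The main content is the exponential-decay estimate \eqref{exp:decay}. My plan is to integrate out one direction using the exact 1D resolvent formula
\begin{equation*}
\frac{1}{2\pi}\int_{-\pi}^{\pi}\frac{e^{ipn}}{F'+2(1-\cos p)}\,dp=\frac{r(F')^{-|n|}}{\sqrt{F'^2+4F'}},\qquad r(F'):=1+\tfrac{F'}{2}+\sqrt{F'+\tfrac{F'^2}{4}},
\end{equation*}
obtained by a contour integral around the unit circle: as a rational function of $z=e^{ip}$, the integrand has roots $r(F')^{\pm 1}$ of $z^2-(2+F')z+1=0$, with only $1/r(F')$ inside. By the monotonicity already established I may assume that the coordinate of maximal modulus is $x^1$, so $|x^1|=\|x\|_\infty$. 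Applying the 1D formula with $F'=F+\epsilon(p^2)+\epsilon(p^3)$ and taking absolute values gives
\begin{equation*}
G(0,x)\leq \frac{1}{(2\pi)^2}\int_{[-\pi,\pi]^2}\frac{r(F')^{-\|x\|_\infty}}{\sqrt{F'^2+4F'}}\,dp^2\,dp^3,\qquad F'=F+\epsilon(p^2)+\epsilon(p^3).
\end{equation*}

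The remaining task is to bound this 2D integral, and this is where the technical work lies. Three elementary ingredients will suffice: (i) $2(1-\cos p)\geq (2p/\pi)^2$ for $|p|\leq\pi$, so that $F'\geq F+(2/\pi)^2(|p^2|^2+|p^3|^2)$; (ii) the integrand is monotonically decreasing in $F'$, hence upper-bounded by its value at $F'=F+(2\rho/\pi)^2$ in polar coordinates $\rho^2=|p^2|^2+|p^3|^2$; and (iii) the inequality $r(F)\geq 1+\sqrt{F}$, verified directly from $r(F)-1-\sqrt{F}=F/2+\sqrt{F}(\sqrt{1+F/4}-1)\geq 0$. After the polar change of variables the successive substitutions $u=(2\rho/\pi)^2$ and $v=r(F+u)$ collapse the integral remarkably: from $F+u=(v-1)^2/v$ one computes $\sqrt{(F+u)^2+4(F+u)}=(v^2-1)/v$ and $du=(v^2-1)/v^2\,dv$, so that $du/\sqrt{(F+u)^2+4(F+u)}$ simplifies to $dv/v$ and the integrand reduces to $v^{-\|x\|_\infty-1}\,dv$. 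Integration from $r(F)$ to $\infty$ gives $r(F)^{-\|x\|_\infty}/\|x\|_\infty$, and tracking the accumulated prefactors produces the bound $G(0,x)\leq \pi r(F)^{-\|x\|_\infty}/(16\|x\|_\infty)$. Since $\pi/16<2/\pi$ and $r(F)^{-\|x\|_\infty}\leq (1+\sqrt{F})^{-\|x\|_\infty}$, this establishes \eqref{exp:decay} with comfortable slack in the constant.
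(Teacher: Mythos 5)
Your heat-kernel setup and the values at $0$ and $e$ are essentially as in the paper (the paper cites Gradshteyn--Ryzhik for both closed forms, whereas you derive the nearest-neighbour value from the resolvent identity $(6+F)G(0,0)-\sum_{|e|=1}G(0,e)=1$ --- a nicer, more self-contained route that also correctly reduces to $F=0$ via monotonicity of the heat-kernel integrand in $F$). One small caveat on the monotonicity claim: the parenthetical you give for ``$n\mapsto I_n(t)$ strictly decreasing'' does not quite constitute a proof. The recursion $I_{n-1}-I_{n+1}=(2n/t)I_n>0$ only forces $c_{n-1}+c_n>0$ for $c_n:=I_n-I_{n+1}$, and together with $c_n\to 0$ this is compatible, a priori, with $c_n$ alternating in sign; in particular $I_0>I_1$ is not an immediate consequence. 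The paper instead derives from the recursions the ODE identity $R_n=\tfrac nt I_n+\tfrac{n+1}t I_{n+1}-R_n'$ for $R_n=I_n-I_{n+1}$, whence $R_n'>0$ at any zero of $R_n$, and combines this with small-$t$ positivity to rule out a first crossing. Since the fact is classical you may well cite it, but you should not present the recursion-plus-decay remark as the reason it holds.

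The exponential-decay estimate --- the real content of the lemma --- is correct and takes a genuinely different route. The paper stays in the heat-kernel representation, bounds two of the three Bessel factors by $I_0(2t)\le 2e^{2t}/\sqrt{2\pi t}$, and evaluates the remaining one-dimensional Laplace-transform of $t^{-1}e^{-2t}I_n(2t)$ via a tabulated integral (Gradshteyn--Ryzhik~6.623(3)) to get $\tfrac{2}{\pi n}\bigl(1+F/2+\sqrt{F(1+F/4)}\bigr)^{-n}$. You instead work in Fourier space, do a residue computation to integrate out the direction of $\|x\|_\infty$ exactly, then bound the remaining two-dimensional integral by lowering $F'$ to $F+(2\rho/\pi)^2$, passing to polar coordinates on all of $\R^2$, and using the substitution $v=r(F+u)$ which collapses $du/\sqrt{(F+u)^2+4(F+u)}$ to $dv/v$. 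The two proofs use the same inner function $r(F)=1+F/2+\sqrt{F+F^2/4}$ and the same final relaxation $r(F)\ge 1+\sqrt F$; your constant $\pi/16\approx 0.196$ is somewhat better than the paper's $2/\pi\approx 0.637$, with either comfortably under the $2/\pi$ quoted in the statement. Your approach buys freedom from special-function tables and makes the origin of $r(F)$ (the root of the characteristic quadratic of the 1D resolvent) transparent; the paper's approach stays uniformly in the heat-kernel picture it has already set up for \eqref{value00}--\eqref{value01}.
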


With $C_4$ defined in (\ref{def:c4}), Lemma~\ref{lem:gre} implies that 
\begin{equation}
 (3+F/2) \left( -\Delta_{\Z^3} + F \right)^{-1}(x_1,x_2) - \frac 12 \delta_{x_1,x_2}  \leq C_4 \left(3 +F/2\right) -\frac 12 \,.
\end{equation}
Moreover, if $x_1\in\Lambda_\ell$ is at least a distance $d$ from the complement of $\Lambda_\ell$, we can use (\ref{exp:decay}) to bound the contribution of $m\neq 0$ to the sum in (\ref{f5}). Since $\|x_1 -x_{2,m}\|_\infty \geq d + (\|m\|_\infty-1) \ell$ in this case, this gives 
\begin{align}\nonumber 
\eqref{f5} & \leq  -\frac 12 +  (3 + F/2) \left[ C_4 +  \sum_{m\neq 0} \frac 2 {\pi \|x_1 -x_{2,m}\|_\infty} \left( 1  + \sqrt{F} \right)^{-\|x_1-x_{2,m}\|_\infty} \right] \\ & \leq -\frac 12  +  (3 + F/2) \left[ C_4 +  \frac 2 {\pi d } \sum_{m\neq 0}  \left( 1  + \sqrt{F} \right)^{- (\|m\|_\infty-1)\ell} \right] \,. \label{f6}
\end{align}
Using $\|m\|_\infty \geq \|m\|_1/3$ in the last sum, we obtain the desired bound (\ref{neweq}), with $d$ the distance of $x_1$ to the complement of $\Lambda_\ell$. This distance is greater or equal to the distance of $z=(x_1,x_2)$ to the complement of $\Lambda_\ell\times\Lambda_\ell$, hence the proof is complete.
\end{proof}

\begin{proof}[Proof of Lemma~\ref{lem:gre}]
The resolvent of the Laplacian on $\Z^3$ can be expressed via the heat kernel  as (see, e.g., \cite{CY})
\begin{equation}
\left( -\Delta_{\Z^3} + F\right)^{-1}(x,y) = \int_{0}^\infty  e^{-6t} I_{|x^1-y^1|}(2t) I_{|x^2-y^2|}(2t) I_{|x^3-y^3|}(2t) e^{-F t}  dt 
\end{equation}
for $F\geq 0$, with $I_n$ denoting the modified Bessel functions, which are positive and increasing on the positive real axis. (For a definition, see \cite{AS} or Eq.~(\ref{def:is}) below.) The monotonicity property of the resolvent then follows directly from the fact that $I_n(t) \geq I_{n+1}(t)$ for all $t\in\R$ and $ n \in \N$.  To see this last property, note that the recursion relations \cite[9.6.26]{AS} imply that $R_n(t) = I_{n}(t)-I_{n+1}(t)$ satisfies
\begin{equation}
R_n(t)= \frac n t I_n(t) + \frac{n+1}{t} I_{n+1}(t) - R_n'(t) \,.
\end{equation}
This further implies that $R_n'$ is positive whenever $R_n$ is zero. Since $R_n$ is positive for small argument, as can be seen from the asymptotic expansion \cite[9.6.10]{AS}, for instance, this is impossible. Hence $R_n$ is positive. 

The values of the integrals corresponding to (\ref{value00}) and (\ref{value01}) can be found in \cite[6.612(6)]{GR}. Finally, to obtain the bound (\ref{exp:decay}), we start with the integral representation \cite[9.6.18]{AS}
\begin{equation}\label{def:is}
I_n(t) = \frac{ (t/2)^n}{\sqrt\pi \Gamma(n+1/2)} \int_{-1}^1 (1-s^2)^{n-1/2} e^{-st}ds\,.
\end{equation}
It implies that 
\begin{align}\nonumber
I_0(t) &= \frac{ 1}{\pi}  \int_{-1}^1 (1-s^2)^{-1/2} e^{-st}ds 
=  \frac{ e^t}{\pi}  \int_{0}^1 \frac{ e^{-st}}{\sqrt{s}} \frac {1+e^{-2(1-s)t}}{\sqrt{2-s}} ds \\ & 
\leq \frac{ 2e^t}{\pi}  \int_{0}^1 \frac{ e^{-st}}{\sqrt{s}} ds   \leq  \frac{ 2e^t}{\pi}  \int_{0}^\infty \frac{ e^{-st}}{\sqrt{s}}ds = \frac{ 2e^t}{\sqrt{\pi t}}\,.  \label{i0e} 
\end{align}
Hence, with $n=\|x-y\|_\infty$, we further have 
\begin{align}\nonumber 
\left( -\Delta_{\Z^3} + F\right)^{-1}(x,y) & \leq  \int_{0}^\infty  e^{-6t} I_{n}(2t) I_{0}(2t)^2 e^{-Ft}  dt \\ \nonumber
&  \leq  \frac 2 \pi \int_{0}^\infty \frac 1t  e^{-2t} I_{n}(2t)   e^{-Ft}  dt \\ & = \frac 2 {\pi n} \left( 1 + F/2 + \sqrt{F(1+F/4)} \right)^{-n} \leq \frac 2 {\pi n} \left( 1  + \sqrt{F} \right)^{-n} \,,
\end{align}
where we used \cite[6.623(3)]{GR} to compute the integral.
\end{proof}

\begin{proof}[Proof of Lemma~\ref{lem:pn}]
We start with the integral representation
\begin{equation}
P_n(z,w) = \frac 1{ (2\pi)^{6}} \int_{[-\pi,\pi]^6}  \left( \frac 16 \sum_{j=1}^6 \cos(q_j) \right)^n e^{i \sum_j q_j (z^j-w^j)} dq_1 \cdots dq_6\,.
\end{equation}
The integrand  is a Laurent polynomial in the $e^{iq_j}$, and hence the integral does not change if the $q_j$ are replaced by $q_j+i a_j$ for any $a_j\in \C$. We shall choose $a_j\in \R$, and bound
\begin{align}\nonumber
P_n(z,w) & = \frac 1{(2\pi)^{6}} \int_{[-\pi,\pi]^6}  \left( \frac 16 \sum_{j=1}^6 \cos(q_j+i a_j) \right)^n e^{i \sum_j (q_j+i a_j) (z^j-w^j)} dq_1 \cdots dq_6 \\ \nonumber  & \leq \frac 1 {(2\pi)^{6}} \int_{[-\pi,\pi]^6}  \left( \frac 16 \sum_{j=1}^6 \left| \cos(q_j+i a_j) \right| \right)^n e^{- \sum_j a_j (z^j-w^j)} dq_1 \cdots dq_6 \\ & = \pi^{-6} e^{- \sum_j a_j (z^j-w^j)} \int_{[-\pi/2,\pi/2]^6}  \left( \frac 16 \sum_{j=1}^6 \left| \cos(q_j+i a_j) \right| \right)^n dq_1 \cdots dq_6\,. \label{plug}
\end{align}
We have
\begin{equation}
|\cos(q+ia)|^2 = (\sinh a)^2 + (\cos q)^2 \leq \left( \frac {a \sinh b}{b} \right)^2 + 1 - (2q/\pi)^2
\end{equation}
for $|a|\leq b$ and  $|q|\leq \pi/2$. In particular,
\begin{align}\nonumber 
\frac 16 \sum_{j=1}^6 \left| \cos(q_j+i a_j) \right| & \leq \left( \frac 16 \sum_{j=1}^6 \left| \cos(q_j+i a_j) \right|^2 \right)^{1/2} \\ \nonumber
 & \leq \left(  1 - \frac 2{3\pi^2} \sum_{j=1}^6 q_j^2 + \frac{(\sinh b)^2}{6 b^2} \sum_{j=1}^6 a_j^2  \right)^{1/2} \\ & \leq \exp\left( - \frac 1{3\pi^2} \sum_{j=1}^6 q_j^2 + \frac{(\sinh b)^2}{12 b^2} \sum_{j=1}^6 a_j^2\right)\,.
\end{align}
Plugging this bound into (\ref{plug}), we obtain
\begin{align}\nonumber
P_n(z,w) & \leq \exp\left(- \sum_{j=1}^6 a_j (z^j-w^j) + n \frac{(\sinh b)^2}{12 b^2} \sum_{j=1}^6 a_j^2\right)  \left( \frac 1 \pi \int_{[-\pi/2,\pi/2]} e^{ - n q^2/(3\pi^2)} 
 dq \right)^6 \\ &\leq \left( \frac{3\pi}{n} \right)^3 \exp\left(- \sum_j a_j (z^j-w^j) + n \frac{(\sinh b)^2}{12 b^2} \sum_{j=1}^6 a_j^2\right) \,.
\end{align}

To minimize the right side, we choose 
\begin{equation}
a_j = \frac { (z^j-w^j)}{n} \frac{6 b^2}{(\sinh b)^2}\,.
\end{equation}
Keeping in mind that $|a_j|\leq b$ is required for all $j$, we see that if  
\begin{equation}
\frac{6 b^2}{(\sinh b)^2} \leq b
\end{equation}
then we obtain the bound 
\begin{equation}
P_n(z,w) \leq  \left( \frac{3\pi}{n} \right)^3 \exp\left(- \frac { \|z-w\|_2^2 }{ n} \frac{3 b^2}{(\sinh b)^2}   \right) 
\end{equation}
for all $z$ and $w$ with $\|z-w\|_\infty\leq n$. But $P_n(z,w)=0$ for $\|z-w\|_\infty > n$, hence this establishes the desired bound for all values of $z\in \Z^6$ and $w\in \Z^6$.
\end{proof}

\section{Quasi long-range order}\label{appB}

Here we prove \eqref{2.or}. Let us preliminarily observe that taking the expectation of \eqref{indi2} in $\langle\cdot\rangle_\beta$ we immediately get \eqref{2.or} with the factor $9/8$ replaced by 2. To improve the factor, let 
\begin{equation}f_n=\sup_{\|x-y\|_1=n}\langle S^2-\vec S_x\cdot\vec S_y\rangle_\beta\;.\end{equation}
Note that $f_1=e(S,\beta)$. Using \eqref{ineq 1}, for $n>1$,
\begin{equation} f_n\le 2(f_{n-j}+f_j)\;,\qquad 1\le j<n.\end{equation}
We pick $j=\lfloor n/2\rfloor$, so that in particular
\begin{equation} f_n\le \begin{cases} 4f_{n/2}\;,& {\rm if}\ n\ {\rm is}\ {\rm even}\\
2(f_{\frac{n-1}2}+f_{\frac{n+1}2})\;,& {\rm if}\ n\ {\rm is}\ {\rm odd}\;.\end{cases}\label{b.it}\end{equation}
We claim that this implies that $f_n\le \tfrac98 n^2 f_1$. To see this, observe that the solution $g_n$ to the iteration defined by \eqref{b.it} with  equality sign and initial datum $f_1=1$ is
\begin{equation} g_n=n^2+2^k m-m^2\;,\qquad {\rm if}\ n=2^k+m\quad {\rm with}\quad 0\le m< 2^k\;.\end{equation}
Hence $f_n\le g_n f_1$ for all $n$. Given $n=2^k+m$ with $0\le m<2^k$, one has $g_n/n^2=(3x+1)/(x+1)^2$ with $x=m 2^{-k}$. Maximizing over $x$ gives $g_n\le \tfrac98 n^2$,
from which \eqref{2.or} follows.

\end{document}